\let\NAT@parse\undefined
\newtheorem{thm}{Theorem}
\newtheorem*{thm*}{} 
\newtheorem{lem}[thm]{Lemma}
\newtheorem*{lem*}{Lemma}
\newtheorem{prop}[thm]{Proposition}
\theoremstyle{definition}
\newtheorem{dfn}[thm]{Definition}
\theoremstyle{remark}
\newtheorem*{rem*}{Remark}
\newcommand{\mathd}{\mathrm{d}}
\newcommand{\mathe}{\mathrm{e}}
\newcommand{\EXP}{\ensuremath{\mathbb{E}}}
\newcommand{\IND}{\mathbb{I}}
\newcommand{\KT}{\textsc{kt}}
\newcommand{\RV}{\textsc{rv}}
\newcommand{\Np}{\ensuremath{\mathbb{N}_+}}
\newcommand{\N}{\ensuremath{\mathbb{N}}}
\newcommand{\cX}{\ensuremath{\mathcal{X}}}
\newcommand{\cP}{\ensuremath{\mathcal{P}}}
\newcommand{\bR}{{\operatorname{\sf R}}}
\newcommand{\cC}{\ensuremath{\mathcal{C}}}
\newcommand{\RED}{\ensuremath{\overline{R}}}
\newcommand{\REG}{\ensuremath{\widehat{R}}}
\newcommand{\PROB}{\ensuremath{\mathbb{P}}}
\newcommand{\cM}{\ensuremath{\mathcal{M}}}
\begin{document}
\title{Pattern Coding Meets Censoring: (almost)  
Adaptive Coding on Countable Alphabets}

\author{
\IEEEauthorblockN{Anna Ben-Hamou\IEEEauthorrefmark{1}\thanks{\IEEEauthorrefmark{1} Universit\'e Paris Diderot, Sorbonne Paris Cit\'e, CNRS UMR 7599 Laboratoire Probabilit\'es et Mod\`eles Al\'eatoires 75013 Paris} \and 
St\'ephane Boucheron\IEEEauthorrefmark{2}\thanks{\IEEEauthorrefmark{2} Universit\'e Paris Diderot, Sorbonne Paris Cit\'e, CNRS UMR 7599 Laboratoire Probabilit\'es et Mod\`eles Al\'eatoires  75013 Paris \& DMA Ecole Normale Sup\'erieure rue d'Ulm, 75005 Paris} \and Elisabeth Gassiat\IEEEauthorrefmark{3}\thanks{\IEEEauthorrefmark{3} Laboratoire de Math\'ematiques d'Orsay, Univ. Paris-Sud, CNRS,  Universit\'e Paris-Saclay 91405 Orsay}
}
}
\maketitle

\date{\today}
\maketitle

\begin{abstract}

Adaptive coding faces the following problem: given a collection of source classes such that each class in the collection has non-trivial minimax redundancy rate, can we design a single code which is asymptotically minimax over each class in the collection? In particular, adaptive coding makes sense when there is no universal code on the union of classes in the collection. In this paper, we deal with classes of sources over an infinite alphabet, that are characterized by a  dominating envelope. We provide asymptotic equivalents for the redundancy of envelope classes enjoying a regular variation property. We finally construct a computationally efficient  online prefix code, which interleaves the encoding of the so-called pattern of the message and the encoding of the dictionary of discovered symbols. This code is shown to be adaptive, within a $\log\log n$ factor, over the collection of regularly varying envelope classes. The code is both simpler and less redundant  than previously described contenders. In contrast with previous attempts, it also  covers the full range of slowly varying envelope classes. 
\end{abstract}

{\bf{Keywords}:} countable alphabets; redundancy; adaptive compression; regular variation ;
patterns ; minimax; 

\section{Introduction}


Lossless {coding} consists of mapping in a one-to-one way finite sequences of symbols $x_{1:n}=x_1,...,x_n$ (also called messages) from  a finite or countably infinite {alphabet} $\cX$ to binary sequences so as to minimize the expected length of codewords. The mapping (or code) is not only assumed to be one-to-one but also  uniquely decodable (any concatenation of codewords can not be parsed into codewords in more than one way). Sequences are assumed to be generated by a {stationary and memoryless source}, defined as a product probability measure $P^n$, where $P\in\cM_1(\cX)$ is a probability measure on the alphabet $\cX$. 

Throughout the paper, we identify  uniquely decodable codes with probability distributions (thanks to the Kraft-McMillan inequality) and the notation  
 $Q_n\in\cM_1(\cX^n)$ is used to denote  \emph{coding distributions}. 
 Under this identification, the length of the codeword assigned to a sequence $x_{1:n}$ is at most $\lceil -\log Q_n(x_{1:n})\rceil$ (here and throughout the paper, $\log$ denotes the base-$2$ logarithm while $\ln$ denotes the natural logarithm) \citep{cover:thomas:1991}. The \emph{expected redundancy} of the coding distribution $Q_n$ (the expected number of additional bits beyond the entropy used by $Q_n$ to encode a sequence generated by the source)  then corresponds to the Kullback-Leibler divergence (or relative entropy) between $P^n$ and $Q_n$: $
D(P^n, Q_n) :=\sum_{x_{1:n}\in \cX^n} P^n(x_{1:n}) \log \frac{P^n(x_{1:n})}{Q_n(x_{1:n})} \, . $
In decision theoretic language, redundancy is also called cumulative entropy risk \citep{MR1604481}. 

When facing a source class $\cC$ with a common alphabet $\cX$ rather than a single source, the \emph{universal coding} problem consists of finding codes that 
perform well over the entire source class $\cC$. For a given class $\cC$ of  sources on $\cX$, we define $\cC^n=\left\{P^n : P\in\cC\right\}$, the class of product distributions induced by $\cC$ on $\cX^n$. 
The performance of a code (a coding distribution $Q_n$ over $\cX^n$) with respect to a source class is quantified by the \emph{maximal expected redundancy} defined as
$$
 \RED(Q_n, \cC^n)=\sup_{P\in\cC} D(P^n, Q_n) \, .
$$
The infimum of $\RED(Q_n,\cC^n)$ over all $Q_n$, is called the \emph{minimax redundancy} of $\cC^n$:
$$
 \RED(\cC^n)=\inf_{Q_n\in {\cM}_1\left(\cX^n\right)} \RED(Q_n,\cC^n)\, .
$$ 
Minimax redundancy quantifies the hardness of universal coding with respect to $\cC^n$. A source class $\cC$  is said to have a non-trivial \emph{minimax redundancy rate} if $\RED(\cC^n)= o(n)$. In the language of mathematical statistics, 
universal coding is the information theoretical counterpart of density estimation under cumulative entropy risk \citep[See][for a scholarly presentation of this correspondence]{MR1604481}.  

There are situations where universal coding is achievable and where minimax redundancy rates are precisely known. The simplest and most important setting consists of stationary memoryless sources over finite alphabets ($|\cX| =k$) where  
$$
\RED(\cC^n) = \frac{k-1}{2} \log \frac{n}{2 \pi \mathe} + O_k(1) \, ,
$$ 
as demonstrated in a series of paper that culminate with
 \citep{barron:clarke:1990,xie:barron:1997,xie:barron:2000,barron:clarke:1994}. (The notation $O_k(1) $ indicates that the upper bound may depend on $k$). Moreover it is known that in this setting optimal coding distributions
 are mixtures of product distributions. Dirichlet distributions with parameters $1/2, \ldots, 1/2$ over the $k-1$ dimensional simplex define the so-called Krichevsky-Trofimov coding distributions which form an important building block in our construction \citep[See][for details]{gassiat2014codage}.

This suggests that, while  redundancy grows only logarithmically in the message length, it grows rapidly with the alphabet size. Recent 
results provide us with a refined picture when alphabet size grows with message length \citep{SzpWei10,2015arXiv150408070F,YanBar13,SaAnKaSz14}.

If the considered  collection of sources is too large, minimax redundancy may turn out to be trivial (scaling linearly with message length). 
In other settings, the source class may be the union of smaller classes with widely differing minimax redundancy rates 
(for example sources defined by finite context trees over a finite alphabet have redundancy rates that depend on the shape of the context tree). 
 
 \emph{Adaptive coding} then considers an appropriate,  more general approach. Assume that the excessively large collection of sources is the union of smaller subclasses and  that, for each subclass, the minimax redundancy rate is non trivial and  a good universal coder is available.  Is it then possible to engineer a single coding method that performs well over all subclasses in the collection? 

Let $(\cC(\alpha))_{\alpha \in A}$ be a collection of source classes. A sequence $(Q_n)_{n\geq 1}$ of coding probabilities is said to be \emph{asymptotically adaptive} with respect to  $(\cC(\alpha))_{\alpha\in A}$ if for all $\alpha\in A$
\begin{equation} \label{eq:adaptive}
\RED(Q_n, \cC(\alpha)^n)= \sup_{P\in\cC(\alpha)} D(P^n,Q_n) \leq (1+o_\alpha(1))\RED(\cC(\alpha)^n)
\end{equation}
as $n$ tends to infinity. If the inequality \eqref{eq:adaptive} holds with a factor other than $(1+o_\alpha(1))$ (that may depend on the source class $\cC(\alpha)$) larger than $1$ to the right, then we say that there is adaptivity \emph{within} this factor. Context-Tree-Weighting \citep{Wil98} provides an example of an adaptive code with respect to sources with bounded or unbounded memory over finite alphabets \citep{catoni:2004,Gar06}. 

 The notion of adaptivity comes from mathematical statistics \citep{MR2013911}, and adaptive coding is much related to competitive estimation \citep{Sur16}. It is also known as twice-universal coding \citep{Rya1984} or hierarchical universal coding \citep{MeFe98}.

When facing countably infinite alphabets, adaptive coding is  a natural problem: when the alphabet is infinite, a theorem due to Kieffer \citep{MR514346} entails that there does not exist a code $Q_n$ such that for all distribution $P$ over $\cX$, $\frac{D(P^n,Q_n)}{n}\to 0$, that is, there is no universal code for the entire class of stationary memoryless sources over $\cX$ and for this class the minimax redundancy rate is trivial. 


This   result has not deterred investigations on coding against countable alphabets. 
There are indeed sensible motivations for investigating such a framework: \citeauthor{MR2097043} \citep{MR2097043} observe that applications of compression techniques  (text, image and so on) usually involve alphabets that are very large (words or even groups of words) and that  the traditional transformation of words into letters and then into bits hides useful  dependencies.  Kieffer's theorem  prompted at least two different approaches: 
 \begin{enumerate}[i)]
 \item \citet{MR2097043} separate the description of
strings over large alphabets into two parts: description of the
symbols appearing in the string, and of their pattern, the order
in which the symbols appear. They redefine the performance criterion by focusing on compressing the message's \emph{pattern}  \citep{MR2097043,MR2095850,garivier:2006,2015arXiv150408070F,shamir-2006};
\item investigating the redundancy on smaller source classes that satisfy Kieffer's condition. The so-called envelope classes investigated in \citep{boucheron:garivier:gassiat:2006} form an example of such classes \citep[See also][]{acharya2014poissonization,bontemps2012adaptiveb,boucheron2015adaptive}.  
  \end{enumerate} 

This paper pursues both lines of research: we deal with the collection of so-called envelope classes, but the adaptive code we introduce and investigate will turn out to be a pattern encoder. In contrast with \citep{MR2097043,MR2095850}, we handle simultaneously  dictionary and pattern encoding. 

\begin{dfn}[\textsc{envelope classes and envelope distributions}] \label{dfn:envelope:class}
Let $f$ be a non-increasing mapping from $\Np$ to $(0,1],$ with ${1< \sum_{j\in\Np} f(j)<\infty}$. The \emph{envelope class} $\cC(f)$ defined by the function $f$ is the collection of distributions which are dominated by $f$:
$
\cC(f)
 :=\big\{ P : \, \forall j\in \Np,\;p_j\leq
f(j)~\big\}\, .$ Define
$ \ell_f=\min\big\{\ell\geq 1,\, \sum_{j=\ell}^{+\infty}f(j)\leq 1\big\}. $
The associated \emph{envelope distribution} $F$ is defined as
$ F(k) :=  1 - \sum_{j>k} f(j)$ if $k+1\geq\ell_f$, and $F(k):=0$ otherwise. The \emph{envelope probabilities} $(f_j)_{j\geq 1}$ are  defined by $f_j=F(j)-F(j-1)$.
\end{dfn}

Envelope classes provide a framework where the search for adaptive coding strategies is feasible \citet{2015arXiv150408070F}.


The contribution of our paper is two-fold:
\begin{enumerate}[i)]
\item we use Poissonization arguments introduced by \citet{acharya2014poissonization} to derive tight bounds on the minimax redundancy of envelope classes. Karlin's framework (see Section \ref{subsec:karlin-forma}) then provides a tractable interpretation of those bounds, and allows us to readily obtain asymptotic equivalents when the envelope satisfies a so-called regular variation property. We thus fill the gap between previously proved upper and lower bounds for minimax redundancies of such classes;
\item we construct a simple coding scheme, which we call the Pattern Censoring Code (\textsc{pc} code). This code performs an online encoding or decoding of a sequence of symbols. It pertains to the family of censoring codes   described in \citep{boucheron:garivier:gassiat:2006,Bon11,bontemps2012adaptiveb,boucheron2015adaptive}:  in the \textsc{pc} code, first occurrences of symbols are censored, that is they are encoded using a general purpose encoder for integers (such as Elias encoding \citep{MR0373753}) and are implicitly inserted into a dictionary; symbols that have already been inserted into the dictionary are translated into their rank of insertion in the dictionary and then fed to a Krichevsky-Trofimov encoder that works on an alphabet that matches the current size of the dictionary. The Krichevsky-Trofimov encoder actually performs a variant of pattern coding as introduced in \citep{MR2097043}. We show that the \textsc{pc} code is adaptive, within a $\log\log n$ factor, over the collection of envelope classes whose decay is regularly varying. We thus get, with a single code, almost adaptivity over envelope classes for the whole range of possible regular variation.
\end{enumerate}

The paper is organized as follows. In Section \ref{sec:notat}, we lay down the notation and introduce Karlin's framework and regular variation. The main results are stated in Section \ref{sec:main-results}. In Section \ref{sec-minmaxred}, we 
provide a  sketch of the derivation of the redundancy bounds by combining the Poissonization techniques of \citet{acharya2014poissonization} and Karlin's framework. In particular, when the envelope $f$ has a so-called \emph{light tail} (a regular variation index equal to zero), we show that $\RED(\cC(f)^n)$ is asymptotically equivalent to 
\begin{eqnarray*}
\bR_f(n)&:=&\log(\mathe)\int_1^n\frac{\vec\nu_f(1/t)}{2t}\mathd t\, ,
\end{eqnarray*} 
where $\vec\nu_f(x)=|\{j\geq 1,\, f_j\geq x\}|$.
This characterization is a powerful generalization of the tight bounds that have been established for memoryless sources over finite alphabets. The latter can be regarded as envelope classes where $\vec\nu_f(x)=k$ for some $k$  and all small enough $x$. Indeed $\frac{k-1}{2}\log n$ scales like $\bR_f(n)$ with respect to both $k$  and $n$. The quantity $\bR_f(n)$ also provides an equivalent of the minimax redundancy for envelope classes defined by log-concave envelopes (such that $f(k)f(k+2)/f(k+1)^2\leq 1$ for all $k\geq 1$) that was characterized in \citep{bontemps2012adaptiveb}. Up to a constant factor, the minimax redundancy for envelope
classes defined by heavy-tailed envelopes (with positive regular variation indexes, as investigated in \cite{boucheron2015adaptive}) also scales like $\bR_f(n)$. This is the content of Theorem \ref{thm:red-asymp}. 

The bounds on minimax redundancy rates in Karlin's framework also suggest a universal coding strategy for each envelope class. In words, when encoding the $n^{\text{th}}$ symbol in the message, it seems sensible to handle symbols with probability larger than $1/n$ (frequent symbols) differently from symbols with probability  smaller than $1/n$ (rare symbols). The probability of frequent symbols can be  faithfully estimated while the probability of rare symbols can barely be estimated from the message. This is widely acknowledged in the blossoming literature concerning estimation of discrete probability distributions, see  \citep{valiant2011estimating}. The censoring code approach described in \citep{boucheron:garivier:gassiat:2006,Bon11,bontemps2012adaptiveb,boucheron2015adaptive} explores that kind of path. Here, we combine pattern coding and censoring so as to manufacture a simple encoder that achieves adaptivity 
within a $\log \log n$  factor with respect to all regularly varying envelope classes (except the extreme case of \emph{very heavy-tailed} envelopes, corresponding to an index equal to $1$, see Section \ref{subsec:reg-var-def}). The Pattern Censoring Code is described in Section \ref{sec:bhc-code}, and Section \ref{sec:code_analysis} provides a modular analysis of its redundancy. Proofs are gathered in Section \ref{sec:proofs}.


\section{Background and notation} 
\label{sec:notat}

From now on, the message alphabet $\cX$ is the set of positive integers $\N_+ := \N \setminus \{0\}$, and the source probabilities will be denoted by their probability mass functions $(p_j)_{j\geq 1}$. 

\subsection{Occupancy counts}
\label{subsec:occup-counts}

Define $N_n^j$ as the empirical count of symbol $j$, that is the number of times symbol $j$ occurs in a sample of size $n$. The sequence $(N_n^j)_{j\geq 1}$ is called the \emph{type} of the sample. Let $K_{n,r}$ denote the \emph{occupancy counts}, defined as the number of symbols that appear exactly $r$ times in a sample of size $n$:
$$
  K_{n,r}= \sum_{j \geq 1} \IND_{\{N_n^j=r\}}\,.
$$
The sequence $(K_{n,r})_{r\geq 1}$ is called the \emph{profile} of the sample. The occupancy counts combine to yield the total number of distinct symbols in the sample, denoted by $K_n$:
$$
  K_n=\sum_{j\geq 1} \IND_{\{N_n^j>0\}} = \sum_{r \geq 1} K_{n,r} \, .
$$
We will also encounter a quantity known as the \emph{missing mass}, which corresponds to the cumulated probability of the unseen symbols:
$$
M_{n,0}=\sum_{j \geq 1} p_j \IND_{\{N_n^j=0\}}\, .
$$
Those quantities play a central role in estimating various properties of the sampling distribution, especially in situations where the sample size is small compared to the alphabet size \citep{orlitsky2004modeling,valiant2011estimating,jiao2015minimax}. In light of the correspondence  between compression and probability estimation, they also are crucial to the analysis of the coding strategy we investigate here.

Henceforth,  $\EXP$ will  denote  expectation with respect to the source distribution $P$ or its $n$-fold product $P^n$, while $\EXP_f$ will denote the expectation with respect to the envelope distribution $(f_j)_{j\geq 1}$ or its $n$-fold product.

\subsection{Karlin's framework}
\label{subsec:karlin-forma}

In \citeyear{MR0216548}, S. Karlin introduced an elegant framework for investigating the asymptotic properties of infinite urn models. 
Sources over countable alphabets fit nicely into this framework. 
This framework has recently received attention in random combinatorics and stochastic processes theory \citep[See][for a survey]{GneHanPit07}. 	
Although the  framework had been  introduced  in order to perform a thorough exploration of asymptotic normality
of occupancy counts, it  has also 
proved convenient when assessing the tightness of non-asymptotic tail or moment bounds for occupancy counts as demonstrated in \citep{ben2014concentration}. 

A probability mass function $(p_j)_{j \geq 1}$ defines a  \emph{counting measure} $\nu$ defined by
$
\nu(\mathd x)=\sum_{j\geq 1} \delta_{p_j}(\mathd x)\, ,
$
where $\delta_p$ is the Dirac mass at $p$. Let the counting function $\vec\nu(\cdot)$ be the right tail of $\nu$, that is for all $x\in(0,1]$,
$$
\vec\nu(x)=\nu[x,\infty[= |\{j\geq 1,\, p_j\geq x\}|\, .
$$

Let us also define the measure $\nu_1$ by
$$
\nu_1(\mathd x)=\sum_{j\geq 1}p_j\delta_{p_j}(\mathd x)\, .
$$
For $x\in[0,1]$, $\nu_1[0,x]=\sum_{j\geq 1}p_j\IND_{p_j\leq x}$ is the cumulated probability of symbols with probability smaller than $x$. Expected occupancy counts and masses can be written simply as integrals against the measure $\nu$. For instance
\begin{eqnarray*}
\EXP K_n&=& \int_0^1 \left(1-(1-x)^n\right)\nu(\mathrm{d} x)\, .
\end{eqnarray*}
The sequence $(\EXP K_n)_n$ characterizes the measure $\nu$ (this follows easily from a Laplace transform argument).

\subsection{Regular variation and regularly varying envelopes}
\label{subsec:reg-var-def}

The envelope distribution considered in this paper will most often satisfy a so-called \emph{regular variation} property. A  fascinating treatment of regular variation can be found in the textbook \citep{BinGolTeu89}. 

\begin{dfn} A measurable function $g : ]a, \infty) \rightarrow ]0,\infty)$ (where $a>0$) is said to be regularly varying at infinity if for all $x>0$ \[
	\lim_{t \to \infty} \frac{g(tx)}{g(t)} = x^\alpha
\]  	
	for some $\alpha \in \mathbb{R}$. This is summarized by $g\in \textsc{rv}_ \alpha$.  If $\alpha=0$, $g$ is said to be slowly varying. Note that $g\in\RV_\alpha$ if and only if, for all $t>a$, $g(t)=t^\alpha\ell(t)$ where $\ell$ is slowly varying.
\end{dfn}

Useful results from regular variation theory are gathered in Appendix \ref{sec:reg-var-tools}.  

Following \citet{MR0216548} and \citet*{GneHanPit07}, source classes with regularly varying envelopes are defined as follows.

\begin{dfn}[\textsc{regularly varying source classes}]
The envelope class $\cC(f)$ is said to be regularly varying with index $\alpha\in[0,1]$ if the function $\vec\nu_f(1/\cdot)$ is regularly varying with index $\alpha$ (denoted $\vec\nu_f(1/\cdot)\in\RV_\alpha$).
\end{dfn}
Note that the property pertains to the envelope counting function. The counting functions associated with source distributions in the class may or may not satisfy the regular variation property.

If the envelope distribution is a so-called power-law distribution, that is if, for some  $C>0$ and $0 < \alpha <1$, $f(j) \sim  C j^{-1/\alpha}, j \to \infty,$ then $\vec\nu_f$ is  regularly varying with index $\alpha$ \citep[See][for details]{acharya2014poissonization}. 

The case $\alpha=0$ corresponds to light-tailed envelopes. It contains for instance frequencies proportional to $C \exp\left(-C' j^ \beta\right)$, for some $\beta>0$ (in particular, when $\beta=1$, the Geometric distribution), or for example frequencies corresponding to the integral part of a log-normal variable as treated in \citep{bontemps2012adaptiveb}.

Under the regular variation assumption, the asymptotic expected values of $K_n$, $K_{n,r}$ and $M_{n,0}$ are nicely related to the counting function $\vec\nu$ (see Appendix \ref{sec:infinite_urn_sch}). Appendix \ref{sec:connections_between_countin} relates properties of the counting function and properties of the quantile function of the sampling distribution.

If $f,g$ denote two functions from $\mathbb{R}_+$ to $\mathbb{R}_+$, notation $f \asymp g$ means that there exists some  constant $C\geq 1$ and $t_0 \in \mathbb{R}_+$ such that for $t>t_0$, 
$1/C g(t) \leq f(t) \leq C g(t)$.  


\section{Main results} \label{sec:main-results}


We first describe the minimax redundancy of envelope classes $\cC(f)^n$ in  Karlin's framework, that is using the counting function $\vec \nu$ and measure $\nu_1$. 
The first theorem follows from the Poissonization approach promoted in \citep{YanBar13} and 
\citep{acharya2014poissonization} \citep[See also][]{jacquet1998analytical}.  It  gathers upper and lower bounds on the minimax redundancy of envelope classes. Note that the bounds do not depend on any regular
variation assumption on the envelope function.  
\begin{thm}\label{prop:red-upper}
For any envelope function $f$, the minimax redundancy of $\cC(f)^n$ satisfies
\begin{enumerate}[i)]
\item 
\begin{eqnarray}\label{eq:red-upper}
\RED\left(\cC(f)^n\right)&\leq& \log(\mathe)\left(\int^{n}_1\frac{\vec\nu_f(1/t)}{2t}\mathd t + \vec\nu_f(1/n)+ n\nu_{1,f}[0,1/n]\right)+O(\ell_f\log n)\, .
\end{eqnarray}
\item  Let $m=n-n^{2/3}$. For any envelope function $f$, there exists a constant $c_f>0$ such, for large enough $n$,
\begin{eqnarray}\label{eq:red-lower}
\RED\left(\cC(f)^{n}\right)&\geq &\left( \log(\mathe)\int_1^m \frac{\vec\nu_f(1/t)}{2t}\mathd t -5\vec\nu_f(1/m)-1\right)\vee \left(\EXP_f[K_n]-c_f\right)\, .
\end{eqnarray}

\end{enumerate}\end{thm}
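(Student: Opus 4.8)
I would derive both bounds through the \emph{Poissonization} device of \citet{YanBar13} and \citet{acharya2014poissonization}: replacing the deterministic sample size by an independent $\mathrm{Poisson}(m)$ number of draws makes the occupancy counts $(N^j)_{j\ge1}$ \emph{independent}, with $N^j\sim\mathrm{Poisson}(mp_j)$. Since the type $(N^j)_j$ is sufficient for a memoryless source, the entropy risk of the Poissonized model splits (up to the description of the random length and up to the constraint $\sum_jp_j=1$) into a \emph{sum over symbols} of one-dimensional redundancies for the scalar Poisson families $\{\mathrm{Poisson}(\lambda):0\le\lambda\le mf_j\}$, each of which is $\tfrac12\log(mf_j)+O(1)$ when $f_j\ge 1/m$ and $O(mf_j)$ (a Bernoulli-type perturbation of $\delta_0$) when $f_j<1/m$. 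Karlin's bookkeeping then turns the leading sum into the announced integral: using $\log(mf_j)=\log(\mathe)\int_1^m t^{-1}\IND_{\{f_j\ge 1/t\}}\,\mathd t$ and summing over $j$,
\[
\tfrac12\sum_{j\,:\,f_j\ge 1/m}\log(mf_j)\;=\;\tfrac{\log(\mathe)}{2}\int_1^m\frac{1}{t}\sum_{j\ge1}\IND_{\{f_j\ge 1/t\}}\,\mathd t\;=\;\log(\mathe)\int_1^m\frac{\vec\nu_f(1/t)}{2t}\,\mathd t .
\]

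\textbf{Upper bound (i).} I would take $Q_n$ to be, after Poissonization at rate $m=n+n^{2/3}$, a product over $j$ of Jeffreys-type (Krichevsky-Trofimov) mixtures of the Poisson subfamilies, suitably renormalized into a valid coding distribution over $\cX^*$; its ``censoring'' incarnation --- whose online, prefix nature automatically respects $\sum_jp_j=1$ and fixes the length --- is the code of Section~\ref{sec:bhc-code}, and the reason the integral rather than $\vec\nu_f(1/n)\log n$ appears is that the effective alphabet of ``currently frequent'' symbols, of size $\vec\nu_f(1/t)$, grows with $t$ and contributes $\approx\log(\mathe)\vec\nu_f(1/t)/(2t)$ per step. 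Summing the per-symbol bounds above: the $O(1)$'s over the $\vec\nu_f(1/n)$ frequent symbols give the $\vec\nu_f(1/n)$ term, the $O(mf_j)$'s over the rare symbols give $O(n\,\nu_{1,f}[0,1/n])$, and the leading halves give the integral via the displayed identity. De-Poissonization (via $\mathrm{Poisson}(m)\ge n$ with overwhelming probability together with Poisson concentration), the description of the length, and the separate encoding of the at most $\ell_f-1$ ``always frequent'' symbols (cost $O(\log n)$ each) are all absorbed into $O(\ell_f\log n)$.

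\textbf{Lower bound (ii).} Here I would use the maximin (Bayes) bound $\RED(\cC(f)^n)\ge I_\pi(P;X_{1:n})$, valid for every prior $\pi$ carried by $\cC(f)$, with two choices that produce the two branches of the $\vee$. \emph{First branch:} center at the rescaled envelope, the distribution $P_0$ with $(P_0)_j=f(j)/\sum_kf(k)$ (which lies in $\cC(f)$ because $\sum_kf(k)>1$), and for each frequent symbol $j$ ($f_j\ge 1/m$, $m=n-n^{2/3}$) multiply $p_j$ by an independent random factor ranging over a fixed-width interval around $1$, small enough to stay in $\cC(f)$ --- the slack $\sum_kf(k)>1$ leaves room --- and balanced so $\sum_jp_j=1$; by conditional independence of the counts and the standard fact that a one-parameter family of $\mathrm{Poisson}$'s (or $\mathrm{Binomial}$'s) with mean of order $mf_j$ carries $\approx\tfrac12\log(mf_j)$ bits, $I_\pi(P;X_{1:m})\ge\tfrac12\sum_{j:f_j\ge1/m}\log(mf_j)-C\vec\nu_f(1/m)$, which is the first bracket, with the $-5\vec\nu_f(1/m)$ and $-1$ absorbing the per-symbol constants and the de-Poissonization. \emph{Second branch:} a prior that instead randomizes the \emph{presence} of symbols (again an envelope-respecting perturbation of $F$) makes every first occurrence informative, so $I_\pi(P;X_{1:n})\ge\EXP_f[K_n]-c_f$ --- the ``cost of learning the alphabet'', with $\EXP_f[K_n]=\int_0^1(1-(1-x)^n)\,\nu_f(\mathd x)$ --- which is the bound of \citet{boucheron:garivier:gassiat:2006} rewritten in Karlin's notation. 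The final lower bound is the larger of the two.

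\textbf{Main obstacle.} The delicate point is the coupling induced by $\sum_jp_j=1$: once we restrict to genuine probability vectors dominated by $f$, the Poissonized risk does \emph{not} literally factor over symbols (indeed $\sum_jf(j)>1$), so both the product coding distribution and the product prior have to be arranged so that the envelope slack $\sum_jf(j)>1$ absorbs the renormalization while preserving, symbol by symbol, the one-dimensional estimates. Getting this right \emph{uniformly in $f$} is exactly what forces the two-branch lower bound: for a very concentrated envelope $\vec\nu_f(1/n)$ can be of order $n$, the correction $-5\vec\nu_f(1/m)$ then swamps the integral, and only the $\EXP_f[K_n]$ branch survives --- which is precisely the regime in which Theorem~\ref{thm:red-asymp} weakens to an $\asymp$ statement.
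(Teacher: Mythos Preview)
Your Poissonization-plus-Karlin strategy is the right one, and your integration-by-parts identity is exactly what the paper uses. But the paper's execution differs from yours in two places that matter, because they sidestep precisely the ``main obstacle'' you flag.

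\textbf{Upper bound.} The paper does not build a coding distribution. It bounds redundancy by \emph{regret}, $\RED(\cC(f)^n)\le\REG(\cC(f)^n)$, then invokes Theorem~14 and Lemma~17 of \citet{acharya2014poissonization} verbatim: $\REG(\cC(f)^n)\le 1+\REG(\tau(\cC(f)^{\cP(n)}))$, the Poissonized type class sits inside the product $\prod_{j<\ell_f}\cP^\star(n)\times\prod_{j\ge\ell_f}\cP^\star(nf_j)$, and regret is additive over products. The per-coordinate regret bounds (at most $\log(\mathe)\lambda$ for $\lambda\le1$, at most $\log(\sqrt{2\lambda/\pi}+3/2)$ for $\lambda>1$) then give the three terms in \eqref{eq:red-upper} after your integration-by-parts step, the $\ell_f$ unconstrained coordinates contributing the $O(\ell_f\log n)$. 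Because one is \emph{enlarging} the class to a product, the constraint $\sum_jp_j=1$ simply vanishes; no de-Poissonization at $m=n+n^{2/3}$ is needed. Also, the \textsc{pc} code of Section~\ref{sec:bhc-code} plays no role in proving Theorem~\ref{prop:red-upper}; it is analyzed separately for Theorem~\ref{thm:main-compression}.

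\textbf{Lower bound, first branch.} Rather than building a product prior on the fixed-$n$ multinomial and then wrestling with normalization (your ``balanced so $\sum_jp_j=1$'' is where your sketch is incomplete), the paper first passes to the Poissonized redundancy via Proposition~\ref{prop:poisson-redundancy}, $\RED(\cC(f)^n)\ge\RED(\cC(f)^{\cP(m)})-o(1)$ with $m=n-n^{2/3}$, and then uses Lemma~\ref{lem:red-pois-sum}: $\RED(\cC(f)^{\cP(m)})\ge\sum_{j\ge\ell_f}\RED(\cP^\star(mf_j))$. The normalization obstacle disappears because, by definition of $\ell_f$, $\sum_{j\ge\ell_f}f_j\le1$, so the coordinates $j<\ell_f$ are free to absorb whatever mass remains; one can therefore vary $(p_j)_{j\ge\ell_f}$ independently in $\prod_{j\ge\ell_f}[0,f_j]$ while staying in $\cC(f)$. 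Each one-dimensional Poisson redundancy is then lower-bounded by a Bayesian computation with the \emph{uniform} prior on $[0,\lambda]$ (Lemma~\ref{lem:red-pois-class}: $\RED(\cP^\star(\lambda))\ge\tfrac12\log\lambda-5$), and your integral identity finishes. The second branch $\EXP_f[K_n]-c_f$ is quoted from \citet{boucheron2015adaptive}, not \citet{boucheron:garivier:gassiat:2006}.

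In short: your plan is sound, but to make the multiplicative-perturbation prior work you would have to carry out the balancing explicitly, and the cleanest way to do that is precisely the paper's device of reserving the first $\ell_f-1$ symbols as a slack bucket --- at which point you have reinvented Lemma~\ref{lem:red-pois-sum}.
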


Note that, as soon as the support of $(f_j)_{j\geq 1}$ is infinite, $\RED\left(\cC(f)^n\right)\gg\log n$. Hence, in our context, the term $O(\ell_f\log n)$ in \eqref{eq:red-upper} is asymptotically negligible with respect to the other terms.

For an envelope function $f$ defined by frequencies $(f_j)_{j\geq 1}$, with counting function $\vec \nu_f$, recall that $\bR_f(n)$ is defined as
\begin{eqnarray*}
\bR_f(n)&=&\log(\mathe)\int_1^n\frac{\vec\nu_f(1/t)}{2t}\mathd t\, .
\end{eqnarray*} 
Regular variation assumptions on the envelope allow us to compare the different terms  in \eqref{eq:red-upper} and \eqref{eq:red-lower} (see Theorems \ref{thm:karamata} and \ref{thm:asympt-occup} in Appendix \ref{sec:reg-var-tools} and \ref{sec:infinite_urn_sch}). 
Assuming that the counting function $\vec \nu_f$ satisfies a regular variation property, the sequence $\left(\bR_f(n)\right)_n$ is proportional to the redundancy rate of the envelope class $\cC(f)$. It will also turn out to characterize the maximum redundancy of the \textsc{pc} code over $\cC(f)$.

 When $\vec\nu_f(1/\cdot)\in\RV_0$,  Karamata's integration Theorem and Proposition 15 in \citep{GneHanPit07} imply \[
 	\bR_f(n) \gg \vec\nu_f(1/n) \underset{n \to \infty}\sim \EXP_f[K_n] \gg n\nu_{1,f}[0,1/n] \, . 
 \]
Hence if  $\vec\nu_f(1/\cdot)\in\RV_0$, Theorem \ref{prop:red-upper} entails $\RED(\cC(f)^n) \underset{n \to \infty}\sim \bR_f(n)$.

When $\vec\nu_f(1/\cdot)\in\RV_1$, the largest  term in the upper bound is now $n\vec\nu_{1,f}[0,1/n]$, which is of order $\EXP_f [K_n]\gg \vec\nu_f(1/n)$.

Last but not least, when $\vec\nu_f(1/\cdot)\in\RV_\alpha$ for $0<\alpha<1$, all three  terms in the right-hand-side of \eqref{eq:red-upper}  have the same order of magnitude $\EXP_f[K_n]\asymp \vec\nu_f(1/n)\asymp\bR_f(n)$. Hence,  the following theorem  
follows directly from Theorem \ref{prop:red-upper}.

\begin{thm}[\cite{acharya2014poissonization}]
\label{thm:red-asymp}
Let $\cC(f)$ be an envelope source class, with $f\in\RV_\alpha$ and $\alpha\in[0,1[$. Then
\begin{eqnarray*}
\RED(\cC(f)^n)&\asymp & \bR_f(n)\, .
\end{eqnarray*}
If $\alpha =0$, \[
\lim_{n \to \infty}	\frac{\bR_f(n)}{\RED(\cC(f)^n)}  =1  \, .
\]
If $\alpha=1$, 
\begin{eqnarray*}
\RED(\cC(f)^n)&\asymp & \EXP_f[K_n]\gg \bR_f(n)\, .
\end{eqnarray*}
\end{thm}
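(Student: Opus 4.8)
The plan is to derive the statement directly from the two-sided bound of Theorem~\ref{prop:red-upper}, by plugging in the regular variation asymptotics for the three quantities $\bR_f(n)$, $\vec\nu_f(1/n)$ and $n\nu_{1,f}[0,1/n]$ that appear on its right-hand sides, and then comparing their orders of magnitude in the three regimes $\alpha=0$, $0<\alpha<1$, $\alpha=1$. Two preliminary reductions make the bounds usable. First, the envelope distribution $(f_j)_{j\geq 1}$ always has infinite support, so $\RED(\cC(f)^n)\gg\log n$ and the additive term $O(\ell_f\log n)$ in \eqref{eq:red-upper} is negligible. Second, with $m=n-n^{2/3}$ one has $m/n\to1$, and since $\vec\nu_f(1/\cdot)\in\RV_\alpha$ and, by Karamata's integration theorem (Theorem~\ref{thm:karamata}), $\bR_f$ is regularly varying, the uniform convergence theorem for regularly varying functions gives $\vec\nu_f(1/m)\sim\vec\nu_f(1/n)$ and $\bR_f(m)\sim\bR_f(n)$; so replacing $n$ by $m$ in \eqref{eq:red-lower} only costs a $1+o(1)$ factor.

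Next I would record three comparisons, all consequences of Karamata's theorem and of the occupancy-count asymptotics of Appendices~\ref{sec:reg-var-tools} and \ref{sec:infinite_urn_sch} (Theorem~\ref{thm:asympt-occup} and Proposition~15 in \citep{GneHanPit07}). (a) For $\alpha>0$, $\bR_f(n)\asymp\vec\nu_f(1/n)$, in fact $\bR_f(n)\sim\frac{\log(\mathe)}{2\alpha}\vec\nu_f(1/n)$; for $\alpha=0$, $\bR_f(n)\gg\vec\nu_f(1/n)$, since the integral of a slowly varying function against $\mathd t/t$ has strictly larger order than the integrand. (b) For $\alpha\in[0,1)$, $\EXP_f[K_n]\sim\Gamma(1-\alpha)\,\vec\nu_f(1/n)\asymp\vec\nu_f(1/n)$; for $\alpha=1$, $\EXP_f[K_n]\gg\vec\nu_f(1/n)$. (c) Integrating by parts, $n\nu_{1,f}[0,1/n]=n\int_0^{1/n}\vec\nu_f(y)\,\mathd y-\vec\nu_f(1/n)$, and Karamata then gives $n\nu_{1,f}[0,1/n]\sim\frac{\alpha}{1-\alpha}\vec\nu_f(1/n)$ for $\alpha\in[0,1)$, so that $n\nu_{1,f}[0,1/n]=o(\vec\nu_f(1/n))$ when $\alpha=0$ and $n\nu_{1,f}[0,1/n]\asymp\vec\nu_f(1/n)$ when $0<\alpha<1$, whereas for $\alpha=1$ it is $\gg\vec\nu_f(1/n)$ and of order $\EXP_f[K_n]$.

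The three cases are then immediate. If $\alpha=0$, (a)--(c) show that $\bR_f(n)$ dominates the other two terms in \eqref{eq:red-upper}, so the upper bound is $(1+o(1))\bR_f(n)$; and the first branch of the maximum in \eqref{eq:red-lower}, namely $\bR_f(m)-5\vec\nu_f(1/m)-1$, is also $(1+o(1))\bR_f(n)$, giving $\RED(\cC(f)^n)\sim\bR_f(n)$. If $0<\alpha<1$, all three terms in \eqref{eq:red-upper} are $\asymp\vec\nu_f(1/n)\asymp\bR_f(n)$, so the upper bound is $O(\bR_f(n))$, while the second branch $\EXP_f[K_n]-c_f\asymp\bR_f(n)$ of the maximum in \eqref{eq:red-lower} gives the lower bound (I use this branch rather than the first, which need not be positive once $\log(\mathe)/(2\alpha)<5$); hence $\RED(\cC(f)^n)\asymp\bR_f(n)$. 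If $\alpha=1$, the dominant term of \eqref{eq:red-upper} is $\log(\mathe)\,n\nu_{1,f}[0,1/n]\asymp\EXP_f[K_n]$, the $\bR_f(n)$ and $\vec\nu_f(1/n)$ terms being of strictly smaller order, and the second branch $\EXP_f[K_n]-c_f\sim\EXP_f[K_n]$ of \eqref{eq:red-lower} supplies the matching lower bound, so $\RED(\cC(f)^n)\asymp\EXP_f[K_n]\gg\bR_f(n)$.

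The body of the argument is thus just bookkeeping on top of Theorem~\ref{prop:red-upper}. The only genuinely delicate ingredients are the Karamata estimates behind step (c) and the regime $\alpha=1$, where $\Gamma(1-\alpha)$ is infinite and all the comparisons between $\EXP_f[K_n]$, $n\nu_{1,f}[0,1/n]$ and $\vec\nu_f(1/n)$ must be carried out at the level of $\asymp$, through the slowly varying truncated-first-moment function, rather than through a clean asymptotic equivalent; Appendices~\ref{sec:reg-var-tools} and \ref{sec:infinite_urn_sch} are set up precisely to deliver these estimates.
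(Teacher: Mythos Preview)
Your proposal is correct and follows essentially the same route as the paper: the theorem is obtained directly from Theorem~\ref{prop:red-upper} by comparing the orders of magnitude of $\bR_f(n)$, $\vec\nu_f(1/n)$, $n\nu_{1,f}[0,1/n]$ and $\EXP_f[K_n]$ in the three regimes, using Karamata's integration theorem and the occupancy asymptotics of Theorem~\ref{thm:asympt-occup}. Your write-up is in fact more careful than the paper's sketch (you make explicit the passage from $m=n-n^{2/3}$ to $n$ via regular variation, and you correctly note that for $0<\alpha<1$ the usable branch of the lower bound \eqref{eq:red-lower} is $\EXP_f[K_n]-c_f$ rather than the first one, which may be negative).
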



The implementation of the Poissonization technique by \citeauthor{acharya2014poissonization}  answered a number of open questions raised in \citep{boucheron:garivier:gassiat:2006,bontemps2012adaptiveb}. Previously, the minimax redundancy rate for source classes defined by regularly varying envelopes was only known when  envelopes were discrete log-concave (which implies non-decreasing hazard rate), that is for a strict subset of classes defined by slowly varying envelopes. Before \citep{acharya2014poissonization}, the best known general upper bound on the redundancy of envelopes classes was stated using the tail envelope function and read as 
\begin{eqnarray*}\label{eq:upper-bound-2009}
\RED(\cC(f)^n)&\leq & \inf_{u\leq n} \left\{ n\overline{F}(u)\log\mathe +\frac{u-1}{2}\log n\right\} +2\, 
\end{eqnarray*} where $\overline{F}(u)=\sum_{j>u}f_j$  \citep{boucheron:garivier:gassiat:2006}. 

Optimizing over  $u$  leads to outline $u=\vec{\nu}_f(\tfrac{\ln n}{2n})$. The redundancy upper bound then translates into 
\begin{equation}
\label{eq:old:red:upper:bound}
 	\RED(\cC(f)^n) \leq \left(\frac{2n}{\ln n} \nu_{1,f} \left( 0, \tfrac{\ln n}{2 n}		\right) + \vec{\nu}_f\left(\frac{\ln n}{2 n}\right)\right) \frac{\log n}{2} \, . 
\end{equation} 

When $\vec \nu_f(1/t) \underset{t \to \infty}\sim t^ \alpha \ell(t)$ with $\ell \in \RV_{0}, \alpha \in (0,1)$, combining \eqref{eq:old:red:upper:bound} and Proposition 13 from \citep{GneHanPit07} reveals that the upper bound in \eqref{eq:old:red:upper:bound} is asymptotically equivalent to \[
	\frac{\log \mathe }{2^{1- \alpha} (1 -	\alpha)} 
	   \ell \left(\tfrac{2 n}{\ln n}\right)   n^ \alpha  ({\ln n})^{1- \alpha}\, . 
\]
Invoking Proposition 13 from \citep{GneHanPit07} and Karamata's integration Theorem show that the right-hand-side of \eqref{eq:red-upper} is asymptotically equivalent to \[
 \frac{\log(\mathe)(1+ \alpha)}{2 \alpha(1- \alpha)} \ell(n) n^\alpha
    \, . 	
\]
For envelopes with positive regular variation index $\alpha$, the redundancy upper bound from \citep{boucheron:garivier:gassiat:2006} is off by a factor $(\ln n)^{1- \alpha}$. 

For  envelopes that satisfy the discrete log-concavity condition (a special case of envelope function with non-decreasing hazard rate as treated in \citep{bontemps2012adaptiveb}), then by Lemma \ref{lem:nu2U}, the redundancy bounds derived in \citep{bontemps2012adaptiveb} and    
$\bR_f(n)$ are asymptotically equivalent. 

An interesting benefit from the Poissonization method concerns slowly varying envelopes that do not satisfy the discrete log-concavity condition. For slowly varying envelopes, the upper bound in \eqref{eq:upper-bound-2009} is asymptotically equivalent to $ \vec{\nu}_f\left(\frac{\ln n}{2 n}\right) \frac{\log n}{2}.$  If we focus on $\vec\nu_f(1/t)= \lfloor \exp\left( (\ln t)^\beta\right)\rfloor $ for $\beta \in (0,1), t>1$, then, integration by part shows that 
 $ \bR_f(n) \leq \vec \nu_f(1/n) (\ln (n))^{1- \beta}/ \beta.$ Under this setting, for $\beta \in (0,1/2)$, by Theorem  from \citep{BinGolTeu89}, $\vec{\nu}_f\left(\frac{\ln n}{2 n}\right)\underset{n \to \infty}\sim \vec \nu_f(1/n)$. This is enough to conclude that the redundancy upper bound derived from \eqref{eq:upper-bound-2009} and the bound derived from \eqref{eq:red-upper} are not of the same order of magnitude:  $\bR_f(n) =  o\big(  \vec{\nu}_f\left(\frac{\ln n}{2 n}\right) \frac{\log n}{2}\big).$ 

We now turn to adaptivity and  characterize the performance of the \textsc{pc} Code $(Q_n)_n$ defined in Section \ref{sec:bhc-code} on regularly varying envelope classes $\cC(f)$, with index $\alpha\in[0,1[$. 

A regular variation assumption on the envelope distribution is enough to characterize the redundancy rate of the source class (Theorem \ref{thm:red-asymp}). Underneath this rewarding message lies a more ambivalent finding: the redundancy rate breaks down into three components whose relative importance depends on the unknown regular variation index (Theorem \ref{prop:red-upper}). 

Experience with different attempts at designing adaptive codes  suggests that an adaptive code should be able to sort rare symbols from frequent symbols (with probability mass larger than the reciprocal of the current message length) and to work with an effective alphabet of size not larger than $\vec{\nu}_f(1/n)$. The \textsc{pc} code attempts to do this in a simple and efficient manner:    
any recurring symbol is deemed to be frequent; any symbol occurring for the first time is deemed to be rare. In this way, the size of the alphabet used by the mixture coder  coincides with the number of distinct symbols in the portion of the message encoded  so far. The latter quantity is well concentrated around its mean value and is bounded from above by the mean value under the envelope distribution. No special attempt is made to encode  the dictionary in a way that adapts to a regular variation index since the source distribution may not  satisfy any regular variation assumption. 

Despite its simplicity the \textsc{pc} code achieves adaptivity within a $\log \log n $ factor over all regularly varying envelope classes.

\begin{thm}\label{thm:main-compression}
Let $(Q_n)_n$ be the sequence of coding distributions associated with the Pattern Censoring code. For all $\alpha\in[0,1[$, for every envelope function $f$ with $\vec\nu_f(1/\cdot)\in\RV_\alpha$, there exists constants $a_f,b_f>0$ such that
\begin{equation*}
(a_f+o_f(1)) \leq \frac{\RED(\cC(f)^n)}{\bR_f(n)}\leq \frac{\RED(Q_n,\cC(f)^n)}{\bR_f(n)}\leq (b_f+o_f(1))\log\log n\, .
\end{equation*}
In particular, the Pattern Censoring code is adaptive, within a $\log\log n$ factor, with respect to the collection 
$$\big\{ \cC(f) : {f\in\RV_\alpha}, {\alpha\in[0,1[} \big\} \, . $$
\end{thm}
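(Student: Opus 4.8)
The three displayed inequalities are treated separately. The middle one, $\RED(\cC(f)^n)\le\RED(Q_n,\cC(f)^n)$, is immediate from the definition of $\RED(\cC(f)^n)$ as an infimum over coding distributions. The leftmost one is a restatement of Theorem~\ref{thm:red-asymp}: for $\alpha\in(0,1)$ the relation $\RED(\cC(f)^n)\asymp\bR_f(n)$ provides a constant $a_f>0$ with $a_f\bR_f(n)\le\RED(\cC(f)^n)$ for all large $n$, and for $\alpha=0$ the ratio $\RED(\cC(f)^n)/\bR_f(n)$ tends to $1$, so any $a_f\in(0,1)$ works eventually (the deficit absorbed into $o_f(1)$). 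Everything therefore reduces to the rightmost inequality, i.e.\ to $\RED(Q_n,\cC(f)^n)\le(b_f+o_f(1))\bR_f(n)\log\log n$.

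The starting point is the modular decomposition of the codeword length carried out in Section~\ref{sec:code_analysis}. Write $\Psi_{1:n}$ for the pattern of $x_{1:n}$ and $\Delta=(\delta_1,\dots,\delta_{K_n})$ for the dictionary of discovered symbols, listed in order of first appearance. Then $-\log Q_n(x_{1:n})=L_{\mathrm{dict}}(x_{1:n})+L_{\mathrm{pat}}(x_{1:n})$, where $L_{\mathrm{dict}}$ sums the Elias codeword lengths of the censored entries of $\Delta$ and $L_{\mathrm{pat}}$ is the log-loss of the Krichevsky--Trofimov encoder fed with the pattern over the data-driven growing alphabet. Using $H(P^n)=H(\Psi_{1:n})+H(\Delta\mid\Psi_{1:n})$, the redundancy at $P$ splits as a \emph{pattern} term $\EXP[L_{\mathrm{pat}}]-H(\Psi_{1:n})$ plus a \emph{dictionary} term $\EXP[L_{\mathrm{dict}}]-H(\Delta\mid\Psi_{1:n})$, and I would bound the supremum over $P\in\cC(f)$ of each. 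For the pattern term, a Krichevsky--Trofimov-type estimator whose alphabet is enlarged by one unit at the $r$-th discovery time $\tau_r$ has cumulative redundancy at most $\sum_{r\le K_n}\tfrac12\log(n/\tau_r)+O(K_n)$; by Fubini and $\EXP[K_t]\le\EXP_f[K_t]$ together with Theorem~\ref{thm:asympt-occup}, its expectation is $O(\bR_f(n))$ --- the contribution that already matches the lower bound. For the dictionary term, the Elias length of symbol $j$ is $\log j+2\log\log(j+2)+O(1)$, the symbol is censored with probability $1-(1-p_j)^n\le\min(1,np_j)\le\min(1,nf(j))$, and one has the elementary bound $j f_j\le\sum_{i\le j}f_i\le1$. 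The key claim is that the leading $\log j$ is matched, up to the $O(\log\log(j+2))$ Elias overhead, by the contribution of symbol $j$ to $H(\Delta\mid\Psi_{1:n})$ --- intuitively, conditionally on its multiplicity $\approx np_j$ the identity of a censored symbol stays spread over roughly $\vec\nu_f(p_j)(np_j)^{-1/2}$ candidates --- so that the dictionary redundancy is controlled by a constant multiple of $\sum_j\min(1,nf(j))\bigl(2\log\log(j+2)+1+(\log(np_j))_+\bigr)$.

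It remains to feed in $\vec\nu_f(1/\cdot)\in\RV_\alpha$ with $\alpha\in[0,1)$ and to check that the resulting bound is $O(\bR_f(n)\log\log n)$. Two observations do most of the work. First, $\vec\nu_f(1/n)=n^\alpha\ell(n)$ with $\ell$ slowly varying, so $\log\vec\nu_f(1/n)=\alpha\log n+o(\log n)=O(\log n)$, hence $\log\log(j+2)\le\log\log n+O(1)$ uniformly over the frequent range $j\le\vec\nu_f(1/n)$: this is exactly where the single $\log\log n$ factor (and no worse) appears. Second, Karamata's integration theorem (Theorem~\ref{thm:karamata}) and the occupancy asymptotics of Theorem~\ref{thm:asympt-occup} give, for $\alpha\in(0,1)$, $\vec\nu_f(1/n)\asymp n\nu_{1,f}[0,1/n]\asymp\EXP_f[K_n]\asymp\bR_f(n)$, and for $\alpha=0$, $n\nu_{1,f}[0,1/n]\ll\vec\nu_f(1/n)\sim\EXP_f[K_n]\ll\bR_f(n)$; moreover $\ell_f\log n=o(\bR_f(n))$ since the support of $(f_j)$ is infinite, whence $\bR_f(n)\gg\log n$. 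Splitting the sum above at the threshold $j=\vec\nu_f(1/n)$, the frequent part is $O(\vec\nu_f(1/n)\log\log n)=O(\bR_f(n)\log\log n)$, the rare part is $O(n\nu_{1,f}[0,1/n]\log\log n)=O(\bR_f(n)\log\log n)$, and $\sum_{j\le\vec\nu_f(1/n)}\tfrac12(\log(np_j))_+=O(\bR_f(n))$ (by summation by parts and Karamata); adding the $O(\bR_f(n))$ pattern redundancy yields $b_f$ and the rightmost inequality. Combining the three inequalities gives adaptivity within a $\log\log n$ factor over $\{\cC(f):\vec\nu_f(1/\cdot)\in\RV_\alpha,\ \alpha\in[0,1)\}$, with $a_f,b_f$ and the $o_f(1)$ terms allowed to depend on $f$.

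The main obstacle is the dictionary term, and inside it the claim that the $\log j$ part of the Elias length is genuinely charged against $H(\Delta\mid\Psi_{1:n})$: this has to hold uniformly over all $P\in\cC(f)$, the adverse case being distributions whose mass function hugs the envelope so that $\vec\nu_P\approx\vec\nu_f$, and it rests on a quantitative lower bound for the conditional entropy of the dictionary given the pattern (equivalently, for the number of symbols a censored symbol can plausibly be confused with, given its multiplicity). A secondary difficulty is the online redundancy analysis of a Krichevsky--Trofimov code whose alphabet is itself random, unbounded, and grows as symbols are discovered; this is also where the restriction $\alpha<1$ enters, since for $\alpha=1$ one has $\EXP_f[K_n]\gg\bR_f(n)$ and the lower bound $\RED(\cC(f)^n)\asymp\EXP_f[K_n]$ is no longer of order $\bR_f(n)$, so the normalization of the statement would have to change.
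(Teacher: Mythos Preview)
Your proposal takes a genuinely different route from the paper, and the main obstacle you flag is, unfortunately, a real gap rather than a technicality.

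The paper explicitly warns that ``the analysis of the \textsc{pc} code does not separate the contributions of the two interleaved codes.'' Instead of your split into pattern redundancy $\EXP[L_{\mathrm{pat}}]-H(\Psi_{1:n})$ and dictionary redundancy $\EXP[L_{\mathrm{dict}}]-H(\Delta\mid\Psi_{1:n})$, the paper decomposes the \emph{instantaneous} redundancy $\EXP_P[\log P(X_{i+1})/Q_{i+1}(X_{i+1}\mid X_{1:i})]$ into four terms $A,B,C,D$ (Equation~\eqref{eq:inst-red-rearrangement}). The crucial cancellation happens inside term~$C$: the Elias contribution $\log j$ is balanced not by $H(\Delta\mid\Psi_{1:n})$ but by the $-\log(K_i+\tfrac12)$ arising from the Krichevsky--Trofimov predictive probability of the escape symbol~$0$. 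This is a \emph{code-internal} mechanism, available for every $P$, and it reduces the residual to $\log(j/K_i)$, which is then controlled via Lemmas~\ref{lem:sum-envelope} and~\ref{lem:eval-sum-reg-var} using the envelope and regular variation. The only surviving unmatched piece is the Elias overhead $2\log(\log j+1)$, whence the $\log\log n$ factor.

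Your proposed mechanism---charging $\log j$ against $H(\Delta\mid\Psi_{1:n})$---does not work uniformly over $\cC(f)$. Take any $P$ whose probabilities are well separated on a logarithmic scale (for instance the geometric distribution $p_j=2^{-j}$, which lies in every envelope class with $\vec\nu_f(1/t)\gtrsim\log t$). With high probability the multiplicities $N_n^j$ are all distinct, the pattern determines the dictionary almost surely, and $H(\Delta\mid\Psi_{1:n})=o(1)$. Yet the Elias cost $\sum_j(1-(1-p_j)^n)\log j$ is of order $\log n\cdot\log\log n$. So no fraction of the $\log j$'s is absorbed by the conditional dictionary entropy; the absorption happens entirely on the pattern side, through the escape-symbol encoding. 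Your intuition that ``the identity of a censored symbol stays spread over roughly $\vec\nu_f(p_j)(np_j)^{-1/2}$ candidates'' implicitly assumes that $P$ has many symbols of comparable probability near $p_j$, which the envelope does not force.

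In short: the structure of the final summation and the regular-variation calculus you sketch are close in spirit to the paper's Lemmas~\ref{lem:sum-envelope}--\ref{lem:eval-sum-reg-var}, but the engine that makes the $\log j$ disappear is different, and yours stalls. Reworking the argument at the level of instantaneous redundancies, and tracking where the KT cost of the escape symbol goes, gives the missing cancellation directly.
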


 The \textsc{pc} code is designed in a way that parallels the design of the \textsc{ac} and \textsc{etac} codes from \citep{bontemps2012adaptiveb,boucheron2015adaptive}. All three codes interleave mixture coding of symbols that are deemed frequent and Elias encoding of symbols that are deemed rare. They differ in the way censoring is performed. The \textsc{ac} code censors records that is symbols that are larger than symbols seen so far. The \textsc{etac} code censors the $n^{\text{th}}$ symbol if it is larger than $M_n$ where $M_n$ is a carefully chosen  order statistics of the sub-sample $X_1, X_2, \ldots, X_n$ (See Equation (\ref{def:mn}) in Appendix \ref{sec:connections_between_countin}).  At each instant, the \textsc{ac} and the \textsc{etac} codes handle an effective alphabet formed by a sequence of consecutive symbols. The \textsc{ac} code is doomed to failure if the source has a positive regular variation index $\alpha \in (0,1)$: the increasing alphabet used by the mixture  code typically grows like $n^{\alpha/(1- \alpha)}$ instead of $n^\alpha$. The increasing alphabet used by the mixture code of the \textsc{etac} code grows much more reasonably: if the source distribution  is dominated by a regularly varying envelope   $f$ with index $\alpha \in  (0,1)$, the increasing alphabet typically grows like $\EXP M_n$  which is or order $\vec \nu_f(1/n)$ (See Appendix \ref{sec:connections_between_countin}). If the envelope is slowly varying and discrete log-concave, $\EXP M_n$ also grows at a pace that is consistent with the slow variation property of the envelope. 
 But our current understanding of the rate of slow variation problem \cite[See][Section 2.3]{BinGolTeu89} does not allow us to quantify precisely the rate of growth the alphabet when the envelope is slowly varying and is equivalent to a function in the de Haan class (See Definition \ref{def:dehaan} in Appendix \ref{sec:reg-var-tools}) where the auxiliary function tends to infinity. This lack of understanding also prevents us from quantifying the redundancy of the \textsc{etac} code with respect to corresponding envelope classes.  

The \textsc{etac} and \textsc{pc} codes also differ in the way they encode the escape symbol $0$ (see below for details). The mixture encoder used by \textsc{etac} code always considers $0$  as a new symbol:  the predictive probability associated with $0$ at instant $n+1$ is $1/(n+(M_n+1)/2)$. The predictive probability assigned to $0$  should rather be close to an estimator of the probability of discovering a new symbol, that is of the missing mass. For a regularly varying sampling distribution with index $\alpha \in (0,1)$, the predictive probability assigned to $0$ should scale like $n^{\alpha-1}$ rather than $1/n$. The \textsc{pc} code assigns a predictive probability $(K_n+1/2)/(n+K_n+1)$ to the escape symbol. Even though this does not coincide with the Good-Turing estimator of the missing mass $K_{n,1}/n$, it scales correctly with $n$ for regularly varying sampling distributions. 

Theorems \ref{prop:red-upper} and  \ref{thm:red-asymp} 
are established in Section \ref{sec-minmaxred}.
Theorem \ref{thm:main-compression} is proved  in Section \ref{sec:code_analysis}. 
Useful technical results and arguments are stated in Section \ref{sec:proofs}.


\section{Minimax redundancy of envelope source classes}
\label{sec-minmaxred}

This section describes upper and lower bounds for the minimax redundancy of envelope source classes. The techniques are  borrowed from \citep{acharya2014poissonization} where tight bounds on \emph{minimax regret} (see below) are derived thanks to  Poissonization arguments. Not too surprisingly, Poissonization allows to derive tight bounds for minimax redundancy as well. 
In Karlin's framework these bounds are illuminating. 
When applied to regularly varying envelopes with positive regular variation parameter,  the resulting bounds are tight up to constant factors. When applied  to slowly varying envelopes, the resulting bounds actually imply asymptotic equivalence.

We now give the main ingredients leading to Theorem \ref{prop:red-upper}. The upper bound \eqref{eq:red-upper} follows from minimax regret bounds in \cite{acharya2014poissonization}. Upper bounds are then translated  using Karlin's framework, this  provides us with insights onto effective coding methods. In the other direction, we need to show that Poissonization arguments also apply to minimax redundancy, and to establish lower bounds on the redundancy of Poisson classes. 

\subsection{Properties of minimax redundancy}

The next theorem collects  properties of minimax redundancies \citep[See][for a general perspective]{gassiat2014codage}.

\begin{prop}\label{prop:red-properties} The sequence $(\RED(\cC^n))_{n\geq 1}$ of minimax redundancies of a class $\cC$ of stationary
memoryless sources satisfies 
\begin{enumerate}[i)]
\item $(\RED(\cC^n))_{n\geq 1}$ is non-decreasing.
\item $(\RED(\cC^n))_{n\geq 1}$ is sub-additive: 
\begin{eqnarray*}
\text{for all} \quad  n,m\geq 1, \quad \RED(\cC^{n+m})&\leq & \RED(\cC^n)+\RED(\cC^m)\, .
\end{eqnarray*}
\item The minimax redundancy is equal to the maximin  Bayesian redundancy:
\begin{eqnarray*}
\RED(\cC^n)&=& \sup_{\pi\in\cM_1(\cC)}\inf_{Q_n\in\cM_1\left(\cX^n\right)}\int D(P^n,Q_n)\mathd\pi(P)\\
&=& \sup_{\pi\in\cM_1(\cC)} \int D(P^n, P_\pi^n)\mathd \pi(P)\, ,
\end{eqnarray*}
where $P_\pi^n$ is the mixture distribution given by $P_\pi(j)=\int P(j)\mathd\pi(P)$.
\end{enumerate}
\end{prop}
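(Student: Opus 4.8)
The plan is to establish the three items in order, since the later ones build naturally on the setup of the earlier ones, and the third is the substantive one.

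\textit{Item i) (monotonicity).} First I would note that for any coding distribution $Q_{n+1}$ on $\cX^{n+1}$, its marginal $Q_n$ on the first $n$ coordinates is again a coding distribution, and by the chain rule for relative entropy, $D(P^{n+1}, Q_{n+1}) = D(P^n, Q_n) + \sum_{x_{1:n}} P^n(x_{1:n}) D\big(P(\cdot\mid x_{1:n})\,\|\, Q_{n+1}(\cdot\mid x_{1:n})\big) \geq D(P^n, Q_n)$, where the conditional of $P^{n+1}$ given $x_{1:n}$ is just $P$ by the memoryless assumption. Taking $\sup_{P\in\cC}$ and then $\inf$ over $Q_{n+1}$ (with the marginalization only shrinking the infimum) gives $\RED(\cC^{n+1}) \geq \RED(\cC^n)$. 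The only mild care needed is that the infimum over $Q_n$ is attained or approached by marginals of admissible $Q_{n+1}$, which is immediate since any $Q_n$ extends to a product $Q_n \otimes (\text{anything})$.

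\textit{Item ii) (subadditivity).} Here I would fix near-optimal coders $Q_n$ for $\cC^n$ and $Q_m$ for $\cC^m$ and take the product coding distribution $Q_n \otimes Q_m$ on $\cX^{n+m}$. Since $P^{n+m} = P^n \otimes P^m$, relative entropy tensorizes: $D(P^{n+m}, Q_n\otimes Q_m) = D(P^n,Q_n) + D(P^m,Q_m) \leq \RED(Q_n,\cC^n) + \RED(Q_m,\cC^m)$ for every $P\in\cC$. Taking the supremum over $P$ and then the infimum over the choices of $Q_n, Q_m$ yields the claim.

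\textit{Item iii) (minimax = maximin, and the mixture form).} This is the crux. The second equality is the easy half: for a fixed prior $\pi\in\cM_1(\cC)$, the inner infimum $\inf_{Q_n} \int D(P^n,Q_n)\,\mathd\pi(P)$ is a classical Bayes-risk computation — writing $\int D(P^n,Q_n)\mathd\pi(P) = \int D(P^n,P_\pi^n)\mathd\pi(P) + D(P_\pi^n, Q_n)$ after adding and subtracting $\log P_\pi^n$ and using Fubini — so the minimizer is $Q_n = P_\pi^n$ and the value is $\int D(P^n,P_\pi^n)\mathd\pi(P)$ (the mutual information $I(\pi; P^n)$). The genuinely hard part is the minimax theorem itself: $\inf_{Q_n}\sup_{P\in\cC} D(P^n,Q_n) = \sup_{\pi\in\cM_1(\cC)} \inf_{Q_n}\int D(P^n,Q_n)\mathd\pi(P)$. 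The direction "$\geq$" is automatic (plug any prior's mixture into the minimax, or just the standard maximin $\leq$ minimax inequality). For "$\leq$" I would invoke a Sion-type minimax theorem: the map $(\pi, Q_n)\mapsto \int D(P^n,Q_n)\mathd\pi(P)$ is affine (hence concave) and upper semicontinuous in $\pi$, and convex and lower semicontinuous in $Q_n$, on the convex sets $\cM_1(\cC)$ and $\cM_1(\cX^n)$. The subtlety — and the main obstacle — is compactness/finiteness: $\cX$ is countably infinite, so $\cM_1(\cX^n)$ is not compact and $D(P^n,Q_n)$ can be $+\infty$. The standard remedy, which I would follow (cf. \citep{gassiat2014codage, MR1604481}), is to reduce to the case $\RED(\cC^n)<\infty$ (otherwise both sides are infinite and there is nothing to prove), restrict $P$ and $Q_n$ to a common finite alphabet by a truncation/approximation argument, apply Sion's theorem there, and pass to the limit using monotone convergence and lower semicontinuity of relative entropy. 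Alternatively one cites the Haussler–Gallager–Ryabko characterization of channel capacity as maximin directly. I would present the reduction-to-finite-alphabet route, flagging that the technical continuity bookkeeping (which is routine but not trivial) is what makes this the hard step.
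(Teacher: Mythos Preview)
The paper does not actually prove this proposition: it is stated as a collection of known facts with a reference to \citep{gassiat2014codage}, so there is no in-paper argument to compare against. Your plan is the standard one and is correct. Items i) and ii) via marginalization and product coders are exactly how these facts are usually obtained; for iii), the compensation identity for the inner Bayes problem together with a Sion-type minimax argument (or a direct appeal to the Gallager--Ryabko/Haussler capacity characterization) is precisely the route taken in the cited reference.

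One point worth making explicit in your write-up: the paper's wording ``$P_\pi^n$ is the mixture distribution given by $P_\pi(j)=\int P(j)\,\mathd\pi(P)$'' can be read as the $n$-fold product of the one-symbol mixture, but your decomposition $\int D(P^n,Q_n)\,\mathd\pi(P) = \int D(P^n,P_\pi^n)\,\mathd\pi(P) + D(P_\pi^n,Q_n)$ is valid only when $P_\pi^n(x_{1:n}) = \int P^n(x_{1:n})\,\mathd\pi(P)$, i.e.\ the mixture of the product measures (this is the Bayes-optimal coder, and the quantity $\int D(P^n,P_\pi^n)\,\mathd\pi(P)$ is then the mutual information $I(\Theta;X_{1:n})$). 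You clearly intend this reading, so state it explicitly to avoid the ambiguity.
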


We will also use the notion of regret that originates from individual sequence analysis \citep{cesa-bianchi:lugosi:2006}. The regret of a coding probability with respect to a source class $\cC$ is \[
	\REG(Q_n,\cC^n) := \sup_{x_{1:n} \in \cX^n}  \sup_{P \in \cC}\log \frac{P^n(x_{1:n})}{Q_n(x_{1:n})} 
\]  
and the minimax regret of  $\cC^n$  is $\REG(\cC^n) :=  \inf_{Q_n} \REG(Q_n, \cC^n)$. 

As pointed out  by \citet{acharya2014poissonization}, the minimax redundancy of a class of stationary memoryless sources is equal to the minimax redundancy of the induced class on \emph{types}. Recall that the type of a sequence \citep{csiszar:korner:1981} is the sequence of frequencies of the different symbols in the sequence. As such, it is a random variables with values in the set of infinite sequences of integers with finitely many non-zero entries. 
More precisely, for $P\in\cC$, let us denote by $\tau(P^n)$ the distribution of the \emph{type} of a sequence $(X_1,\dots,X_n)\sim P^n$, that is  $\tau(P^n)$ is the probability distribution of the sequence $(N_n^j)_{j\geq 1}$. For a class $\cC$ of sources over $\cX$,  the class of probability distributions $\tau(\cC^n)$ is defined as as
\begin{eqnarray*}
\tau(\cC^n)&=&\left\{\tau(P^n),\, P\in\cC\right\}\, .
\end{eqnarray*}
As types are minimal sufficient statistics, 
\begin{displaymath}
\RED(\cC^n)=\RED(\tau(\cC^n))\quad\text{and}\quad \REG(\cC^n)=\REG(\tau(\cC^n))\, .
\end{displaymath}

Poissonization provides a simplified framework in various settings. 
Under \emph{Poisson sampling}, the message length is a random variable $N$ that is distributed according to a Poisson distribution with mean $n$ ($N\sim\cP(n)$). Picking a Poissonized sample from $P$ consist first of picking $N$ according to  $\cP(n)$ and then picking an i.i.d. sample of random length $N$ from $P$. 

 Let $\cC^{\cP(n)}$ be the Poissonized version of $\cC^n$: 
$$
\cC^{\cP(n)}=\left\{P^{\cP(n)},\, P\in\cC\right\},
$$
where, for all $P\in\cC$, all $k\geq 1$ and $x_{1:k}\in\cX^k$,
$$
P^{\cP(n)}(x_{1:k})=\PROB(N=k).P^k(x_{1:k}).
$$
By convention, the empty sequence has probability zero. For $j\geq 1$, let us denote by $N_j(n)$ the number of occurrences of symbol $j$ in a Poisson sample with size $\cP(n)$. Then $N_j(n)$ is distributed as $\cP(nP(j))$, and a very useful property of Poisson sampling is that the symbol counts $(N_j(n))_{j\geq 1}$ are independent. Let us also note that, as in the fixed-$n$ setting, the redundancy of $\cC^{\cP(n)}$ is equal to the type-redundancy \citep{acharya2014poissonization}:
\begin{equation}\label{eq:pois-type}
\RED(\cC^{\cP(n)})=\RED\left(\tau(\cC^{\cP(n)})\right)\quad \mbox{and}\quad \REG(\cC^{\cP(n)})=\REG\left(\tau(\cC^{\cP(n)})\right)\, .
\end{equation}

\begin{prop}\label{prop:pois-redun-representation}
For all class $\cC$, the Poissonized minimax redundancy satisfies
\begin{eqnarray*}
\RED(\cC^{\cP(n)})&=& \inf_{(Q_k)}\sup_{P\in\cC}\sum_{k\geq 0}\PROB(N=k)D(P^k,Q_k)
=  \sum_{k\geq 0} \PROB(N=k) \RED(\cC^{k})\, , 
\end{eqnarray*}
where the infimum is taken over sequences $(Q_k)_{k\geq 0}$, such that, for all $k\geq 0$, $Q_k$ is a probability distribution over $\cX^k$.
\end{prop}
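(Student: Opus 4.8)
The first equality is essentially a bookkeeping identity. Any coding distribution $Q$ on $\bigsqcup_{k\ge 0}\cX^{k}$ decomposes, after setting $q_{k}:=Q(\cX^{k})$ and $Q_{k}:=Q(\cdot\mid\cX^{k})$, as $Q(x_{1:k})=q_{k}\,Q_{k}(x_{1:k})$; one may assume $q_{k}>0$ whenever $\PROB(N=k)>0$, since otherwise $D(P^{\cP(n)},Q)=+\infty$. Substituting into $D(P^{\cP(n)},Q)$ and grouping the sum by the length $k$, the chain rule for relative entropy gives
\[
D(P^{\cP(n)},Q)=D\big((\PROB(N=k))_{k\ge 0}\,\big\|\,(q_{k})_{k\ge 0}\big)+\sum_{k\ge 0}\PROB(N=k)\,D(P^{k},Q_{k})\,.
\]
The first term does not depend on $P$, is non-negative, and vanishes exactly when $q_{k}=\PROB(N=k)$ for all $k$; so it is removed by the source-independent choice $q_{k}=\PROB(N=k)$, both under $\sup_{P}$ and under $\inf_{Q}$, which yields the first displayed equality.

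For ``$\le$'' in the second equality I would plug in the product strategy $Q_{k}=Q_{k}^{\star}$, a minimax code for $\cC^{k}$ (or, if the minimiser is not attained, an $\varepsilon2^{-k}$-optimal one, then let $\varepsilon\to 0$): for every $P$, $\sum_{k}\PROB(N=k)D(P^{k},Q_{k}^{\star})\le\sum_{k}\PROB(N=k)\sup_{P'}D((P')^{k},Q_{k}^{\star})=\sum_{k}\PROB(N=k)\RED(\cC^{k})$, so the infimum over $(Q_{k})$ is at most the right-hand side. For the reverse inequality I would start from the maximin form of redundancy (the extension of Proposition~\ref{prop:red-properties}(iii) to the countable sample space $\bigsqcup_{k}\cX^{k}$): priors on $\cC^{\cP(n)}$ are in bijection with priors $\pi\in\cM_{1}(\cC)$, and, using the chain rule above with $q_{k}=\PROB(N=k)$, the inner infimum over the coding distribution separates level by level, so
\[
\RED(\cC^{\cP(n)})=\sup_{\pi\in\cM_{1}(\cC)}\ \sum_{k\ge 0}\PROB(N=k)\,I_{k}(\pi),\qquad I_{k}(\pi):=\inf_{Q_{k}}\int D(P^{k},Q_{k})\,\mathd\pi(P)\,,
\]
while $\RED(\cC^{k})=\sup_{\pi}I_{k}(\pi)$ by Proposition~\ref{prop:red-properties}(iii). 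Thus ``$\ge$'' amounts to $\sup_{\pi}\sum_{k}\PROB(N=k)I_{k}(\pi)\ge\sum_{k}\PROB(N=k)\sup_{\pi}I_{k}(\pi)$.

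This last inequality is where I expect the main obstacle: the right-hand side is a level-by-level sum of suprema, each attained at its own least-favorable prior, whereas on the left a single prior must serve all lengths at once. The plan would be to construct, for every $\varepsilon>0$, one prior $\pi$ with $I_{k}(\pi)\ge\RED(\cC^{k})-\varepsilon$ for all $k$ (it suffices to control Poisson-typical $k$, the tails being handled via $\RED(\cC^{k})\le k\RED(\cC^{1})$ and the monotonicity/sub-additivity of Proposition~\ref{prop:red-properties}); a natural candidate is a countable mixture $\pi=\sum_{j}\lambda_{j}\pi_{j}^{\star}$ of level-$j$ least-favorable priors, for which one must lift the concavity bound $I_{k}(\pi)\ge\sum_{j}\lambda_{j}I_{k}(\pi_{j}^{\star})$ up to the exact value, using upper semicontinuity of $\pi\mapsto I_{k}(\pi)$, the fact that $k\mapsto I_{k}(\pi)$ is non-decreasing, and a compactness/diagonal extraction along priors that are increasingly good at increasingly many levels. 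I should flag, however, that without extra structure on $\cC$ I do not see why a single prior should be \emph{uniformly} least favorable, so this step is either the real content of the proposition or the point at which the equality must be weakened to ``$\le$'' and complemented by a direct lower bound on $\RED(\cC^{\cP(n)})$ — which, for envelope classes, is precisely what the lower bound in Theorem~\ref{prop:red-upper} supplies.
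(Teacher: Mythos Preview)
Your derivation of the first equality via the chain-rule decomposition
\[
D(P^{\cP(n)},Q)=D\big(\cP(n),q\big)+\sum_{k\ge 0}\PROB(N=k)\,D(P^{k},Q_{k})
\]
is exactly what the paper does, and the paper's proof \emph{stops there}: it establishes only the first equality and says nothing about the second.

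Your instinct about the second equality is therefore well-placed. The direction ``$\le$'' is the trivial one (swap $\sup_P$ inside the sum, then separate the infimum level by level), and you give it cleanly. The reverse direction, which you flag as the real obstacle, is simply not proved in the paper. Nor is it used: in the proof of Proposition~\ref{prop:poisson-redundancy} the authors invoke Proposition~\ref{prop:pois-redun-representation} only to write
\[
\RED(\cC^{\cP(m)})=\inf_{(Q_k)}\sup_{P}\sum_k\PROB(M=k)D(P^k,Q_k)\le\inf_{(Q_k)}\sum_k\PROB(M=k)\sup_P D(P^k,Q_k),
\]
i.e.\ the first equality followed by the easy inequality. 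The ``$\ge$'' half of the second equality never enters the argument.

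So you have correctly diagnosed the situation: your sketch for ``$\ge$'' via a single prior that is simultaneously (near-)least-favorable at every $k$ is precisely the missing ingredient, and without additional structure on $\cC$ there is no reason to expect it to go through. The honest reading is that the second displayed equality in the statement is an overreach (or should be ``$\le$''), and the proposition as actually used in the paper is just the first equality together with the trivial upper bound you already supplied.
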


Minimax regret and minimax regret under Poisson sampling are related by Theorem 2 from \citep{acharya2014poissonization}. \citet{2015arXiv150408070F} also show 
that $\RED(\cC^{\cP(n)})\leq 2\RED(\cC^n)$. The next proposition complements these connections.

\begin{prop}\label{prop:poisson-redundancy}
For any class $\cC$ of stationary memoryless sources with $\RED(\cC)<\infty$, 
\begin{displaymath}
\RED\left(\cC^{\cP(n-n^{2/3})}\right) +o_{\cC}(1) 
\leq \RED(\cC^n)
\leq (1-o(1))\RED\left(\cC^{\cP(n+n^{2/3})}\right)
\end{displaymath}
\end{prop}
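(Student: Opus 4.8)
The plan is to work directly from the exact representation of the Poissonized minimax redundancy in Proposition~\ref{prop:pois-redun-representation}, namely $\RED(\cC^{\cP(m)})=\EXP[\RED(\cC^{N})]=\sum_{k\ge 0}\PROB(N=k)\RED(\cC^k)$ with $N\sim\cP(m)$, and to exploit two deterministic properties of the sequence $k\mapsto\RED(\cC^k)$ recorded in Proposition~\ref{prop:red-properties}: it is non-decreasing and sub-additive. Sub-additivity yields $\RED(\cC^k)\le\lceil k/n\rceil\,\RED(\cC^n)\le (k/n+1)\RED(\cC^n)$ for all $k$, and, applied with $n$ replaced by $1$, the crude bound $\RED(\cC^k)\le k\,\RED(\cC)=O_\cC(k)$, which is the only place the hypothesis $\RED(\cC)<\infty$ is used. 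The probabilistic input is Poisson concentration: for $N\sim\cP(\lambda)$ one has $\PROB(N\ge\lambda+t)\le\mathe^{-t^2/(2(\lambda+t))}$ and $\PROB(N\le\lambda-t)\le\mathe^{-t^2/(2\lambda)}$, which for $t=n^{2/3}$ and $\lambda\sim n$ give a stretched-exponential bound $\mathe^{-\Theta(n^{1/3})}$; I will also use the elementary identity $\EXP[N\,\IND_{\{N>s\}}]=\lambda\,\PROB(N\ge s)$, obtained by re-indexing the Poisson series.

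For the left-hand inequality, set $m=n-n^{2/3}$ (so $\EXP N=m<n$) and split $\RED(\cC^{\cP(m)})=\EXP[\RED(\cC^N)\IND_{\{N\le n\}}]+\EXP[\RED(\cC^N)\IND_{\{N>n\}}]$. Monotonicity bounds the first expectation by $\RED(\cC^n)\,\PROB(N\le n)\le\RED(\cC^n)$. For the second, I combine $\RED(\cC^N)\le (N/n+1)\RED(\cC^n)$, the identity for $\EXP[N\IND_{\{N>n\}}]=m\,\PROB(N\ge n)$, the trivial bound $m/n\le 1$, the Poisson tail estimate (the event $\{N>n\}$ being an upward deviation of order $n^{2/3}$ from the mean $m\sim n$), and finally $\RED(\cC^n)\le n\,\RED(\cC)$; this shows the second expectation is $O_\cC\!\big(n\,\mathe^{-\Theta(n^{1/3})}\big)=o_\cC(1)$. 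Adding up gives $\RED(\cC^{\cP(n-n^{2/3})})\le\RED(\cC^n)+o_\cC(1)$, as claimed.

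For the right-hand inequality, set $M=n+n^{2/3}$ (so $\EXP N'=M>n$ for $N'\sim\cP(M)$), retain only the indices $k\ge n$ in the series for $\RED(\cC^{\cP(M)})$, and use monotonicity: $\RED(\cC^{\cP(M)})\ge\RED(\cC^n)\,\PROB(N'\ge n)$. Since $\{N'<n\}$ is a downward deviation of order $n^{2/3}$ from the mean $M\sim n$, the lower-tail Chernoff bound gives $\PROB(N'\ge n)\ge 1-\mathe^{-\Theta(n^{1/3})}$, whence $\RED(\cC^n)\le\big(1-\mathe^{-\Theta(n^{1/3})}\big)^{-1}\RED(\cC^{\cP(n+n^{2/3})})$, which is exactly the stated bound (the factor being $1+o(1)$).

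The only step that is not pure bookkeeping is the tail term $\EXP[\RED(\cC^N)\IND_{\{N>n\}}]$ in the lower-bound half: because $\RED(\cC^k)\to\infty$ with $k$, one must verify that its at-most-linear growth is dominated by the stretched-exponential smallness of $\PROB(N>n)$. This is precisely where the finiteness assumption $\RED(\cC)<\infty$ is indispensable, and it also explains why a window of width $n^{2/3}$ — rather than something closer to $\sqrt n$ — is chosen: it leaves ample room in the Poisson exponent. Everything else reduces to standard Chernoff estimates for the Poisson law and to the monotonicity/sub-additivity already established.
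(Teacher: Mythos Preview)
Your lower-bound half mirrors the paper's: both start from $\RED(\cC^{\cP(m)})\le\sum_k\PROB(M=k)\,\RED(\cC^k)$, bound the $k\le n$ contribution by $\RED(\cC^n)$ via monotonicity, and control the tail through sub-additivity together with $\RED(\cC)<\infty$. The only cosmetic difference is that the paper uses $\RED(\cC^k)-\RED(\cC^n)\le(k-n)\,\RED(\cC)$ and then bounds $\EXP[(M-n)_+]$ via the layer-cake formula, whereas you use the cruder $\RED(\cC^k)\le(k/n+1)\,\RED(\cC^n)$ together with the Poisson identity $\EXP[N\,\IND_{\{N>n\}}]=m\,\PROB(N\ge n)$; either way the tail is $O_\cC\!\big(n\,\mathe^{-\Theta(n^{1/3})}\big)=o_\cC(1)$.

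The upper-bound half is where the two arguments diverge. You drop the terms $k<n$ in the identity $\RED(\cC^{\cP(M)})=\sum_k\PROB(N'=k)\,\RED(\cC^k)$ and invoke monotonicity of $k\mapsto\RED(\cC^k)$. The paper deliberately does \emph{not} do this. Instead it works from the maximin representation $\RED(\cC^{\cP(M)})=\sup_\pi\sum_k\PROB(N'=k)\,I_\pi(k)$, where $I_\pi(k)=\inf_{Q_k}\int D(P^k,Q_k)\,\mathd\pi(P)$, proves that for each fixed $\pi$ the map $k\mapsto I_\pi(k)$ is non-decreasing (a one-line restriction argument: $D(P^{k+1},Q_{k+1})\ge D(P^k,Q_{k+1}^{(k)})$), drops the small-$k$ terms \emph{inside} the supremum, and only then identifies $\sup_\pi I_\pi(n)=\RED(\cC^n)$. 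The reason for this detour is that the second equality displayed in Proposition~\ref{prop:pois-redun-representation} is only established there in the easy direction $\RED(\cC^{\cP(n)})\le\sum_k\PROB(N=k)\,\RED(\cC^k)$; the reverse direction you need would amount to $\sup_\pi\sum_k p_k\,I_\pi(k)\ge\sum_k p_k\sup_\pi I_\pi(k)$, i.e.\ the existence of a single prior that is simultaneously least favorable for every sample size, and this is neither proved nor obvious. So your shortcut is valid only if you take that identity on faith; the paper's Bayesian argument is precisely what supplies the missing inequality for the purpose at hand.
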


\subsection{Minimax redundancy of envelope classes}

When the source class is an envelope source class $\cC=\cC(f)$, the class of types distributions $\tau\left(\cC(f)^{\cP(n)}\right)$ under Poisson sampling 
is a class of product distributions, 
\begin{equation}\label{eq:pois-type-representation}
\tau\left(\cC(f)^{\cP(n)}\right)=\left\{\prod_{j=1}^\infty \cP(np_j) : P=(p_j)_{j\geq 1}\in\cC(f)\right\}\, .
\end{equation}
Let us define, for $\lambda\geq 0$,
\begin{eqnarray*}
\cP^\star(\lambda)&=&\left\{\cP(\mu) : \mu\leq\lambda\right\}\, ,
\end{eqnarray*}
the class of Poisson distributions with mean smaller than $\lambda$. The minimax redundancy of envelope classes under Poisson sampling 
is tightly related to the sum of minimax redundancies of classes of Poisson distributions. 
\begin{lem}\label{lem:red-pois-sum}
Let $\cC(f)$ be an envelope class. Then for all $n\geq 0$,
\begin{displaymath}
\sum_{j=\ell_f}^\infty\RED(\cP^\star(nf_j))\leq \RED\left(\cC(f)^{\cP(n)}\right)\leq \ell_f\RED(\cP^\star(n))+\sum_{j=\ell_f}^\infty \RED(\cP^\star(nf_j))\, .
\end{displaymath}
\end{lem}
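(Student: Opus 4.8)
The plan is to derive both inequalities from the product structure of the Poissonized type class in~\eqref{eq:pois-type-representation}, combined with the additivity of minimax redundancy over full Cartesian products of single-coordinate classes. Two elementary facts are needed. \emph{Monotonicity:} $\cC'\subseteq\cC''$ implies $\RED(\cC')\leq\RED(\cC'')$, since in $\inf_{Q}\sup_{P}D(P,Q)$ the inner infimum is unaffected while the outer supremum can only grow. \emph{Product additivity:} if a class $\cD$ is a full Cartesian product $\cD=\prod_j\cD_j$ of single-coordinate classes, then $\RED(\cD)=\sum_j\RED(\cD_j)$; the $\leq$ direction follows by tensoring near-optimal mixture coders $Q_j^\star$ for the $\cD_j$ and using $D\big(\bigotimes_jP_j,\bigotimes_jQ_j^\star\big)=\sum_jD(P_j,Q_j^\star)$, and the $\geq$ direction follows from the maximin identity of Proposition~\ref{prop:red-properties}~iii) applied to a product prior $\pi=\bigotimes_j\pi_j$, whose mixture is $\bigotimes_jP_{\pi_j}$, so that $\int D(P,P_\pi)\,\mathd\pi=\sum_j\int D(P_j,P_{\pi_j})\,\mathd\pi_j$ and each $\pi_j$ can be optimized separately. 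For the infinite products below I would truncate to finitely many coordinates and pass to the limit, the sum being finite because of the upper bound, which I would establish first.

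For the upper bound, by~\eqref{eq:pois-type} one has $\RED(\cC(f)^{\cP(n)})=\RED\big(\tau(\cC(f)^{\cP(n)})\big)$, and by~\eqref{eq:pois-type-representation} every element of $\tau(\cC(f)^{\cP(n)})$ has the form $\bigotimes_{j\geq1}\cP(np_j)$ with $p_j\leq f(j)$. Since $f_j=f(j)$ for $j\geq\ell_f$, the $j$-th Poisson marginal belongs to $\cP^\star(nf_j)$ when $j\geq\ell_f$, and crudely to $\cP^\star(n)$ when $j<\ell_f$ (as $np_j\leq n$). Hence $\tau(\cC(f)^{\cP(n)})\subseteq\big(\prod_{j<\ell_f}\cP^\star(n)\big)\times\big(\prod_{j\geq\ell_f}\cP^\star(nf_j)\big)$, and monotonicity together with the $\leq$ half of product additivity gives $\RED(\cC(f)^{\cP(n)})\leq(\ell_f-1)\RED(\cP^\star(n))+\sum_{j\geq\ell_f}\RED(\cP^\star(nf_j))$, which is at most the claimed right-hand side.

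For the lower bound I would show that $\prod_{j\geq\ell_f}\cP^\star(nf_j)$ arises as the family of marginal laws, on the coordinate block $\{j\geq\ell_f\}$, of distributions in $\tau(\cC(f)^{\cP(n)})$. Given any means $\mu_j\in[0,nf_j]$ for $j\geq\ell_f$, set $p_j=\mu_j/n$; by the definition of $\ell_f$, $\sum_{j\geq\ell_f}p_j\leq\sum_{j\geq\ell_f}f(j)\leq1$, so the leftover mass $1-\sum_{j\geq\ell_f}p_j$ is nonnegative, and once it is placed on the symbols $1,\dots,\ell_f-1$ without breaching their envelope values we obtain $P\in\cC(f)$ whose Poissonized type has exactly the prescribed marginals on $\{j\geq\ell_f\}$. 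Since the class of marginal laws on a sub-block of coordinates has minimax redundancy no larger than that of the full class, monotonicity and the $\geq$ half of product additivity then give $\RED(\cC(f)^{\cP(n)})\geq\sum_{j\geq\ell_f}\RED(\cP^\star(nf_j))$.

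The step I expect to be the main obstacle is the absorption of the residual mass in this last construction: $1-\sum_{j\geq\ell_f}p_j$ can be as large as $1$ (when all $\mu_j$ vanish), whereas the symbols $1,\dots,\ell_f-1$ carry only the budget $\sum_{j<\ell_f}f(j)$. When $\sum_{j<\ell_f}f(j)\geq1$ the construction goes through verbatim and yields the stated bound exactly. Otherwise every member of $\cC(f)$ satisfies the joint constraint $\sum_{j\geq\ell_f}p_j\geq1-\sum_{j<\ell_f}f(j)>0$, so $\prod_{j\geq\ell_f}\cP^\star(nf_j)$ is no longer embedded as such; the clean remedy is to pin a bounded, $f$-dependent number of leading coordinates $j\geq\ell_f$ at their envelope values to soak up this deficit, recovering a genuine product over the remaining coordinates at the cost of dropping finitely many terms $\RED(\cP^\star(nf_j))$, each $O(\log n)$. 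This $O_f(\log n)$ correction is immaterial downstream, since the estimates that use the lemma already tolerate an $O(\ell_f\log n)$ slack; but it is the one genuinely delicate point, the maximin lower half of product additivity extended to infinite products being the technical engine everything else rests on.
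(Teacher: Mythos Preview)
Your approach is the one the paper has in mind: the paper does not write out a proof of this lemma, but the surrounding text (the inclusion $\tau(\cC(f)^{\cP(n)})\subset \prod_{j<\ell_f}\cP^\star(n)\times\prod_{j\geq\ell_f}\cP^\star(nf_j)$ and the product additivity $\REG(\cC_1\times\cC_2)=\REG(\cC_1)+\REG(\cC_2)$ in the proof of Theorem~\ref{prop:red-upper}(i)) is exactly your argument, transported from regret to redundancy. Your upper bound is clean and correct.

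The issue you flag for the lower bound is real and not addressed in the paper. Since $\sum_{j\geq\ell_f}p_j\leq\sum_{j\geq\ell_f}f(j)\leq1$ automatically, the only obstruction to realizing an arbitrary point of $\prod_{j\geq\ell_f}[0,f_j]$ is absorbing the residual $1-\sum_{j\geq\ell_f}p_j$ on $\{1,\dots,\ell_f-1\}$, and the available budget there is $\sum_{j<\ell_f}f(j)$, which can be strictly smaller than $1$ (take for instance $f(1)=0.9$, $f(2)=0.2$ and a light tail, so $\ell_f=2$ and the budget is $0.9$). In that regime the marginal class on $\{j\geq\ell_f\}$ is the box $\prod_{j\geq\ell_f}[0,f_j]$ with the corner $\{\sum_{j\geq\ell_f}p_j<1-\sum_{j<\ell_f}f(j)\}$ removed, and is therefore not a full Cartesian product, so the ``$\geq$'' half of product additivity does not apply directly.

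Your proposed remedy is correct: pick the smallest $k\geq0$ with $\sum_{j=1}^{\ell_f+k-1}f(j)\geq1$ (this exists because $\sum_jf(j)>1$), pin $p_{\ell_f},\dots,p_{\ell_f+k-1}$ at their envelope values, and observe that the residual then always lies in $[1-\sum_{j\geq\ell_f}f(j),\,1-\sum_{j=\ell_f}^{\ell_f+k-1}f(j)]\subset[0,\sum_{j<\ell_f}f(j)]$, so it can be absorbed. This yields $\RED(\cC(f)^{\cP(n)})\geq\sum_{j\geq\ell_f+k}\RED(\cP^\star(nf_j))$, which differs from the stated lower bound by at most $k$ terms of size $O(\log n)$. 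Since the only downstream use of the lemma (Theorem~\ref{prop:red-upper}(ii)) already carries an $O_f(\log n)$ slack, this is harmless; but strictly speaking you have proved a slightly weaker inequality than the one displayed, not the lemma verbatim.
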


The next lemma provides us with a lower bound on $\RED\left(\cP^\star(nf_j)\right)$.

\begin{lem}\label{lem:red-pois-class}
For $\lambda\geq 1$, the redundancy of $\cP^\star(\lambda)$ satisfies
\begin{eqnarray*}
\RED(\cP^\star(\lambda))&\geq & \frac{\log\lambda}{2}-5\, .
\end{eqnarray*}
\end{lem}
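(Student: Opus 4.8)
The plan is to lower bound $\RED(\cP^\star(\lambda))$ by a Bayesian (average) redundancy for a carefully chosen prior, and the prior I would use is simply the \emph{uniform} law $\pi_0$ on the interval $[1,\lambda]$. We may assume $\lambda\geq 2$, since for $\lambda\in[1,2)$ the quantity $\tfrac12\log\lambda-5$ is negative. Recall from Lemma~\ref{lem:red-pois-sum} that here $\RED(\cP^\star(\lambda))$ is the minimax redundancy for a \emph{single} draw from an unknown $\cP(\mu)$ with $\mu\leq\lambda$, i.e. $\RED(\cP^\star(\lambda))=\inf_{Q\in\cM_1(\N)}\sup_{\mu\leq\lambda}D(\cP(\mu)\|Q)$. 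Since a supremum dominates an average and since the Bayes-optimal code for $\pi_0$ is the mixture $P_{\pi_0}:=\tfrac{1}{\lambda-1}\int_1^\lambda\cP(\mu)\,\mathd\mu$, I would start from
\[
\RED(\cP^\star(\lambda))\;\geq\;\inf_{Q}\frac{1}{\lambda-1}\int_1^\lambda D\bigl(\cP(\mu)\,\big\|\,Q\bigr)\,\mathd\mu\;=\;\frac{1}{\lambda-1}\int_1^\lambda D\bigl(\cP(\mu)\,\big\|\,P_{\pi_0}\bigr)\,\mathd\mu\, .
\]

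The crux of the argument is that this particular mixture is \emph{almost flat}: for every $k\in\N$,
\[
P_{\pi_0}(k)\;=\;\frac{1}{\lambda-1}\int_1^\lambda\mathe^{-\mu}\frac{\mu^k}{k!}\,\mathd\mu\;\leq\;\frac{1}{\lambda-1}\int_0^\infty\mathe^{-\mu}\frac{\mu^k}{k!}\,\mathd\mu\;=\;\frac{1}{\lambda-1}\, ,
\]
because $\mu\mapsto\mathe^{-\mu}\mu^k/k!$ integrates to $1$ over $(0,\infty)$. Writing $D(\cP(\mu)\|P_{\pi_0})=-H(\cP(\mu))+\sum_{k\geq 0}\cP(\mu)(k)\log\tfrac{1}{P_{\pi_0}(k)}$ and inserting the pointwise bound $P_{\pi_0}(k)\leq 1/(\lambda-1)$, I would obtain, for every $\mu\in[1,\lambda]$,
\[
D\bigl(\cP(\mu)\,\big\|\,P_{\pi_0}\bigr)\;\geq\;\log(\lambda-1)-H\bigl(\cP(\mu)\bigr)\, .
\]

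It then remains to average over $\mu\sim\pi_0$ and to control $\tfrac1{\lambda-1}\int_1^\lambda H(\cP(\mu))\,\mathd\mu$. For that I would invoke a standard bound on the Poisson entropy, say $H(\cP(\mu))\leq\tfrac12\log\!\bigl(2\pi\mathe(\mu+\tfrac1{12})\bigr)$, which for $\mu\geq 1$ is at most $\tfrac12\log\mu+c_0$ with $c_0:=\tfrac12\log(2\pi\mathe)+\tfrac12\log\tfrac{13}{12}$, together with the trivial estimate $\tfrac1{\lambda-1}\int_1^\lambda\log\mu\,\mathd\mu\leq\log\lambda$. Combining the three displays and using $\log(\lambda-1)\geq\log\lambda-1$ for $\lambda\geq2$ gives
\[
\RED(\cP^\star(\lambda))\;\geq\;\log(\lambda-1)-\tfrac12\log\lambda-c_0\;\geq\;\tfrac12\log\lambda-1-c_0\, ,
\]
and since $1+c_0<5$ this is the asserted bound. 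The only ingredient that is not completely elementary is the Poisson entropy estimate; if one prefers to stay self-contained, it can be replaced by the concavity of $\mu\mapsto H(\cP(\mu))$ plus a single pointwise bound at $\mu=(\lambda+1)/2$, or by a crude Stirling estimate of $\EXP_{\cP(\mu)}[\log N!]$, at the cost of a marginally larger constant that still stays safely below $5$. I regard the real point to be the choice of prior: the \emph{uniform} prior on $[1,\lambda]$ is exactly what makes the elementary estimate $H(P_{\pi_0})\geq-\log\max_{k}P_{\pi_0}(k)\geq\log(\lambda-1)$ already sharp enough, whereas the Jeffreys prior $\mathd\mu\propto\mu^{-1/2}$ — the natural one for exact asymptotics — concentrates the mixture near small $k$ and would require a genuinely finer control of $P_{\pi_0}(k)$ to recover the leading term $\tfrac12\log\lambda$.
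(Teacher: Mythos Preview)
Your proof is correct and follows essentially the same route as the paper's: a uniform prior on an interval of length $\sim\lambda$, the pointwise bound $P_{\pi}(k)\leq 1/(\text{length})$ obtained by extending the Gamma integral, and then a $\tfrac12\log\mu+O(1)$ control of the Poisson entropy. The only cosmetic differences are that the paper integrates over $[0,\lambda]$ rather than $[1,\lambda]$ (which spares the $\log(\lambda-1)\geq\log\lambda-1$ step) and derives the entropy bound by hand from Stirling and Jensen rather than quoting the Gaussian-smoothing inequality $H(\cP(\mu))\leq\tfrac12\log\bigl(2\pi\mathe(\mu+\tfrac1{12})\bigr)$.
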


The first lower bound in Theorem \ref{prop:red-upper} now follows from Proposition \ref{prop:poisson-redundancy}, Lemma \ref{lem:red-pois-sum}, and Lemma \ref{lem:red-pois-class}.


\section{The Pattern Censoring code} \label{sec:bhc-code}

The \underline{P}attern \underline{C}ensoring code (\textsc{pc} code) borrows ideas from pattern coding \citep{MR2097043,garivier:2006}.
The encoder and the decoder maintain a dictionary of symbols seen so far. With each symbol is associated the number of occurrences of  the 
symbol and its rank of insertion in the dictionary.  
If the $n^{\text{th}}$ symbol in the message is the first occurrence of symbol $j$, an escape symbol is fed to the Krichevsky-Trofimov encoder, 
symbol $j$ is inserted into a dictionary with count $1$ and symbol $j$ is encoded using a general purpose prefix code for  integer (an Elias encoder). If the $n^{\text{th}}$ symbol has already 
been inserted into the dictionary, the index of its first occurrence is fed to the Krichevsky-Trofimov encoder.

We first give a bird-eye view of the \textsc{pc}  encoding of a sequence $x_{1:n}= x_1,\ldots,x_n$ of symbols from the message alphabet $\mathcal{X}=\N_+= \N\setminus \{0\}$. The message is appended with a $0$ that will serve as a termination signal. 

\vspace{6pt}
\renewcommand{\labelitemi}{---}
\begin{itemize}
\item[] \textbf{Encoding}:
\item The dictionary is initialized as follows:  $\mathcal{D}_{0} = \left\{ \langle 0, 0 \rangle \right\}$, \textit{i.e.} we start with the single (escape) symbol $0$, associated with its virtual \emph{rank of insertion}, $0$.
\item At every index $i$ corresponding to an input symbol, maintain a dictionary $\mathcal{D}_i$ containing symbol $0$ and the symbols occurring in $x_{1:i}$, along with their rank of insertion. Dictionary $\mathcal{D}_i$ thus contains $\langle 0, 0 \rangle$ and ordered pairs
$\langle j, k \rangle$ if symbol $j$ is the $k^{\text{th}}$ distinct symbol occurring in $x_{1:i}$ ($\exists m \leq i,  x_m=j, K_{m-1}=k-1, K_m=k$).
\item Create a \emph{censored} sequence $\widetilde{x}_{1:n}$ such that every symbol $x_i$ that does not belong to $\mathcal{D}_{i-1}$ 
is replaced by the special $0$ symbol and every symbol $x_i$ that belongs to $\mathcal{D}_{i-1}$ is replaced by its rank of insertion:
\[
	\widetilde{x}_i = \begin{cases}
  	k & \mbox{if } \exists k \leq K_{i-1}, \quad \langle x_i , k  \rangle \in \mathcal{D}_{i-1} \\
  	0 & \text{otherwise\,.} 
  \end{cases}
\]
Note that each time $i$ such that  $\widetilde{x}_i=0$ corresponds to the discovery of a new symbol and thus to an insertion in  the dictionary: $\mathcal{D}_i=\mathcal{D}_{i-1}\cup \langle x_i, K_i\rangle$, with $K_i=K_{i-1}+1$.
\item Let $K_n$ be the number of \emph{redacted} (censored-out) input symbols (this is the number of distinct symbols in $x_{1:n}$) 
and let $i_{1:K_n}$ be their indexes. Extract the subsequence $x_{i_{1:K_n}}$ of all such redacted symbols.
\item Perform an instantaneously decodable lossless progressive encoding (in the style of \underline{M}ixture / arithmetic coding) of the censored sequence $\widetilde{x}_{1:n}$, assuming decoder side-information about past symbols. Call the resulting string $C_M$.
\item Perform an instantaneously decodable lossless encoding (in the style of \underline{E}lias / integer coding) of each redacted symbol in $x_{i_{1:K_n}}$ and of the appended $0$ individually rather than as a sequence, assuming decoder side-information about past symbols. Call the resulting string $C_E$.
\item Interleave the coded redacted symbols of $C_E$ just after each coded $0$ symbol of $C_M$, to form the overall \textsc{pc}-code.
\item The first $0$ in the sequence of redacted symbols signals termination. 
\end{itemize}
\vspace{6pt}
\begin{itemize}
\item[] \textbf{Decoding}:
\item Decode the interleaved $C_M$ and $C_E$ strings until exhaustion, as follows.
\item Decode $C_M$ to obtain $\widetilde{x}_{1:n}$ progressively.
\item When a $0$ is encountered, decode the single interleaved redacted symbol from $C_E$ to take the place of the $0$ symbol in the decoded sequence, then move back to decoding $C_M$.
\item Note that at all times the decoder knows the entire past sequence, and therefore the decoder side past side-information hypothesis is upheld.
\end{itemize}
\vspace{6pt}
The censored sequence generated by message \emph{abracadabra}  is \[
  \textrm{00010101231}\, 
\] 
while the dictionary constructed from this sequence is \[
  \left\{\langle 0 ,0\rangle,\langle a,1\rangle, \langle b,2\rangle, \langle r,3\rangle, \langle c ,4\rangle, \langle d,5\rangle\right\}
\]
We now give the details of the dictionary, the encoding of the censored sequence $\widetilde{x}_{1:n}$, and the encoding of the redacted symbols $x_{i_{1:K_n}}$. We also take additional care in guaranteeing that the overall \textsc{pc} code is instantaneously decodable. 


The censored sequence $\widetilde{x}_{1:n}$ is encoded into the string $C_M$ as follows. We start by appending an extra $0$ at the end of the original censored sequence, to signal the termination of the input. We therefore in fact encode $\widetilde{x}_{1:n}0$ into $C_M$. We do this by performing a progressive arithmetic coding \citep{rissanen:langdon:1984} using coding probabilities $\widetilde{Q}_{n+1} (\widetilde{x}_{1:n}0)$ given by:
\begin{displaymath}
 \widetilde{Q}_{n+1} (\widetilde{x}_{1:n}0) = \widetilde{Q}_{n+1}(0\mid \widetilde{x}_{1:n}) \prod_{i=0}^{n-1} \widetilde{Q}_{i+1} (\widetilde{x}_{i+1}\mid \widetilde{x}_{1:i}) \, ,
\end{displaymath}
where the predictive probabilities $\widetilde{Q}_{i+1}(\cdot\mid \widetilde{x}_{1:i})$ are given by Krichevsky-Trofimov mixtures on the alphabet $\{0,1,\dots,K_i\}$,
\begin{displaymath}
 \widetilde{Q}_{i+1} \left(\widetilde{X}_{i+1} = k \mid \widetilde{X}_{1:i}=\widetilde{x}_{1:i} \right) = \frac{\widetilde{n}^k_i+\tfrac{1}{2}}{i+\tfrac{K_{i}+1}{2}}\, ,
\end{displaymath}
where, for $0\leq k\leq K_i$,
$\widetilde{n}^k_i$ is the number of occurrences of symbol $k$ in $\widetilde{x}_{1:i}$. Note that, for $0\leq k\leq K_i$,
$$
\widetilde{n}^k_i=
\begin{cases}
K_i &\mbox{if $k=0$\, ,}\\
n_i^j -1 &\mbox{if $1\leq k\leq K_i$ and $\langle j,k\rangle \in\mathcal{D}_i$\, ,}
\end{cases}
$$
where $n_i^j$ is the number of occurrences of symbol $j$ in $x_{1:i}$.

The subsequence $x_{i_{1:K_n}}$ of redacted symbols is encoded into the string $C_E$ as follows.  For each $i\in i_{1:K_n}$, we encode $x_i$ using Elias penultimate coding \citep{MR0373753}. Thus, if $x_i=j$ and $N_{i-1}^j=0$, the cost of inserting this new symbol in the dictionary (the Elias encoding of $j$) is $1+ \log(j\vee 1)+2\log(1+\log(j\vee 1))$. The extra $0$ appended to the message is fed to the Elias encoder. Since no other encoded redacted symbol but this one equals $0$, it unambiguously signals to the decoder that the last $0$ symbol decoded from $C_M$ is in fact the termination signal. This ensures that the overall code is instantaneously decodable, and that it  corresponds to a coding probability $Q_n$.


The number of elementary operations required by dictionary 
 maintenance depends on the chosen implementation \citep[See][]{Tar83}. If a self-adjusting balanced search tree is chosen, 
 the number of elementary operations is at most proportional to the logarithm of the size of the dictionary. 
As the expected total number of symbols inserted in the dictionary coincides with the number of distinct symbols $K_n$, on average it is 
 sub-linear, the total expected computational cost of dictionary maintenance is $O(n \log n)$ if $n$ denotes the message length.   

After reading the $i^{th}$ symbol from the message, the alphabet used by the Krichevsky-Trofimov encoder is $0, \ldots, K_i$, the state of the arithmetic encoder is a function of the counts $\widetilde{n}^0_i , \widetilde{n}^1_i, \ldots \widetilde{n}^{K_i}_i$. Counts may be handled using map or dictionary data structures provided by modern programming languages. 

\begin{algorithm}[H]
\caption{\textsc{pc} encoder}
\begin{algorithmic}[1]
\label{bhc:encoder}
  \STATE Append $0$ at the end of message $x_{1:n}$
  \STATE $\mathcal{D} \leftarrow \{ \langle 0, 0 \rangle\}\qquad $   \COMMENT{\emph{Initialize dictionary}} 
  \STATE $n^0 \leftarrow 0 \qquad$  \COMMENT{\emph{Initialize counters}} 
  \STATE \text{Initialize the arithmetic encoder}
  \STATE $K \leftarrow 0 \qquad$  \COMMENT{\emph{Initialize the number of distinct symbols seen so far}} 
  \FOR{ $i=1$ \TO $\text{length}(x_{1:n}0)$ }
    \STATE $j \leftarrow x_i$
    \IF{symbol $j \in \mathcal{D}$ and $j\neq 0$}
    	\STATE $k \leftarrow $  \text{ rank of insertion of } $j$ in $\mathcal{D} \qquad$ 
     	\COMMENT{  \emph{exists $\langle j, k \rangle \in \mathcal{D}$} }    	
    	\STATE {feed mixture arithmetic encoder with } $k$
    	\STATE emit output of arithmetic encoder if any
    	\STATE $n^k \leftarrow  n^k+1 \qquad$ 	 \COMMENT{\emph{update counter}} 
    \ELSE 
        \STATE {feed mixture arithmetic encoder with escape symbol} $0$ \\
        \COMMENT{ \emph{this forces the arithmetic coder to output the encoding of the current substring} }
        \STATE emit the  output of the arithmetic encoder 
		\STATE {feed  the Elias encoder  with} $j$
		\STATE emit the output of the Elias encoder
		\STATE $n^0 \leftarrow n^0 +1 $
		\IF{$j\neq 0$}
			\STATE $K \leftarrow K +1 \qquad$   \COMMENT{\emph{increment number of distinct symbols}}
			\STATE $\mathcal{D} \leftarrow \mathcal{D} \cup \{ \langle j, K \rangle \}  $ 
            \STATE $n^K \leftarrow 1 \qquad$  \COMMENT{\emph{initialize new counter}}
        \ELSE 
        	\STATE exit 	
    	\ENDIF
    \ENDIF
  \ENDFOR
\end{algorithmic}
\end{algorithm}

From a bird-eye viewpoint, the sequence of censored symbols defines a parsing of the message into substrings of
uncensored symbols that are terminated  by $0$. Each substring is encoded by an arithmetic encoder provided with incrementally 
updated sequences of probability vectors. The arithmetic  encodings of substrings are interleaved with Elias encodings of censored symbols.

Encoding and decoding are performed in an incremental way, even though arithmetic coding may require buffering \citep{ShaZaFe06}.


\section{Redundancy of the Pattern Censoring Code}

\label{sec:code_analysis}

The redundancy of the \textsc{pc} code, that is the cumulative entropy risk of the coding probability with respect to the sampling probability is 
the sum of expected instantaneous redundancies. 
\begin{eqnarray*}
  D(P^n, Q_n)
  & = & \sum_{i=0}^{n-1} \EXP_{P^i} \left[ \EXP_P\left[\log\frac{P(X_{i+1})}{Q_{i+1}(X_{i+1}|X_{1:i})}\Big| X_{1:i}\right]\right]  \, . 
\end{eqnarray*}
For $i\geq 0$, the conditional expected instantaneous redundancy is given by
\begin{eqnarray*}\label{eq:inst-red}
\lefteqn{\EXP_P\left[\log\frac{P(X_{i+1})}{Q_{i+1}(X_{i+1}|X_{1:i})}\Big| X_{1:i}\right]}
\\ 
&=& \underbrace{\sum_{j\geq 1} p_j\IND_{N_i^j>0}\log\left(\frac{p_j\left(i+\frac{K_i+1}{2}\right)}{N_i^j-\frac{1}{2}}\right) }_{(\textsc{i})}
+ \underbrace{\sum_{j\geq 1}p_j\IND_{N_i^j=0} \log\left(\frac{p_j\left(i+\frac{K_i+1}{2}\right)}{K_i +\frac{1}{2}}\right) }_{(\textsc{ii})}
+ \underbrace{\sum_{j\geq 1}p_j\IND_{N_i^j=0}
\left(1+\log(j)+2\log\left(\log(j)+1\right)\right)}_{\textsc{(iii)}}\, .  \nonumber
\end{eqnarray*}
Terms (\textsc{i}) and \textsc{(ii)} correspond to the mixture encoding on the censored sequence while  the last term corresponds  to the Elias encoding of first occurrences of symbols. 

Let us point out right away that,  in contrast with the analyses from \citep{Bon11,bontemps2012adaptiveb,boucheron2015adaptive},  the analysis of the \textsc{pc} code does not separate the contributions of the two interleaved codes. We proceed cautiously in order to take into account the fact that redundancy accounting has to match the (unknown) envelope tail behavior.  

The conditional instantaneous redundancy may be rearranged  as follows:
{\small
\begin{eqnarray}\label{eq:inst-red-rearrangement}
\lefteqn{\EXP_P\left[\log\frac{P(X_{i+1})}{Q_{i+1}(X_{i+1}|X_{1:i})}\Big| X_{1:i}\right]}\\
&=& \underbrace{\sum_{j\geq 1}p_j\log\left(ip_j\left(1+\frac{K_i+1}{2i}\right)\right)}_{A:=} + \underbrace{\sum_{j\geq 1} p_j\IND_{N_i^j>0}\log\left(\frac{1}{N_i^j-\frac{1}{2}}\right)}_{B:=}
+ \underbrace{\sum_{j\geq 1}p_j\IND_{N_i^j=0}\left(1+\log\left(\frac{j}{K_i+\frac{1}{2}}\right)\right)}_{C:=} +\underbrace{2\sum_{j\geq 1}p_j\IND_{N_i^j=0}\log\left(\log(j)+1\right)}_{D:=}\, .\nonumber
\end{eqnarray}
}
Note that in term $C$ 
the $\log(j)$ contribution of Elias encoding is balanced by the $\log(K_i +1/2)$ coming from the encoding of the $0$ symbols in the censored sequence. This turns out to be crucial and allows us to obtain adaptivity within a $\log\log n$ factor (which comes from the residual part of Elias code) instead of a $\log n$ factor. The encoding of the $0$ symbols in the censored sequence is then pivotal, this is one of the main differences with the \textsc{ac} or \textsc{etac} codes: in those previous codes, zeros were actually not encoded, and the {\KT} code was working with the counts stemming from the sequence $X_{1:i}$ instead of $\widetilde{X}_{1:i}$. Here, the {\KT} code considers $0$ as a proper symbol in the sequence $\widetilde{X}_{1:i}$, and this allows us to gain the term $\textsc{(ii)}$  in \eqref{eq:inst-red} 
at the price of decreasing the other counts $N_i^j$ by one, which does not significantly affects the redundancy.

Bounding the redundancy of the \textsc{pc} code then consists in two steps. The instantaneous redundancy incurred by encoding the $(i+1)^{\text{th}}$ symbol 
 code is first 
upper bounded by an expression involving $\EXP K_i$,
the expected number of distinct symbols in a message of length $i$ as well as other occupancy counts. This should not be regarded as surprising 
if we keep in mind that redundancy and regret of pattern coding essentially depend on the profile of the message \citep{MR2024675,MR2097043,MR2095850}, that is on the occupancy counts. This first upper bound is distribution-free, it holds for all stationary memoryless sources.    

\begin{prop}\label{prop:dist-free-redundancy-bound}
For any source $P$, the instantaneous redundancy of the \textsc{pc} code satisfies
\begin{eqnarray*}
\EXP_P\left[\log\frac{P(X_{i+1})}{Q_{i+1}(X_{i+1}|X_{1:i})}\right]
	& \leq &  \kappa  \frac{\EXP K_i}{i} + \sum_{j\geq 1} p_j\PROB\{N_i^j=0\}\left(\log\left(\frac{j}{\EXP K_i}\right) + 2\log\left(\log(j)+1\right)\right) \, , 
\end{eqnarray*}
where  $\kappa\leq 19.$

\end{prop}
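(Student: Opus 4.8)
The plan is to bound each of the four terms $A,B,C,D$ in the rearrangement \eqref{eq:inst-red-rearrangement} separately, taking expectation over $X_{1:i}\sim P^i$, and then assemble the pieces. The guiding principle is that every term should be controlled either by $\EXP K_i/i$ (times a constant) or by the "tail" sum $\sum_j p_j\PROB\{N_i^j=0\}(\ldots)$ that appears on the right-hand side of the proposition. Throughout I will freely use Jensen's inequality (to move expectations inside $\log$), the identity $\EXP K_i = \sum_j \PROB\{N_i^j>0\} = \sum_j(1-(1-p_j)^i)$, and the elementary bounds $\log(1+x)\leq x\log\mathe$ and $\IND_{N_i^j=0}$ having expectation $(1-p_j)^i\leq \mathe^{-ip_j}$.

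**First, term $A$.** Write $A=\sum_j p_j\log(ip_j)+\sum_j p_j\log\bigl(1+\tfrac{K_i+1}{2i}\bigr)$. The second piece is at most $\log\bigl(1+\tfrac{K_i+1}{2i}\bigr)\leq \tfrac{(K_i+1)\log\mathe}{2i}$, whose expectation is $O(\EXP K_i/i)$ since $\EXP K_i\geq 1$ for $i\geq 1$. For the first piece, split the sum according to whether $ip_j\leq 1$ or $ip_j>1$. When $ip_j\le1$, $p_j\log(ip_j)\le0$, so these terms only help. When $ip_j>1$, I bound $p_j\log(ip_j)$ using the concavity trick: $\sum_{j:ip_j>1}p_j\log(ip_j)$ can be compared to $\tfrac{1}{i}\sum_j(ip_j)\wedge\text{something}$; more cleanly, note $p_j\log(ip_j)\le p_j\log(ip_j)\IND_{ip_j>1}$ and since $x\mapsto x\log x$ on $[1,\infty)$ is convex, one controls this by relating it to $\EXP K_i$ via the standard inequality $1-(1-p_j)^i\ge (1-\mathe^{-1})(ip_j\wedge 1)$. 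This is the step where I would be most careful — matching $\sum_{j:ip_j>1}p_j\log(ip_j)$ against $\EXP K_i/i$ requires that the "bulk" $\sum_{j:p_j>1/i}p_j$ is not too large, which is automatic because there are at most $i$ such indices is \emph{not} quite true, so instead I would bound $p_j\log(ip_j)\le \frac{c}{i}$ pointwise is false too; the right route is $\sum_{j:ip_j>1}p_j\log(ip_j)\le \log\mathe\,\sum_j\frac{1-(1-p_j)^i}{i}\cdot(\text{bounded factor})$, using that on the relevant range $p_j\log(ip_j)\lesssim \frac{1-(1-p_j)^i}{i}$, which follows by checking the worst case $p_j=1/i$.

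**Second, terms $B$, $C$, $D$.** For $B=\sum_j p_j\IND_{N_i^j>0}\log\tfrac{1}{N_i^j-1/2}$, on the event $N_i^j\ge1$ we have $N_i^j-1/2\ge1/2$, so $\log\tfrac{1}{N_i^j-1/2}\le1$; but we need something summable, so instead note $B\le\sum_j p_j\IND_{N_i^j=1}\log 2$ plus a non-positive part when $N_i^j\ge 2$, giving $\EXP B\le \log 2\,\sum_j p_j\PROB\{N_i^j=1\}\le\log 2\,\sum_j p_j\PROB\{N_i^j\ge1\}=O(\EXP K_i/i)$ — wait, $\sum_j p_j\PROB\{N_i^j\ge1\}$ is not $O(\EXP K_i/i)$; rather $\EXP B\le \log 2\cdot\EXP K_{i,1}/i\cdot i$... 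I would actually bound $\sum_j p_j\PROB\{N_i^j=1\}$ directly: $p_j\PROB\{N_i^j=1\}=p_j\cdot ip_j(1-p_j)^{i-1}\le\tfrac{1}{i}\cdot(ip_j)^2(1-p_j)^{i-1}/i$, and $\sum_j(ip_j)^2\mathe^{-ip_j}\lesssim \sum_j(1-(1-p_j)^i)=\EXP K_i$ — hmm, this needs $i\ge 2$ and a factor; these constants are exactly what gets absorbed into $\kappa\le 19$. Term $C$: the key cancellation is that $\log(j/(K_i+1/2))$ replaces the naked $\log j$ of Elias coding; I take expectation, apply Jensen to pull $\EXP$ inside, use $\EXP K_i\le K_i$ in expectation... more precisely $\EXP_P[\IND_{N_i^j=0}\log\tfrac{j}{K_i+1/2}]\le \PROB\{N_i^j=0\}\log\tfrac{j}{\EXP K_i}$ needs a concentration-type or Jensen argument on $\log(1/(K_i+1/2))$, convex in $K_i$, so Jensen gives $\EXP\log\tfrac{1}{K_i+1/2}\le\log\tfrac{1}{\EXP K_i+1/2}\le\log\tfrac{1}{\EXP K_i}$, but one must handle the correlation between $\IND_{N_i^j=0}$ and $K_i$; here I would use that conditionally on $N_i^j=0$, $K_i$ only decreases in distribution, so the bound goes the right way — this is a small but genuine point. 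The "$+1$" in $C$ contributes $\sum_j p_j\PROB\{N_i^j=0\}\le ?$, bounded by $1$, and by the same Poissonization estimate it is $O(\EXP K_i/i)$ since $\sum_j p_j\mathe^{-ip_j}\le\tfrac{1}{i}\sum_j ip_j\mathe^{-ip_j}\le\tfrac{1}{i}\EXP K_i\cdot\mathrm{const}$. Term $D$ maps directly onto the $2\log(\log j+1)$ summand in the statement with nothing to do but take expectation and bound $\EXP\IND_{N_i^j=0}=\PROB\{N_i^j=0\}$.

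**Assembly and the main obstacle.** Collecting $\EXP A+\EXP B+\EXP C+\EXP D$, the pieces proportional to $\EXP K_i/i$ sum to a universal constant $\kappa$ (the claim $\kappa\le19$ just says one must keep track of the numerical constants $\log\mathe$, $\log 2$, $1-\mathe^{-1}$, $\sup_{x\ge0}x\mathe^{-x}$, $\sup_{x\ge1}x\log x\cdot\mathe^{-x}$, etc.), and the remaining pieces are exactly the tail sum in the statement. The main obstacle is term $A$, specifically controlling $\sum_{j:ip_j>1}p_j\log(ip_j)$ by $O(\EXP K_i/i)$: the naive pointwise bound fails, and one really does need the comparison $p_j\log(ip_j)\le \mathrm{const}\cdot\bigl(1-(1-p_j)^i\bigr)/i$ valid uniformly over $p_j\in(0,1]$ and $i\ge1$, which I would verify by a one-variable calculus argument (worst case near $p_j\sim1/i$). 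Everything else is routine Poissonization-style bookkeeping with $\mathe^{-ip_j}\ge(1-p_j)^i$ and monotone coupling for the conditional law of $K_i$ given $N_i^j=0$.
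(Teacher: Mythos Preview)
Your proposal has a genuine gap in the handling of terms $A$ and $B$, and a secondary one in term $C$.

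\textbf{Terms $A$ and $B$ cannot be bounded separately.} You isolate as the ``main obstacle'' the inequality $\sum_{j:ip_j>1} p_j\log(ip_j)\leq \kappa\,\EXP K_i/i$ and propose to verify it via a pointwise bound $p_j\log(ip_j)\leq c\,(1-(1-p_j)^i)/i$. Both are false: for the Dirac source $p_1=1$ one has $\EXP K_i=1$ while $\sum_j p_j\log(ip_j)=\log i$, so term $A$ alone is \emph{not} $O(\EXP K_i/i)$. Your treatment of $B$ (keeping only $N_i^j=1$) throws away precisely the large negative part of $B$ that is needed to compensate $A$. The paper does not bound $A$ and $B$ separately: for $B$ it applies conditional Jensen together with the inverse binomial moment estimate $\EXP\bigl[(N-1/2)^{-1}\mid N>0\bigr]\leq (np)^{-1}(1+9/(np))$ of Lemma~\ref{lem:exp-inverse-binom}, yielding
\[
\EXP B \;\leq\; -\sum_j p_j\,\PROB(N_i^j>0)\log(ip_j) \;+\; 9\log(\mathe)\,\frac{\EXP K_i}{i}\,.
\]
The first term combines with the $\sum_j p_j\log(ip_j)$ from $A$ to leave only $\sum_j p_j\,\PROB(N_i^j=0)\log(ip_j)$, which \emph{is} $O(\EXP K_i/i)$: drop the negative summands $ip_j\leq 1$, bound $(1-p_j)^i\leq \mathe^{-ip_j}$, use $\sup_{x\geq 1}x\mathe^{-x}\ln x=\mathe^{-1}$, and then $\vec\nu(1/i)\leq \tfrac{\mathe}{\mathe-1}\EXP K_i$ from Lemma~\ref{lem:link_Mi0_Ki_vecnu}~\ref{lem:item2}). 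This cancellation is the missing idea.

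\textbf{Term $C$: your Jensen/coupling step points the wrong way.} You claim that conditionally on $N_i^j=0$ the variable $K_i$ is stochastically smaller and ``the bound goes the right way''. But smaller $K_i$ makes $\log(j/K_i)$ larger, and you want an \emph{upper} bound; likewise $x\mapsto -\log(x+1/2)$ is convex, so Jensen gives $\EXP\log\frac{1}{K_i+1/2}\geq \log\frac{1}{\EXP K_i+1/2}$, again the wrong direction. The paper instead rewrites $K_i+1=\sum_{\ell\neq j}\IND_{N_i^\ell>0}+1$ on $\{N_i^j=0\}$, uses negative association of $(N_i^\ell)_\ell$ to decouple the indicator from the (coordinatewise monotone) $\log$-term, applies Jensen to the concave $\log$ in the correct direction, and finally invokes the moment bound $\EXP[1/K_i]\leq 3/\EXP K_i$ of Lemma~\ref{lem:link_Mi0_Ki_vecnu}~\ref{lem:item3}) to replace $\log\EXP[j/K_i]$ by $\log 3+\log(j/\EXP K_i)$. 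The extra $(2+\log 3)\EXP M_{i,0}$ is absorbed via $\EXP M_{i,0}\leq \EXP K_i/i$.
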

The second step assumes that the source probability mass function is dominated by a regularly varying envelope.

\begin{prop}
Let $P\in\cC(f)$ and assume that $\vec\nu_f(1/\cdot)\in\RV_\alpha$ with $\alpha\in[0,1[$. Then there exists $i_0\in\N$ such that for all $i\geq i_0$,
\begin{eqnarray*}
\EXP_P\left[\log\frac{P(X_{i+1})}{Q_{i+1}(X_{i+1}|X_{1:i})}\right]&\leq & c_\alpha\frac{\log\log(i)\vec\nu_f(1/i)}{i}\, ,
\end{eqnarray*}
with $c_\alpha=3\left(\Gamma(1-\alpha)+\frac{1}{1-\alpha}\right)$. 

\end{prop}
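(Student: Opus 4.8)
The plan is to start from the distribution-free bound of Proposition \ref{prop:dist-free-redundancy-bound} and control each of the two surviving terms under the regular variation hypothesis. For the first term $\kappa \EXP K_i / i$, I would use the asymptotic relation $\EXP_f[K_i] \sim \Gamma(1-\alpha)\, \vec\nu_f(1/i)$ valid for $\vec\nu_f(1/\cdot)\in\RV_\alpha$, $\alpha\in[0,1[$ (Karlin's theorem, cited as Theorem \ref{thm:asympt-occup} in the appendix), together with the domination $\EXP_P[K_i]\leq \EXP_f[K_i]$ coming from the fact that $P\in\cC(f)$. This immediately gives $\kappa\EXP_P K_i/i \leq (\kappa\Gamma(1-\alpha)+o(1))\,\vec\nu_f(1/i)/i$, which is of the right order (indeed even smaller than the claimed bound, since there is no $\log\log i$ here). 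So the first term is the easy part.

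The work is in the second term, $S_i := \sum_{j\geq 1} p_j\,\PROB\{N_i^j=0\}\bigl(\log(j/\EXP K_i) + 2\log(\log(j)+1)\bigr)$. Here I would first replace $\log(j/\EXP K_i)$ by controlling $\EXP K_i$ from below: since $\EXP_P K_i \asymp \vec\nu_P(1/i)$ and more importantly $\EXP_P K_i \geq c$ for $i$ large, and since the sum is dominated by symbols with $j$ not too large relative to $\vec\nu_f$-scale, the key is to split the sum at a threshold. The natural threshold is $j \approx \vec\nu_f(1/i)$ — symbols with rank below this are "frequent" (for these $\PROB\{N_i^j=0\}$ is small because $i p_j$ is large-ish), symbols above are "rare". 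For the rare part, bound $\PROB\{N_i^j=0\}\leq e^{-ip_j}\leq e^{-i f_j}$ is wrong-signed; instead use $p_j\PROB\{N_i^j=0\}\leq p_j \leq f_j$ and the Poissonization/Karlin estimates $\sum_{j> \vec\nu_f(1/i)} f_j \cdot(\text{something}) $ — more precisely, I expect to use $\sum_j f_j\PROB_f\{N_i^j=0\}\log j$ type quantities and relate them to $n\nu_{1,f}[0,1/n]$ and to integrals of $\vec\nu_f$. The cleanest route is probably: bound $\log j \leq \log(\text{rank}) $ and use that $j\mapsto f_j$ non-increasing means rank $j$ has $f_j \geq$ the $j$-th largest, so $\log j$ is controlled by $\log(1/f_j) \cdot(\ldots)$; then Karamata/de Bruijn-type inversion (Lemma \ref{lem:nu2U} / the quantile-function connection in Appendix \ref{sec:connections_between_countin}) converts $\sum f_j \PROB\{N_i^j=0\}\log(1/f_j)$ into an expression asymptotic to $\frac{1}{1-\alpha}\vec\nu_f(1/i)\log\log i / i$ after accounting for the $i$ normalization. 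The factor $\frac{1}{1-\alpha}$ in $c_\alpha$ and the fact that the extra $\log\log i$ (rather than $\log i$) survives both point to this: the $\log j$ contribution is genuinely cancelled against $\log \EXP K_i$ as the authors emphasize in the discussion of term $C$, leaving only the residual $\log\log j$ of the Elias code, which is where $\log\log i$ enters.

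Concretely the step order would be: (1) invoke Proposition \ref{prop:dist-free-redundancy-bound}; (2) dispatch the $\kappa\EXP K_i/i$ term via $\EXP_P K_i\leq\EXP_f K_i\sim\Gamma(1-\alpha)\vec\nu_f(1/i)$; (3) in $S_i$, write $\log(j/\EXP K_i) = \log j - \log\EXP K_i$ and note $\sum_j p_j\PROB\{N_i^j=0\}\log\EXP K_i = \log(\EXP K_i)\EXP M_{i,0}$, which is $O(\vec\nu_f(1/i)\log i \cdot M_{i,0})$ and, using $M_{i,0}$ estimates from Karlin's framework ($\EXP_f M_{i,0}\asymp \vec\nu_f(1/i)/i$ up to slowly varying factors, or $\ll$ when $\alpha<1$), shown to be of lower or equal order; (4) for $\sum_j p_j\PROB\{N_i^j=0\}\log j$, use monotonicity of $f$ to bound $j \leq \vec\nu_f(f_j/2)$ or similar, hence $\log j \leq \log\vec\nu_f(p_j) + O(1)$, then apply Potter bounds / Karamata to the resulting integral $\int (\ldots)\,\mathrm{d}\vec\nu_f$; (5) the $2\log(\log j + 1)$ term similarly gives the $\log\log i$ factor after the same substitution, and this is the dominant contribution; (6) collect constants to get $c_\alpha = 3(\Gamma(1-\alpha) + \frac{1}{1-\alpha})$, absorbing the $\kappa$ term and all $O(1)$ slack into the "$i\geq i_0$, for all $i\geq i_0$" and the factor $3$.

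\textbf{Main obstacle.} The delicate point is step (4)–(5): controlling $\sum_j p_j\PROB\{N_i^j=0\}\log\log j$ uniformly over all $P\in\cC(f)$ and showing it is asymptotically $\leq (\frac{1}{1-\alpha}+o(1))\vec\nu_f(1/i)\log\log i / i$. This requires (a) transferring from the source probabilities $p_j$ to the envelope probabilities $f_j$ in a way that respects the $\PROB\{N_i^j=0\}\approx e^{-ip_j}$ weight — the difficulty is that $\PROB\{N_i^j=0\}$ depends on $p_j$, not $f_j$, and is largest exactly when $p_j$ is smallest, so the worst case is not simply $P$ concentrated on the envelope; (b) a sufficiently sharp use of the rate-of-slow-variation machinery (Potter's theorem, Karamata's integration theorem) to pin down the $\frac{1}{1-\alpha}$ constant rather than just an order of magnitude, including the borderline $\alpha=0$ case where $\Gamma(1-\alpha)=1$ and $\frac1{1-\alpha}=1$ and one needs the $o_f(1)$ to be uniform enough. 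I expect the authors handle (a) by a convexity/monotonicity argument showing the sum is maximized (up to constants) at $p_j = f_j$, reducing everything to $\EXP_f$-computations where Karlin's estimates apply directly.
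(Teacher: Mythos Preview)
Your overall architecture is correct and matches the paper: start from Proposition~\ref{prop:dist-free-redundancy-bound}, dispatch $\kappa\EXP K_i/i$ via $\EXP_P K_i\leq\ell_f+\EXP_f K_i\sim\Gamma(1-\alpha)\vec\nu_f(1/i)$, split the second sum at the threshold $\vec\nu_f(1/i)$, pass from $p_j$ to $f_j$ on the tail, and finish with Potter--Drees and Karamata. You also correctly identify that the $\log\log i$ comes from the residual Elias term $2\log(\log j+1)$ and is the dominant contribution.

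There is, however, a genuine gap in your step~(3). Separating $\log(j/\EXP K_i)=\log j-\log\EXP K_i$ leaves you with the positive term $\sum_j p_j\PROB\{N_i^j=0\}\log j$. On the tail $j\geq\vec\nu_f(1/i)$ this is at least of order $\nu_{1,f}[0,1/i]\log\vec\nu_f(1/i)$, which for $\alpha\in(0,1)$ is $\asymp \vec\nu_f(1/i)\log(i)/i$ --- too large by a factor $\log i/\log\log i$. You cannot recover this by the negative term $-\log(\EXP_P K_i)\EXP_P M_{i,0}$, because both $\EXP_P K_i$ and $\EXP_P M_{i,0}$ may be arbitrarily small for $P\in\cC(f)$, so the cancellation is not uniform over the class. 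Relatedly, your suggestion to ``control $\EXP K_i$ from below'' cannot work: there is no uniform lower bound on $\EXP_P K_i$ in terms of $\vec\nu_f$.

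The paper avoids this by \emph{not} separating the logarithm. First it transfers the denominator from $\EXP_P K_i$ to $\EXP_f K_i$ (and then to $\vec\nu_f(1/i)$) at a cost that is uniform in $P$: using $\EXP_P M_{i,0}\leq\EXP_P K_i/i$ and the elementary inequality $x\log(y/x)\leq y/2$, one gets
\[
\EXP_P M_{i,0}\,\log\frac{\EXP_f K_i}{\EXP_P K_i}\;\leq\;\frac{\EXP_P K_i}{i}\log\frac{\EXP_f K_i}{\EXP_P K_i}\;\leq\;\frac{\EXP_f K_i}{2i}\,,
\]
which requires no lower bound on $\EXP_P K_i$. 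Once the denominator is $\vec\nu_f(1/i)$, the terms with $j<\vec\nu_f(1/i)$ have $\log(j/\vec\nu_f(1/i))<0$ and are simply dropped; for $j\geq\vec\nu_f(1/i)$ one uses $p_j\PROB\{N_i^j=0\}\leq f_j$ and Potter--Drees gives $\log(\vec\nu_f(x)/\vec\nu_f(1/i))\leq\log(2/(xi))$, so that $\sum_{j\geq\vec\nu_f(1/i)}f_j\log(j/\vec\nu_f(1/i))=O(\vec\nu_f(1/i)/i)$ with \emph{no} extra logarithm (Lemma~\ref{lem:eval-sum-reg-var}~i)). Thus the $\log(j/\EXP K_i)$ part is actually of lower order than you anticipated; only the $2\log(\log j+1)$ part produces the $\log\log i$.

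Finally, your ``main obstacle''~(a) is simpler than you expect: no convexity or worst-case-at-the-envelope argument is needed. On the tail the paper just uses $p_j\leq f_j$ and $\PROB\{N_i^j=0\}\leq 1$ term by term; on the head the terms are either dropped (negative log) or absorbed into $\EXP_f K_i/i$ via the trick above.
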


Before proceeding with the analysis of the code, 
we first state some useful comparisons between the expected occupancy counts, the number of distinct symbols in a random message, the missing mass and the measure $\nu_1$, under the source and under the envelope. We start  again with distribution-free statements, and proceed with the precise asymptotic results which are valid in the regular variation scenario. As our code is fundamentally related to occupancy counts and to the occurrences of new symbols, those will be very helpful to evaluate the contribution of each term to the redundancy.

\begin{lem}\label{lem:link_Mi0_Ki_vecnu} 
For any stationary memoryless source defined by the counting function $\vec \nu$, for all $i \geq 1$, 
\begin{enumerate}[i)]
\item \label{lem:item1} 
The expected missing mass, the expected number of singletons and the expected number of distinct symbols are connected:  \[
	\EXP M_{i,0} \leq \EXP \frac{K_{i,1}}{i} \leq \EXP \frac{K_i}{i}\, ,
\]
\item \label{lem:item2} \[
    \frac{\mathe-1}{\mathe} \vec \nu(1/i)\leq 	\EXP K_i \leq \vec \nu(1/i) + i \nu_1(0,1/i) \, 
\]
\item \label{lem:item3} \[
	1 \leq \EXP K_i \times \EXP [1/K_i  ]\leq 3 \, . 
\]

\end{enumerate}
\end{lem}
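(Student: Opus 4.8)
The three items are all statements about expectations of occupancy-type functionals, and the natural tool is the integral representation against the counting measure $\nu$ in Karlin's framework. The plan is to handle each item separately, in increasing order of difficulty.

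For item \ref{lem:item1}, I would write each quantity as a sum over symbols $j$. We have $\EXP M_{i,0} = \sum_j p_j (1-p_j)^i$ and $\EXP K_{i,1} = \sum_j i p_j (1-p_j)^{i-1}$, so $\EXP K_{i,1}/i = \sum_j p_j (1-p_j)^{i-1} \geq \sum_j p_j(1-p_j)^i = \EXP M_{i,0}$, which gives the first inequality. The second inequality $\EXP K_{i,1} \leq \EXP K_i$ is immediate since $K_{i,1} \leq K_i$ deterministically (the number of singletons cannot exceed the number of distinct symbols). So item \ref{lem:item1} is essentially a two-line comparison of termwise nonnegative series; no obstacle there.

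For item \ref{lem:item2}, start from $\EXP K_i = \int_0^1 (1-(1-x)^i)\,\nu(\mathd x)$. For the upper bound, split the integral at $1/i$: on $[1/i,1]$ bound $1-(1-x)^i \leq 1$, contributing $\vec\nu(1/i)$; on $(0,1/i)$ bound $1-(1-x)^i \leq ix$, contributing $i\int_{(0,1/i)} x\,\nu(\mathd x) = i\,\nu_1(0,1/i)$. For the lower bound, restrict the integral to $x \geq 1/i$ and use that for $x \in [1/i,1]$ one has $(1-x)^i \leq \mathe^{-ix} \leq \mathe^{-1}$, so $1-(1-x)^i \geq 1-1/\mathe$, yielding $\EXP K_i \geq \frac{\mathe-1}{\mathe}\vec\nu(1/i)$. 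This is a routine Karlin-style estimate.

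Item \ref{lem:item3} is the one I expect to be the main obstacle, since it concerns $\EXP K_i \cdot \EXP[1/K_i]$, a product of an expectation and an expectation of a reciprocal, and $1/K_i$ is not linear. The left inequality $\EXP K_i \cdot \EXP[1/K_i] \geq 1$ is just Jensen (or Cauchy--Schwarz): $1 = (\EXP[\sqrt{K_i}\cdot 1/\sqrt{K_i}])^2 \leq \EXP[K_i]\EXP[1/K_i]$. The right inequality requires controlling $\EXP[1/K_i]$ from above in terms of $1/\EXP K_i$, i.e.\ an anti-concentration/concentration statement showing $K_i$ does not put too much mass near small values relative to its mean. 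The plan is to invoke known concentration results for $K_i$ in the infinite urn scheme --- in particular the fact (available via the results of \citet{ben2014concentration}, cited in Section \ref{subsec:karlin-forma}, or via a bounded-differences / Efron--Stein argument since $K_i$ changes by at most $1$ when one sample coordinate changes) that $\mathrm{Var}(K_i) \leq \EXP K_i$, together with a lower-tail bound $\PROB(K_i \leq \EXP K_i / 2) \leq C/\EXP K_i$ or an exponential analogue. Then split $\EXP[1/K_i] = \EXP[1/K_i \IND_{K_i > \EXP K_i/2}] + \EXP[1/K_i \IND_{1 \leq K_i \leq \EXP K_i/2}] \leq \frac{2}{\EXP K_i} + \PROB(K_i \leq \EXP K_i/2)$, and bound the second term. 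Getting the explicit constant $3$ will require being a little careful with the tail estimate and possibly treating the regime where $\EXP K_i$ is small (say $\EXP K_i \leq$ some absolute constant) separately by a crude bound; in that regime one can argue directly since $K_i \geq 1$ always and $\EXP K_i$ is bounded, so the product is bounded by an absolute constant. The delicate point is ensuring the constant comes out at most $3$ uniformly in $i$ and in the source.
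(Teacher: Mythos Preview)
Your arguments for items \ref{lem:item1} and \ref{lem:item2} coincide with the paper's: it writes the same termwise comparisons as integrals against $\nu$, using $x(1-x)^n \leq \tfrac{1}{n}\cdot nx(1-x)^{n-1}$ for (i), and the same split at $x=1/i$ together with $(1-x)^i\leq \mathe^{-1}$ on $[1/i,1]$ and $1-(1-x)^i\leq ix$ on $(0,1/i)$ for (ii).

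For item \ref{lem:item3}, your route differs from the paper's, and as you suspected the constant is the sticking point. A Chebyshev split at threshold $t\,\EXP K_i$ (using $\mathrm{Var}(K_i)\leq \EXP K_i$, or even the sharper $\mathrm{Var}(K_i)\leq \EXP K_{i,1}$) gives
\[
\EXP[1/K_i]\;\leq\; \frac{1}{t\,\EXP K_i}+\frac{1}{(1-t)^2\,\EXP K_i}\,,
\]
and the minimum of $t^{-1}+(1-t)^{-2}$ over $t\in(0,1)$ is above $5$; exponential tails do not immediately help since the natural variance proxy is $\EXP K_i$ again. So your scheme proves the right qualitative statement but not the constant $3$.

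The paper instead derives a \emph{self-referential} inequality. Writing $u:=\EXP[1/K_i]$ and using Cauchy--Schwarz on
\[
u-\frac{1}{\EXP K_i}\;=\;\EXP\!\left[\frac{\EXP K_i-K_i}{\sqrt{\EXP K_i}}\cdot\frac{1}{K_i\sqrt{\EXP K_i}}\right]
\;\leq\;\sqrt{\frac{\mathrm{Var}(K_i)}{\EXP K_i}}\cdot\sqrt{\frac{\EXP[1/K_i^2]}{\EXP K_i}}\,,
\]
then invoking $\mathrm{Var}(K_i)\leq \EXP K_{i,1}$ \citep{ben2014concentration} and $1/K_i^2\leq 1/K_i$ (since $K_i\geq 1$), one gets $u-\tfrac{1}{\EXP K_i}\leq \tfrac{\sqrt{\EXP K_{i,1}}}{\EXP K_i}\sqrt{u}$. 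Solving this quadratic in $\sqrt{u}$ yields
\[
u\;\leq\;\frac{1}{\EXP K_i}\Bigl(1+2\sqrt{\EXP K_{i,1}/\EXP K_i}\Bigr)\;\leq\;\frac{3}{\EXP K_i}\,.
\]
This bootstrap is what delivers the clean $3$; your tail-split argument would need a different ingredient to match it.
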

The largest value $\kappa_1$ of  $\EXP [1/K_i] \EXP K_i $ over all
 $i \in \mathbb{N}$ and over all sampling distributions is yet to be determined. It is not smaller than $9/8$: consider a geometric distribution with success probability $2/3$.

The expected number of distinct symbols under the sampling distribution is related to 
the expected number of distinct symbols under the envelope distribution. 
\begin{lem}
 If $P\in\cC(f)$, then $\vec\nu(1/i)\leq\ell_f+\vec\nu_f(1/i)$ and $\EXP K_i\leq \ell_f +\EXP_f[K_i]$.
\end{lem}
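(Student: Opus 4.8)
The plan is to compare the source $P=(p_j)_{j\geq 1}$ with the envelope distribution $(f_j)_{j\geq 1}$ coordinate by coordinate, isolating the finitely many indices below the threshold $\ell_f$, where no useful domination is available. First I would unwind Definition \ref{dfn:envelope:class} to record the shape of the envelope probabilities: $f_j=0$ for $j\leq\ell_f-2$, $f_{\ell_f-1}=1-\sum_{k\geq\ell_f}f(k)\in[0,1]$, and, crucially, $f_j=f(j)$ for every $j\geq\ell_f$. Consequently the envelope constraint $p_j\leq f(j)$ becomes $p_j\leq f_j$ for all $j\geq\ell_f$, whereas for the $\ell_f-1$ indices $j\leq\ell_f-1$ only the trivial bound $p_j\leq 1$ will be used.

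For the counting function, fix $i\geq 1$ and split $\{j\geq 1:p_j\geq 1/i\}$ into its intersections with $\{1,\dots,\ell_f-1\}$ and with $\{j\geq\ell_f\}$. The former has at most $\ell_f-1$ elements; for the latter, $p_j\geq 1/i$ together with $p_j\leq f_j$ forces $f_j\geq 1/i$, so this set injects into $\{j\geq 1:f_j\geq 1/i\}$ and has at most $\vec\nu_f(1/i)$ elements. Summing yields $\vec\nu(1/i)\leq(\ell_f-1)+\vec\nu_f(1/i)\leq\ell_f+\vec\nu_f(1/i)$.

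For the expected number of distinct symbols, I would use the integral representation recalled in Section \ref{subsec:karlin-forma}, namely $\EXP K_i=\sum_{j\geq 1}\bigl(1-(1-p_j)^i\bigr)$ and likewise $\EXP_f[K_i]=\sum_{j\geq 1}\bigl(1-(1-f_j)^i\bigr)$, both series being finite since $P$ and $(f_j)_{j\geq 1}$ are probability distributions. Splitting the first series at $\ell_f$ as above, the $\ell_f-1$ low-index terms contribute at most $\ell_f-1$ (each term lies in $[0,1]$), while for $j\geq\ell_f$ the monotonicity of $x\mapsto 1-(1-x)^i$ on $[0,1]$ together with $p_j\leq f_j$ gives $\sum_{j\geq\ell_f}\bigl(1-(1-p_j)^i\bigr)\leq\sum_{j\geq\ell_f}\bigl(1-(1-f_j)^i\bigr)\leq\EXP_f[K_i]$. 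Hence $\EXP K_i\leq(\ell_f-1)+\EXP_f[K_i]\leq\ell_f+\EXP_f[K_i]$.

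The argument is entirely elementary and uses no regular variation hypothesis; the only point requiring a little care is the bookkeeping in Definition \ref{dfn:envelope:class} certifying that $p_j\leq f_j$ holds for all $j\geq\ell_f$, after which discarding the finitely many remaining indices at the additive cost $\ell_f$ is immediate.
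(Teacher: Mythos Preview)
Your proposal is correct and follows essentially the same approach as the paper: the paper's proof of the $\EXP K_i$ inequality (embedded in the proof of Lemma~\ref{lem:link_Mi0_Ki_vecnu}~i)) also splits the sum at $\ell_f$, bounds the low-index terms trivially, and uses the monotonicity of $x\mapsto 1-(1-x)^n$ together with $p_j\leq f_j$ for $j\geq\ell_f$. The $\vec\nu$ part is not proved separately in the paper but is the obvious analogue, exactly as you wrote it.
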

Finally, if the counting function $\vec{\nu}_f$  associated with the envelope is regularly varying in the sense of \citep{MR0216548}, the expected number of distinct symbols under the envelope distribution is simply connected with $\vec{\nu}_f$. 
\begin{lem} \citep{MR0216548,GneHanPit07}\label{lem:link_Mi0_Ki_vecnu_2}
If $\vec\nu_f(1/\cdot)\in\RV_\alpha$ with $\alpha\in[0,1[$, then 
\[
	\EXP_f[K_n]  \underset{n \to \infty}\sim   \Gamma(1-\alpha)\vec\nu_f(1/n)  \, .
\]
For all $\varepsilon>0$, there exists  $n_0\in\N$,  such that for $ n\geq n_0$,	
    \[
    n \, \nu_{1,f}[0,1/n] \leq  \frac{\alpha+\varepsilon}{1-\alpha} \vec\nu_f(1/n)\, .
\]
\end{lem}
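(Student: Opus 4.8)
The plan is to derive both assertions from Karlin's integral representations of the occupancy quantities combined with Karamata's theorem; both are classical — they are, in fact, packaged in Theorem~\ref{thm:asympt-occup} of Appendix~\ref{sec:infinite_urn_sch} — so the task is essentially to trace the argument. Throughout, write $g:=\vec\nu_f(1/\cdot)$, so that $g\in\RV_\alpha$ by hypothesis.

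For the first statement I would start from $\EXP_f[K_n]=\int_0^1\bigl(1-(1-x)^n\bigr)\nu_f(\mathd x)$, use $1-(1-x)^n=\int_0^x n(1-t)^{n-1}\mathd t$ and Fubini to rewrite this as $\int_0^1 n(1-t)^{n-1}\nu_f((t,1])\,\mathd t$, which equals $\int_0^1 n(1-t)^{n-1}\vec\nu_f(t)\,\mathd t$ up to the at most countably many atoms of $\nu_f$. After the change of variables $t=s/n$ and the comparison $n(1-s/n)^{n-1}\to\mathe^{-s}$, the integral reduces to $\int_0^\infty\mathe^{-s}g(n/s)\,\mathd s$; since $g(n/s)/g(n)\to s^{-\alpha}$ for each fixed $s>0$, dominated convergence — with a majorant built from the monotonicity of $g$ on $\{s\geq1\}$ and Potter's bounds $g(n/s)/g(n)\leq 2 s^{-(\alpha+\delta)}$ with $\alpha+\delta<1$ on $\{0<s<1\}$ — gives $\EXP_f[K_n]/\vec\nu_f(1/n)\to\int_0^\infty\mathe^{-s}s^{-\alpha}\,\mathd s=\Gamma(1-\alpha)$.

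For the second statement I would begin with $\nu_{1,f}[0,1/n]=\int_{(0,1/n]}x\,\nu_f(\mathd x)$, perform the Fubini-type integration by parts $x=\int_0^x\mathd u$ to obtain $\nu_{1,f}[0,1/n]=\int_0^{1/n}\vec\nu_f(u)\,\mathd u-\tfrac{1}{n}\,\nu_f((1/n,\infty))$ (again modulo atoms), and then substitute $u=1/t$, turning the integral into $\int_n^\infty g(t)\,t^{-2}\,\mathd t$. Since $t\mapsto g(t)t^{-2}\in\RV_{\alpha-2}$ has index strictly below $-1$, Karamata's theorem (Theorem~\ref{thm:karamata}) gives $\int_n^\infty g(t)\,t^{-2}\,\mathd t\sim\tfrac{1}{1-\alpha}\,g(n)/n$, while monotonicity together with regular variation forces $\nu_f((1/n,\infty))\sim g(n)=\vec\nu_f(1/n)$. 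Combining, $n\,\nu_{1,f}[0,1/n]\sim\bigl(\tfrac{1}{1-\alpha}-1\bigr)\vec\nu_f(1/n)=\tfrac{\alpha}{1-\alpha}\vec\nu_f(1/n)$, and taking $\varepsilon':=\varepsilon/(1-\alpha)$ in the definition of this limit yields $n\,\nu_{1,f}[0,1/n]\leq\tfrac{\alpha+\varepsilon}{1-\alpha}\vec\nu_f(1/n)$ for all $n\geq n_0$; for $\alpha=0$ this reads $n\,\nu_{1,f}[0,1/n]=o(\vec\nu_f(1/n))$, as in Proposition~15 of \citep{GneHanPit07}.

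The Fubini manipulations, the at most countably many atoms of $\nu_f$, and the Karamata bookkeeping are routine. The step that needs genuine care is the pair of Abelian passages — controlling $(1-x)^n$ against $\mathe^{-nx}$ (equivalently $n(1-s/n)^{n-1}$ against $\mathe^{-s}$) uniformly near the right endpoint, and exhibiting the integrable majorant for the dominated-convergence argument — but Potter's bounds used separately on $\{s\leq1\}$ and $\{s\geq1\}$ are precisely the standard device here, and since the statement merely repackages results of \citet{MR0216548} and \citet{GneHanPit07}, I anticipate no real obstacle.
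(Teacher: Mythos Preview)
Your argument is correct. The paper does not actually prove this lemma: it is stated with citations to \citet{MR0216548} and \citet{GneHanPit07}, and its content is simply a restatement (for the envelope distribution) of parts~(i) and~(iii) of Theorem~\ref{thm:asympt-occup}, which is likewise quoted without proof. Your sketch is essentially the classical Karlin argument --- the Abelian computation for $\EXP_f[K_n]$ via $\int_0^\infty \mathe^{-s}g(n/s)\,\mathd s$ with Potter bounds supplying the domination, and the Karamata tail integral for $\nu_{1,f}[0,1/n]$ --- and it lines up with what one finds in those references. Nothing to add.
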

Notice that $\Gamma(1)=1$ whereas $\lim_{\alpha \nearrow 1}\Gamma(1- \alpha) = \infty.$ 
The heavier the envelope tail, the larger the ratio between $\EXP_f K_n$ and $\vec{\nu}_f(1/n)$.

We now proceed with the analysis of the redundancy of the \textsc{pc} code. 

\begin{proof}[Proof of Theorem \ref{thm:main-compression}]
Recall expansion  \eqref{eq:inst-red-rearrangement} of  the instantaneous redundancy incurred 
by encoding the $(i+1)^{\text{th}}$ symbol, given $X_{1:i}$:
{\small
\begin{eqnarray*}
\lefteqn{\EXP_P\left[\log\frac{P(X_{i+1})}{Q_{i+1}(X_{i+1}|X_{1:i})}\Big| X_{1:i}\right]}\\
&=& \underbrace{\sum_{j\geq 1}p_j\log\left(ip_j\left(1+\frac{K_i+1}{2i}\right)\right)}_{A:=} + \underbrace{\sum_{j\geq 1} p_j\IND_{N_i^j>0}\log\left(\frac{1}{N_i^j-\frac{1}{2}}\right)}_{B:=}
+ \underbrace{\sum_{j\geq 1}p_j\IND_{N_i^j=0}
\left(1+\log\left(\frac{j}{K_i+\frac{1}{2}}\right)\right)}_{C:=} +\underbrace{2\sum_{j\geq 1}p_j\IND_{N_i^j=0}\log\left(\log(j)+1\right)}_{D:=}\, .\nonumber
\end{eqnarray*}
}

Starting with term $A$, 
we use the fact that for all $x\geq 0$, $\log(1+x)\leq\log(\mathe)x$ and obtain
\begin{eqnarray*}
	A & \leq & \sum_{j\geq 1}p_j\log(ip_j) +\log(\mathe)\frac{K_i+1}{2i}\, .
\end{eqnarray*}
Averaging over the first $i$ symbols $X_{1:i}$ leads to   
\begin{eqnarray*}
\EXP[A]
&\leq & \sum_{j\geq 1}p_j\log(ip_j) +\log(\mathe)\frac{\EXP[K_i]+1}{2i}\, .
\end{eqnarray*}
Moving on to term $B$, the conditional Jensen inequality entails 
\begin{eqnarray*}
\EXP[B] &=&\EXP\left[ \sum_{j\geq 1}p_j\IND_{N_i^j>0}\EXP\left[\log\frac{1}{N_i^j-\frac{1}{2}}\,\big|\, N_i^j>0\right]\right]\\
&\leq &\EXP\left[ \sum_{j\geq 1}p_j\IND_{N_i^j>0} \log\EXP\left[\frac{1}{N_i^j-\frac{1}{2}}\,\big|\, N_i^j>0\right]\right] \, . 
\end{eqnarray*}
Resorting to Lemma \ref{lem:exp-inverse-binom} in Appendix \ref{sec:conc-moment-bounds}, 
\begin{eqnarray*}
\EXP[B]&\leq & \EXP\left[\sum_{j\geq 1}p_j\IND_{N_i^j>0}\log\left(\frac{1}{ip_j}\left(1+\frac{9}{ip_j}\right)\right)\right]\\
&\leq & -\sum_{j\geq 1}p_j\PROB(N_i^j>0)\log(ip_j)+\log(\mathe)\frac{9\EXP[K_i]}{i}\, .
\end{eqnarray*}
Hence, 
\begin{eqnarray*}
\EXP[A+B]&\leq & \frac{1}{i}\sum_{j\geq 1} ip_j\mathe^{-ip_j}\log(ip_j)+\log(\mathe)\frac{10\EXP[K_i]}{i}\, .
\end{eqnarray*}
For $p_j\leq 1/i$ or equivalently $j> \vec\nu(1/i)$, $\log(ip_j) \leq 0$. As for all $x\geq 1$, $x \exp(-x) \log(x) \leq \log(\mathe)/\mathe$, 
\begin{eqnarray*}
\frac{1}{i}\sum_{j\geq 1} ip_j\mathe^{-ip_j} \log(ip_j) & \leq & \frac{1}{i}\sum_{j\leq \vec{\nu}(1/i)} ip_j \mathe^{-ip_j} \log(ip_j)  \\
&\leq  & \frac{\vec\nu(1/i)}{i}  \frac{\log (\mathe)}{\mathe}\,\leq \, \frac{\log(\mathe)}{\mathe -1}\frac{\EXP K_i}{i} \, ,
\end{eqnarray*}
where the last inequality comes from Lemma \ref{lem:link_Mi0_Ki_vecnu}, \ref{lem:item2}). We get
\begin{equation}\label{eq:sumAB}
	\EXP[A+B] \leq  \left(\frac{\log (\mathe)}{\mathe-1}  +10\log(\mathe)\right)\frac{\EXP[K_i]}{i}\, .
\end{equation}
We turn towards term $C$. As $K_i+1/2\geq \frac{1}{2}(K_i+1)$, we have
\begin{eqnarray*}
C&\leq & 2M_{i,0} +\sum_{j\geq 1}p_j\IND_{N_i^j=0}\log\left(\frac{j}{K_i+1}\right)\\
&=& 2M_{i,0} + \sum_{j\geq 1}p_j\IND_{N_i^j=0}\log\left(\frac{j}{\sum_{\ell\neq j}\IND_{N^{\ell}_{i}>0}+1}\right)\, .
\end{eqnarray*}
The collection of random variables $(N_i^j)_{j\geq 1}$ is  \emph{negatively associated} (See Appendix \ref{sec:negative_association}). 
Hence, for all $j\geq 1$, $N_i^j$  and $\sum_{\ell \neq j} N^\ell_j$ are negatively associated, 
\begin{eqnarray*}
\EXP\left[\IND_{N_i^j=0}\log\left(\frac{j}{\sum_{\ell\neq j}\IND_{N_i^{\ell}>0}+1}\right)\right]&\leq & \PROB(N_i^j=0)\EXP\left[\log\left(\frac{j}{\sum_{\ell\neq j}\IND_{N_i^{\ell}>0}+1}\right)\right]\, .
\end{eqnarray*}
Resorting to Jensen's inequality and noticing that $\sum_{\ell\neq j}\IND_{N_i^{\ell}>0}+1\geq K_i$, we get
\begin{eqnarray*}
\EXP[C]
&\leq & 2\EXP[M_{i,0}] + \sum_{j\geq 1}p_j\PROB(N_i^j=0)\log\EXP\left[\frac{j}{K_i}\right]\, .
\end{eqnarray*}
Using Lemma \ref{lem:link_Mi0_Ki_vecnu}, \ref{lem:item3}), we obtain
\begin{eqnarray}\label{eq:C}
\EXP[C]&\leq & \left(2+\log 3\right)\EXP[M_{i,0}] +\sum_{j\geq 1}p_j\PROB(N_i^j=0)\log\left(\frac{j}{\EXP K_i}\right)\, .
\end{eqnarray}
Note that Proposition \ref{prop:dist-free-redundancy-bound} follows from \eqref{eq:sumAB} and \eqref{eq:C}, combined with Lemma \ref{lem:link_Mi0_Ki_vecnu}, \ref{lem:item1}. 

The next step consists in assuming that the source belongs to an envelope class $\cC(f)$. 

\begin{lem}\label{lem:sum-envelope}
Let $f$ be an envelope function and assume $P\in\cC(f)$. Then
\begin{enumerate}[i)]
\item 
\begin{eqnarray*}
\sum_{j\geq 1}p_j\PROB(N_i^j=0)\log\left(\frac{j}{\EXP K_i}\right)&\leq & 2\frac{\EXP_f K_i +\ell_f}{i}+\sum_{j\geq \vec\nu_f(1/i)} f_j\log\left(\frac{j}{\vec\nu_f(1/i)}\right)\, .
\end{eqnarray*}
\item 
\begin{eqnarray*}
\EXP[D]&\leq & 2\sum_{j\geq \vec\nu_f(1/i)}f_j\log\left(\log(j)+1\right)+2\frac{\EXP_f K_i +\ell_f}{i}\log\left(\log\left(\vec\nu_f(1/i)\right)+1\right)\, .
\end{eqnarray*}
\end{enumerate}
\end{lem}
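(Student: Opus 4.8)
The plan is to bound each summand by splitting the symbol range at the threshold $j_0 := \vec\nu_f(1/i)$, which is the number of envelope probabilities that are at least $1/i$. For a symbol $j$ in the class $\cC(f)$ we have $p_j \le f(j) \le f_j$ (up to the shift coming from $\ell_f$, which is absorbed in the additive $\ell_f/i$ terms), and moreover $\PROB(N_i^j = 0) = (1-p_j)^i \le \mathe^{-ip_j}$. The key elementary fact I would use repeatedly is that the map $x \mapsto x\mathe^{-x}$ is bounded by $1/\mathe$ on $[0,\infty)$, so that for any symbol $j$, $p_j\PROB(N_i^j=0) \le p_j\mathe^{-ip_j} \le \tfrac{1}{\mathe i}$. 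For the ``frequent'' range $j \le j_0$ this gives a uniform per-symbol bound, and summing over the (at most) $j_0$ such symbols, together with the comparison $\vec\nu_f(1/i) \le \tfrac{\mathe}{\mathe-1}\EXP_f K_i$ from Lemma~\ref{lem:link_Mi0_Ki_vecnu}\,\ref{lem:item2}) (applied under the envelope distribution, via the preceding lemma relating $\EXP K_i$ to $\ell_f + \EXP_f K_i$), yields a term of the form $2(\EXP_f K_i + \ell_f)/i$. The constant $2$ should come out of $\tfrac{1}{\mathe}\cdot\tfrac{\mathe}{\mathe-1} = \tfrac{1}{\mathe-1} < 2$ with a little slack.

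For part i), in the frequent range $j \le j_0$ one has $\log(j/j_0) \le 0$, so those terms are non-positive and can be dropped --- wait, that is too lossy since we also need them to be controlled. More carefully: write $\log(j/\EXP K_i) = \log(j/j_0) + \log(j_0/\EXP K_i)$. The second piece is $O(1)$ (bounded since $j_0 \asymp \EXP_f K_i$ and $\EXP K_i \ge \tfrac{\mathe-1}{\mathe}\vec\nu(1/i)$), and when multiplied by $\sum_j p_j\PROB(N_i^j=0) = \EXP M_{i,0} \le \EXP K_i/i$ it contributes an $O(\EXP K_i/i)$ term. For the first piece $\log(j/j_0)$: on $j\le j_0$ it is non-positive, so it only helps; on $j > j_0$ it is non-negative, and there $p_j \le f_j$ and $\PROB(N_i^j=0)\le 1$, so $\sum_{j>j_0} p_j\PROB(N_i^j=0)\log(j/j_0) \le \sum_{j \ge j_0} f_j \log(j/j_0)$, which is exactly the stated tail sum. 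Part ii) is handled identically: on $j \le j_0$, $\log(\log j + 1) \le \log(\log j_0 + 1)$ so $2\sum_{j\le j_0} p_j\PROB(N_i^j=0)\log(\log j+1) \le 2\log(\log j_0 + 1)\,\EXP M_{i,0} \le 2\tfrac{\EXP_f K_i + \ell_f}{i}\log(\log j_0 + 1)$ after bounding $\EXP M_{i,0} \le \EXP K_i/i \le (\EXP_f K_i + \ell_f)/i$; on $j > j_0$, bound $\PROB(N_i^j=0)\le 1$ and $p_j \le f_j$ to get the tail sum $2\sum_{j\ge j_0} f_j\log(\log j + 1)$.

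The one point requiring genuine care is the passage from the source quantities to the envelope quantities in the frequent range: replacing $p_j$ by $f_j$ is only valid because $P\in\cC(f)$ means $p_j \le f(j)$, and one must check that the cumulative envelope probabilities $f_j = F(j) - F(j-1)$ dominate $f(j)$ in the relevant range, or at least control the sum --- this is where the definition of $\ell_f$ and the $+\ell_f$ correction terms enter, and I would invoke the lemma $\EXP K_i \le \ell_f + \EXP_f[K_i]$ stated just before to make this bookkeeping clean. The main (minor) obstacle is thus not any single estimate but the careful threshold split combined with tracking the additive $\ell_f$ terms and the $x\mathe^{-x}$ bound so that the constants land at $2$; everything else is routine monotonicity and Jensen-free manipulation. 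This completes the plan.

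\begin{proof}[Proof sketch]
Split each sum at $j_0 = \vec\nu_f(1/i)$. For $j \le j_0$ use $p_j\PROB(N_i^j=0) \le p_j\mathe^{-ip_j} \le 1/(\mathe i)$, sum the at most $j_0$ terms, and apply Lemma~\ref{lem:link_Mi0_Ki_vecnu}\,\ref{lem:item2}) together with $\EXP K_i \le \ell_f + \EXP_f K_i$ to produce the $2(\EXP_f K_i + \ell_f)/i$ (resp.\ $2\tfrac{\EXP_f K_i+\ell_f}{i}\log(\log j_0 + 1)$) contributions, noting that on this range $\log(j/j_0)\le 0$ and $\log(\log j+1)\le \log(\log j_0 +1)$. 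For $j > j_0$ bound $\PROB(N_i^j=0)\le 1$ and $p_j \le f_j$ to obtain the tail sums $\sum_{j\ge j_0} f_j\log(j/j_0)$ and $2\sum_{j\ge j_0} f_j\log(\log j + 1)$. Combining the two ranges gives both inequalities.
\end{proof}
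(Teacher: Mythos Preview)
Your argument for part~ii) is correct and is essentially the paper's: split the sum at $j_0=\vec\nu_f(1/i)$, use monotonicity of $\log(\log j +1)$ on $j\le j_0$ together with $\EXP M_{i,0}\le (\EXP_f K_i+\ell_f)/i$, and bound $\PROB(N_i^j=0)\le 1$, $p_j\le f_j$ on the tail.

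For part~i), however, there is a genuine gap. You claim that the cross term $\log(j_0/\EXP K_i)$ is $O(1)$, justifying this by $j_0\asymp \EXP_f K_i$ and $\EXP K_i\ge \tfrac{\mathe-1}{\mathe}\vec\nu(1/i)$. But $\vec\nu(1/i)$ is the \emph{source} counting function, not the envelope one, and there is no uniform lower bound on $\EXP K_i$ in terms of $\vec\nu_f(1/i)$: if $P\in\cC(f)$ is supported on a single symbol then $\EXP K_i=1$ while $\vec\nu_f(1/i)\to\infty$, so $\log(j_0/\EXP K_i)$ is unbounded over the class. Consequently the contribution $\EXP M_{i,0}\,\log(j_0/\EXP K_i)$ cannot be controlled by an $O(1)$ factor times $\EXP M_{i,0}$.

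The remedy, which is what the paper does, is to exploit the product structure rather than bound each factor separately: first use $\EXP M_{i,0}\le \EXP K_i/i$, and then apply the elementary inequality $x\log(y/x)\le y\log(\mathe)/\mathe$ (valid for all $x>0$) with $x=\EXP K_i$ and $y=\EXP_f K_i$ (or $y=j_0$). This turns
\[
\EXP M_{i,0}\,\log\!\Big(\frac{\EXP_f K_i}{\EXP K_i}\Big)\ \le\ \frac{\EXP K_i}{i}\log\!\Big(\frac{\EXP_f K_i}{\EXP K_i}\Big)\ \le\ c\,\frac{\EXP_f K_i}{i}
\]
for an absolute constant $c<1$, and then the passage from $\EXP_f K_i$ to $\vec\nu_f(1/i)$ (a genuine $O(1)$ ratio, by Lemma~\ref{lem:link_Mi0_Ki_vecnu}~\ref{lem:item2}) applied under the envelope) costs only an additional $\log\tfrac{\mathe}{\mathe-1}\cdot\EXP M_{i,0}$. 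The rest of your decomposition --- dropping the negative $\log(j/j_0)$ terms for $j\le j_0$ and bounding $p_j\le f_j$, $\PROB(N_i^j=0)\le 1$ on the tail --- is exactly right and matches the paper. Your $x\mathe^{-x}\le 1/\mathe$ device is not needed for part~i); the work is done by the $x\log(y/x)$ bound.
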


Finally, in a last step, the regular assumption on the envelope allows us to evaluate each of the terms involved in the instantaneous redundancy.

\begin{lem}\label{lem:eval-sum-reg-var}
Let $P\in\cC(f)$ and assume that $\vec\nu_f(1/\cdot)\in\RV_\alpha$, for $\alpha\in[0,1[$. For all $\varepsilon>0$, there exists $i_0\in\N$ such that, for all $i\geq i_0$ 
\begin{enumerate}[i)]
\item 
\begin{eqnarray*}
\sum_{j\geq \vec\nu_f(1/i)} f_j\log \frac{j}{\vec\nu_f (1/i)}
     &\leq & \frac{3-\alpha}{(1-\alpha)^2}\;  \frac{\vec\nu_f(1/i)}{i} \, .
\end{eqnarray*}
\item 
\begin{eqnarray*}
\EXP[D]&\leq & 2(1+\varepsilon)\left(\Gamma(1-\alpha)+\frac{1}{1-\alpha}\right)\frac{\log\log(i)\vec\nu_f(1/i)}{i}\, .
\end{eqnarray*}
\end{enumerate}
\end{lem}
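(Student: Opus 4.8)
The plan is to prove the two items separately, after recording the single observation that makes them comparable: by Lemma~\ref{lem:link_Mi0_Ki_vecnu} and Lemma~\ref{lem:link_Mi0_Ki_vecnu_2}, the quantities $\vec\nu_f(1/i)$, $\EXP_f[K_i]$, $\EXP_f[M_{i,0}]$ and $i\,\nu_{1,f}[0,1/i]$ are all of the same order under the regular variation hypothesis (with ratios governed by the index-dependent constants $\Gamma(1-\alpha)$ and $1/(1-\alpha)$, $\Gamma(1)=1$), so each target bound in the statement is really a bound by a fixed multiple of $\vec\nu_f(1/i)/i$, up to a $\log\log i$ factor in item~(ii). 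Both proofs then amount to estimating tail sums of the envelope probabilities $(f_j)$, for which I would use the integral representation of the logarithm together with Karamata's theorem (Theorem~\ref{thm:karamata}) and the occupancy asymptotics of Appendices~\ref{sec:reg-var-tools}--\ref{sec:infinite_urn_sch}.

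For item~(i), I would start from the elementary identity $\log(j/u)=\log(\mathe)\int_u^j \mathd t/t$, valid for integers $j\geq u:=\vec\nu_f(1/i)$ (note the $j=u$ term vanishes), and apply Fubini to get
\[
\sum_{j\geq u} f_j\log(j/u)=\log(\mathe)\int_u^\infty \frac{1}{t}\Big(\sum_{j\geq t} f_j\Big)\,\mathd t .
\]
For $\alpha\in(0,1)$, the connection between the envelope counting function and the envelope quantile function (Appendix~\ref{sec:connections_between_countin}) gives that $j\mapsto f_j$ is regularly varying with index $-1/\alpha<-1$, hence $t\mapsto\sum_{j\geq t}f_j\in\RV_{1-1/\alpha}$; Karamata's theorem then yields $\int_u^\infty t^{-1}\big(\sum_{j\geq t}f_j\big)\mathd t\sim\frac{\alpha}{1-\alpha}\sum_{j\geq u}f_j$, while $\sum_{j\geq u}f_j=f_{\vec\nu_f(1/i)}+\sum_{j>u}f_j\leq (1+o(1))/i+\nu_{1,f}[0,1/i]\leq\big(\tfrac1{1-\alpha}+o(1)\big)\vec\nu_f(1/i)/i$ by Lemma~\ref{lem:link_Mi0_Ki_vecnu_2} (using $f_{u+1}<1/i$, $f_u/f_{u+1}\to1$, and $\vec\nu_f(1/i)\geq1$). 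Collecting the factors gives a bound of order $\frac{\alpha\log(\mathe)}{(1-\alpha)^2}\cdot\frac{\vec\nu_f(1/i)}{i}$, which for $i$ large is below $\frac{3-\alpha}{(1-\alpha)^2}\cdot\frac{\vec\nu_f(1/i)}{i}$; the slack in the constant $3-\alpha$ just absorbs $\log(\mathe)$, the $o(1)$ terms and the boundary term $f_{\vec\nu_f(1/i)}$. For $\alpha=0$ the envelope decays faster than any power, so I would argue directly that the sum is of strictly smaller order than $\vec\nu_f(1/i)/i$ (for instance bounding $\int_u^\infty t^{-1}\big(\sum_{j\geq t}f_j\big)\mathd t\leq u^{-1}\sum_{j>u} j f_j$ and using $\sum_j j f_j<\infty$), so the inequality holds a fortiori for large $i$.

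For item~(ii), I would feed in the decomposition of Lemma~\ref{lem:sum-envelope}~ii),
\[
\EXP[D]\leq 2\sum_{j\geq \vec\nu_f(1/i)} f_j\log(\log j+1)+2\,\frac{\EXP_f[K_i]+\ell_f}{i}\,\log\!\big(\log\vec\nu_f(1/i)+1\big),
\]
and bound the two pieces. In the second piece $\EXP_f[K_i]\sim\Gamma(1-\alpha)\vec\nu_f(1/i)$ by Lemma~\ref{lem:link_Mi0_Ki_vecnu_2}, $\ell_f$ is a negligible constant, and $\log(\log\vec\nu_f(1/i)+1)\leq(1+\varepsilon)\log\log i$ since $\vec\nu_f(1/i)\leq i$; this produces the $\Gamma(1-\alpha)$ term. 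For the first piece I would split the sum at $j=i$: for $\vec\nu_f(1/i)\leq j\leq i$ we have $\log(\log j+1)\leq(1+\varepsilon)\log\log i$, so this part is at most $(1+\varepsilon)\log\log i\,\sum_{j\geq\vec\nu_f(1/i)}f_j\leq(1+\varepsilon)\log\log i\cdot\big(\tfrac1{1-\alpha}+o(1)\big)\tfrac{\vec\nu_f(1/i)}{i}$ by the estimate of $\sum_{j\geq\vec\nu_f(1/i)}f_j$ from item~(i); the remaining tail $\sum_{j>i}f_j\log(\log j+1)$ is a tail of a convergent series and, by Karamata, is $o(\vec\nu_f(1/i)/i)$. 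Adding the two contributions $\Gamma(1-\alpha)$ and $\tfrac1{1-\alpha}$ and doubling gives exactly the asserted constant $2(1+\varepsilon)\big(\Gamma(1-\alpha)+\tfrac1{1-\alpha}\big)$.

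The main obstacle, and the one point requiring genuine care, is the boundary envelope probability $f_{\vec\nu_f(1/i)}$ (and, relatedly, making the $\alpha=0$ regime uniform over all light-tailed envelopes): when $i$ sits near the top of an interval on which $\vec\nu_f(1/\cdot)$ is constant, $f_{\vec\nu_f(1/i)}$ can exceed $1/i$ by a factor $i^{o(1)}$ (e.g.\ for $f(j)\propto\exp(-C'j^\beta)$ with $\beta>1$), so the crude step $f_j\PROB\{N_i^j=0\}\leq f_j$ that underlies the displayed decomposition of Lemma~\ref{lem:sum-envelope}~ii) is far too wasteful on that single term. The fix is to retain the factor $\PROB\{N_i^j=0\}\leq\mathe^{-ip_j}$ for the handful of borderline symbols and use $x\,\mathe^{-ix}\leq 1/(\mathe i)$, so those terms contribute at most $O(\vec\nu_f(1/i)/i)$ irrespective of how large $f_{\vec\nu_f(1/i)}$ is --- this is exactly the mechanism that keeps $\EXP[D]$ small even when the intermediate envelope sum is not. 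Once this is handled, everything else is routine regular-variation bookkeeping (uniform convergence, Potter bounds, Karamata) drawn from Appendices~\ref{sec:reg-var-tools}--\ref{sec:infinite_urn_sch}.
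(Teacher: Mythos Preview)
Your route differs from the paper's. For item~(i) the paper does not pass through the regular variation of $j\mapsto f_j$ at all: it rewrites the sum as $\int_0^{1/i}x\log\big(\vec\nu_f(x)/\vec\nu_f(1/i)\big)\,\nu_f(\mathd x)$, uses the Potter--Drees inequality to replace $\vec\nu_f(x)/\vec\nu_f(1/i)$ by $2/(xi)$ uniformly in $x\in(0,1/i]$, and then (after a Fubini step) arrives at $\int_i^\infty t^{-1}\nu_{1,f}[0,1/t]\,\mathd t$, to which Lemma~\ref{lem:link_Mi0_Ki_vecnu_2} and Karamata apply. This handles all $\alpha\in[0,1)$ in one stroke, so neither the inversion step $\vec\nu_f(1/\cdot)\in\RV_\alpha\Rightarrow f_j\in\RV_{-1/\alpha}$ nor a separate $\alpha=0$ argument is needed. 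For item~(ii) the paper again stays with the measure $\nu_f$: it uses $\vec\nu_f(x)\ll 1/x$ to bound $\log(\log\vec\nu_f(x)+1)\leq\log\log(1/x)$, integrates by parts, and invokes Karamata on $t\mapsto t^{-2}\log\log(t)\,\vec\nu_f(1/t)\in\RV_{\alpha-2}$; no splitting at $j=i$ occurs.

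There is one genuine gap in your sketch. In the $\alpha=0$ branch of item~(i) you bound the sum by $\log(\mathe)\,u^{-1}\sum_{j>u}jf_j$ with $u=\vec\nu_f(1/i)$, and then assert this is of strictly smaller order than $\vec\nu_f(1/i)/i=u/i$ because $\sum_j jf_j<\infty$. But finiteness of that series only gives $\sum_{j>u}jf_j=o(1)$, whereas what you need is $\sum_{j>u}jf_j=o(u^2/i)$; since $u$ is slowly varying, $u^2/i\to 0$ as well, so the comparison is non-trivial. It is in fact true, but it requires regular-variation input (for instance $\nu_{1,f}[0,1/t]=o(\vec\nu_f(1/t)/t)$ from Theorem~\ref{thm:asympt-occup}(iii) combined with an Abel/Karamata argument), not merely a finite first moment. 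The paper's Potter--Drees route sidesteps this difficulty entirely.

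Finally, your concern about the boundary term $f_{\vec\nu_f(1/i)}$ is well taken for item~(ii), and your proposed fix via $xe^{-ix}\leq 1/(\mathe i)$ is the right mechanism (it amounts to bypassing Lemma~\ref{lem:sum-envelope} on that single symbol). For item~(i), however, the worry is moot: the $j=\vec\nu_f(1/i)$ summand carries the factor $\log\big(\vec\nu_f(1/i)/\vec\nu_f(1/i)\big)=0$.
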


As the bound on $\EXP [D]$ in Lemma \ref{lem:eval-sum-reg-var} is much larger than $\vec\nu_f(1/i)/i$, we finally obtain that there exists $i_0\in\N$ that may depend on the envelope $f$ such that for all $i\geq i_0$,
\begin{eqnarray*}
\EXP_P\left[\log\frac{P(X_{i+1})}{Q(X_{i+1}|X_{1:i})}\right]&\leq & c_\alpha\frac{\log\log(i)\vec\nu_f(1/i)}{i}\, ,
\end{eqnarray*}
with $c_\alpha=3\left(\Gamma(1-\alpha)+\frac{1}{1-\alpha}\right)$. 

Hence, for $n\geq i_0$, we obtain
\begin{eqnarray*}
\RED(Q_n,\cC(f)^n)
&\leq & C(i_0,f)+c_\alpha\log\log(n)\sum_{i=1}^n \frac{\vec\nu_f(1/i)}{i}\, ,
\end{eqnarray*}
which establishes the upper bound in Theorem \ref{thm:main-compression}.

\end{proof}
\section{Proofs}
\label{sec:proofs}

\subsection{Proof of Theorem \ref{prop:red-upper}}

\begin{enumerate}[i)]
\item As announced, the upper bound follows from bounds of \citep{acharya2014poissonization} on the minimax regret defined as
\begin{eqnarray*}
\REG(\cC(f)^n)&=&\inf_{Q_n}\sup_{P\in\cC(f)} \sup_{x_{1:n}\in\cX^n}\log\frac{P^n(x_{1:n})}{Q_n(x_{1:n})}\, .
\end{eqnarray*}
Clearly, $\RED\left(\cC(f)^n\right)\leq \REG\left(\cC(f)^n\right)$. Now Theorem 14 of \citep{acharya2014poissonization} states that
\begin{eqnarray*}
\REG\left(\cC(f)^n\right)&\leq & 1+\REG\left(\cC(f)^{\cP(n)}\right)\,=\,1+\REG\left(\tau(\cC(f)^{\cP(n)})\right)\, ,
\end{eqnarray*}
Using \eqref{eq:pois-type-representation} and the fact that for $j< \ell_f$, we still have the crude bound $p_j\leq 1$,
\begin{eqnarray*}
\tau(\cC(f)^{\cP(n)})&\subset & \left(\prod_{j=1}^{\ell_f-1}\cP^\star(n)\right)\prod_{j=\ell_f}^\infty \cP^\star(nf_j)\, .
\end{eqnarray*}
As enlarging the class can not reduce the regret, and as, for two classes $\cC_1, \cC_2$, $\REG(\cC_1\times\cC_2)=\REG(\cC_1)+\REG(\cC_2)$, we have
\begin{eqnarray*}
\REG\left(\tau(\cC(f)^{\cP(n)})\right)&\leq & \ell_f\REG\left(\cP^\star(n)\right) +\sum_{j\geq \ell_f}\REG\left(\cP^\star(nf_j)\right)\, .
\end{eqnarray*}
Lemma 17 of \citep{acharya2014poissonization} provide us with the following bounds: if $\lambda\leq 1$,
\begin{displaymath}
\REG\left(\cP^\star(\lambda)\right)=\log\left(2-\mathe^{-\lambda}\right)\leq \log(\mathe)\lambda\, ,
\end{displaymath}
and, if $\lambda>1$, 
\begin{displaymath}
\REG\left(\cP^\star(\lambda)\right)\leq \log\left(\sqrt{\frac{2\lambda}{\pi}}+2\right)\, .
\end{displaymath}
For aesthetic purposes (getting the common $\log(\mathe)$ constant in front of each terms), we find it convenient to notice that the bound above can be very slightly improved to
\begin{displaymath}
\REG\left(\cP^\star(\lambda)\right)\leq \log\left(\sqrt{\frac{2\lambda}{\pi}}+\frac{3}{2}\right)\, .
\end{displaymath}
As $\frac{3}{2}+\sqrt{\frac{2}{\pi}}\leq \mathe$,
\begin{eqnarray*}
\REG\left(\cC(f)^n\right)&\leq &  O(\ell_f\log n)
+\sum_{j, f_j\geq 1/n} \log\left(\mathe\sqrt{nf_j}\right)+ \log(\mathe)\sum_{j, f_j\leq 1/n}nf_j\, .
\end{eqnarray*}
Now, using the integral representation and integrating by parts gives
\begin{eqnarray*}
\sum_{f_j\geq 1/n}\log\left(\mathe\sqrt{nf_j}\right)&=&\log(\mathe)\vec\nu_f(1/n)+ \int_{1/n}^1\frac{\log(nx)}{2}\nu_f(\mathd x)\\
&=&\log(\mathe)\left(\vec\nu_f(1/n) +\int_{1/n}^1\frac{\vec\nu_f(x)}{2x}\mathd x\right)\, .
\end{eqnarray*}
Also, we may write $\sum_{j, f_j\leq 1/n}f_j=\nu_{1,f}[0,1/n]$, which gives the desired upper bound.
\item Let $m=n-n^{2/3}$. Thanks to Proposition \ref{prop:poisson-redundancy}, for $n$ large enough
\begin{eqnarray*}
\RED\left(\cC(f)^n\right)&\geq & \RED\left(\cC(f)^{\cP(m)}\right)-1\, .
\end{eqnarray*}
Now, by Lemmas \ref{lem:red-pois-sum} and \ref{lem:red-pois-class},
\begin{displaymath}
\RED\left(\cC(f)^{\cP(m)}\right)\geq  \sum_{j=\ell_f}^\infty \RED\left(\cP^\star(mf_j)\right)\geq  \sum_{j,\, f_j\geq 1/m}\RED\left(\cP^\star(mf_j)\right)\geq \sum_{j,\, f_j\geq 1/m}\left(\frac{\log(mf_j)}{2}-5\right)\, .
\end{displaymath}
Using the integral representation and integrating by parts,
\begin{eqnarray*}
\sum_{j,\, f_j\geq 1/m}\frac{\log(mf_j)}{2}&=&\frac{\log(\mathe)}{2}\int_{1/m}^1 \ln(mx)\nu_f(\mathd x)\\
&=&\frac{\log(\mathe)}{2}\left(\left[-\vec\nu_f(x)\ln(mx)\right]_{1/m}^1 +\int_{1/m}^1 \frac{\vec\nu_f(x)}{x}\mathd x\right)\\
&=& \log(\mathe)\int_1^m\frac{\vec\nu_f(1/t)}{2t}\mathd t\, .
\end{eqnarray*}
We obtain
\begin{eqnarray}\label{eq:red-lower-first}
\RED\left(\cC(f)^n\right)&\geq & \log(\mathe)\int_1^m\frac{\vec\nu_f(1/t)}{2t}\mathd t -5\vec\nu_f(1/m)-1\, .
\end{eqnarray}
Note that the lower bound \eqref{eq:red-lower-first} might be irrelevant. For instance, when $\vec\nu_f(1/\cdot)\in\RV_\alpha$ with $\alpha>\log(\mathe)/10$, the right-hand side becomes negative. However, an order-optimal lower bound in the case of heavy-tailed envelopes (that is, when $\alpha>0$) was established in \citep{boucheron2015adaptive}: the expected redundancy of a class $\cC(f)^n$ is lower-bounded by the expected number of distinct symbols in a sample of size $n$ drawn according to the envelope distribution $(f_j)_{j\geq 1}$. More precisely, for all envelope function $f$, there exists a constant $c_f\geq 0$ such that, for all $n\geq 1$, $\RED\left(\cC(f)^n\right)\geq \EXP_f[K_n]-c_f$.
\end{enumerate}

\subsection{Proofs of Section \ref{sec-minmaxred}}

\begin{proof}[Proof of Proposition \ref{prop:pois-redun-representation}]
We have
$$
\RED(\cC^{\cP(n)})=\inf_{Q\in\cM_1(\cX^*)}\sup_{P\in\cC}D(P^{\cP(n)},Q).
$$
Let $P\in\cC$ and $Q\in\cM_1(\cX^*)$, with $\cX^* = \cup_{k\geq 1} \cX^k$. The distribution $Q$ can be written as $Q=\sum_{k\geq 1} q(k)Q_k$, where $(q(k))_{k\geq 1}$ is a probability distribution over $\N$, and, for each $k\geq 1$, $Q_k$ is a distribution over $\cC^k$. Hence
\begin{eqnarray}\label{eq:pois-KL}
D(P^{\cP(n)},Q)&=& \sum_{k\geq 1}\PROB(N=k)\sum_{x\in\cC^k}P^k(x)\log\frac{\PROB(N=k)P^k(x)}{q(k)Q_k}\nonumber\\
&=& D(\cP(n),q)+\sum_{k\geq 1}\PROB(N=k)D(P^k,Q_k)\, .
\end{eqnarray}
Maximizing in $P$ and minimizing in $(q(k))_{k\geq 1}$ and $(Q_k)_{k\geq 0}$, we get
\begin{eqnarray*}
\RED(\cC^{\cP(n)})
&=& \inf_{(q(k))}D(\cP(n),q) + \inf_{(Q_k)}\sup_{P\in\cC}\sum_{k\geq 1}\PROB(N=k)D(P^k,Q_k)\, .
\end{eqnarray*}
The first term is equal to zero for $q=\cP(n)$, implying that the distribution $Q$ which achieves the minimax redundancy is also a Poisson mixture. Hence
\begin{eqnarray*}
\RED(\cC^{\cP(n)})&=& \inf_{(Q_k)}\sup_{P\in\cC}\sum_{k\geq 1}\PROB(N=k)D(P^k,Q_k)\, .
\end{eqnarray*}
\end{proof}


\begin{proof}[Proof of Proposition \ref{prop:poisson-redundancy}]
We start with the lower bound on $\RED(\cC^{n})$. Let $m=n-n^{2/3}$, and let $M$ be a Poisson random variable with mean $m$. By Proposition \ref{prop:pois-redun-representation},
\begin{eqnarray*}
\RED(\cC^{\cP(m)})&=& \inf_{(Q_k)}\sup_{P\in\cC}\sum_{k\geq 0}\PROB(M=k)D(P^k,Q_k)\\
&\leq & \inf_{(Q_k)}\sum_{k\geq 0}\PROB(M=k)\sup_{P\in\cC}D(P^k,Q_k)\\
\end{eqnarray*}
Using the fact that the sequence $(\RED(\cC^k))_{k\geq 0}$ is increasing and sub-additive (see Proposition \ref{prop:red-properties}), we have
\begin{eqnarray*}
\RED(\cC^{\cP(m)})&\leq & \RED(\cC^n) +\sum_{k>n} \PROB(M=k)\left(\RED(\cC^k)-\RED(\cC^n)\right)\\
&\leq & \RED(\cC^n) +\sum_{k>n} \PROB(M=k)\RED(\cC^{k-n})\\
&\leq & \RED(\cC^n) +\RED(\cC)\sum_{k>n} \PROB(M=k)(k-n)\, .
\end{eqnarray*}
Resorting to Lemma \ref{lem:poisson-concentration} in Appendix \ref{sec:conc-moment-bounds}, we have
\begin{align*}
 \sum_{k>n} \PROB(M=k)(k-n) & = \EXP\left[(M-n)\IND_{\{M>n\}}\right]
= \int_0^\infty \PROB\left(M-m>t+n^{2/3}\right)\mathd t & \\ 
& \leq  \int_0^n \mathe^{-\frac{n^{4/3}}{2(m+n^{2/3})}}\mathd t + \int_n^\infty \mathe^{-\frac{t^2}{6t}}\mathd t & \\
& \leq  n\mathe^{-n^{1/3}/2} + 6\mathe^{-n/6} \underset{n \to \infty}\longrightarrow 0\, . 
\end{align*}
This establishes the lower bound on $\RED(\cC^n)$ in Proposition \ref{prop:poisson-redundancy}. 

Let us now establish the upper bound. Let now $m=n+n^{2/3}$ and $M$ be a Poisson random variable with mean $m$. Using the Bayesian representation of the minimax redundancy (see Proposition \ref{prop:red-properties}), we have
\begin{eqnarray*}
\RED(\cC^{\cP(m)})&=&\sup_{\pi\in\cM_1(\cC)}\inf_{Q\in\cM_1(\cX^*)}\int D(P^{\cP(m)},Q)\mathd\pi(P)\, .
\end{eqnarray*}
Fix $\pi\in\cM_1(\cC)$. Resorting to Equation \eqref{eq:pois-KL} in the proof of Proposition \ref{prop:pois-redun-representation}, we have
\begin{eqnarray*}
& &\inf_{Q\in\cM_1(\cX^*)}\int D(P^{\cP(m)},Q)\mathd\pi(P)\\
& &\hspace{1.6cm} = \inf_{(Q_k),(q(k))} \int \left(D(\cP(m),q) +\sum_{k\geq 0}\PROB(M=k)D(P^k,Q_k)\right) \mathd\pi(P)\\
& &\hspace{1.6cm} = \inf_{(Q_k)}\int \sum_{k\geq 0}\PROB(M=k)D(P^k,Q_k) \mathd\pi(P)\\
& &\hspace{1.6cm} = \sum_{k\geq 0}\PROB(M=k)\inf_{Q_k}\int D(P^k,Q_k) \mathd\pi(P)\, .
\end{eqnarray*}
We claim that the sequence $\left(\inf_{Q_k}\int D(P^k,Q_k) \mathd\pi(P)\right)_{k\geq 0}$ is increasing with respect to $k$. Indeed, let $k\geq 0$ and let $Q_{k+1}\in \cM_1(\cX^k)$. Denote by $Q_{k+1}^{(k)}$ its restriction to the first $k$ symbols. Then, for all $P\in\cM_1(\cX)$, $D(Q_{k+1},P^{k+1})\geq D(P^k,Q_{k+1}^{(k)})$. Hence for all $Q_{k+1}$ there exist $Q_k\in\cM_1(\cX^k)$ such that $\int D(P^k,Q_k) \mathd\pi(P)\leq \int D(P^{k+1},Q_{k+1}) \mathd\pi(P)$, which gives the desired result. We get 
\begin{eqnarray*}
\RED(\cC^{\cP(m)})&\geq &\sup_{\pi}\sum_{k\geq n}\PROB(M=k)\inf_{Q_k}\int D(P^k,Q_k) \mathd\pi(P)\\
&\geq & \PROB(M\geq n) \RED(\cC^n)\, .
\end{eqnarray*}
Now, using again Lemma \ref{lem:poisson-concentration}, we have
\begin{eqnarray*}
\PROB(M\geq n)&\geq & 1-\exp\left(-\frac{n^{4/3}}{2m}\right)\;\underset{n \to \infty}\longrightarrow 1\, ,
\end{eqnarray*}
which concludes the proof.
\end{proof}


\begin{proof}[Proof of Lemma \ref{lem:red-pois-class}]
Using the Bayesian representation of the minimax redundancy, we have
\begin{eqnarray*}
\RED(\cP^\star(\lambda))&=& \sup_{\pi\in\cM_1([0,\lambda])}\int D(\cP(\mu),\cP_\pi)\mathd\pi(\mu)\, ,
\end{eqnarray*}
where $\displaystyle{\cP_\pi=\int \cP(\mu)\mathd\pi(\mu)}$. In particular, taking $\pi$ equal to the uniform distribution over $[0,\lambda]$, we get
\begin{eqnarray*}
\RED(\cP^\star(\lambda))&\geq & \int_0^\lambda\frac{1}{\lambda}\sum_{k\geq 0}\PROB(\cP(\mu)=k)\log \frac{\PROB(\cP(\mu)=k)}{\PROB(\cP_\pi=k)}\mathd\mu\, .
\end{eqnarray*}
We have
\begin{eqnarray*}
\PROB(\cP_\pi =k)&=&\frac{1}{\lambda} \int_0^\lambda\frac{\mathe^{-\mu}\mu^k}{k!}\mathd\mu \,=\, \frac{\PROB(\cP(\lambda)>k)}{\lambda}\,\leq\,\frac{1}{\lambda}\, .
\end{eqnarray*}
Hence
\begin{eqnarray*}
\RED(\cP^\star(\lambda))&\geq &\log\lambda -\frac{1}{\lambda}\int_0^\lambda H(\cP(\mu))\mathd\mu \, .
\end{eqnarray*}
Using Stirling's bound $k!\leq \mathe^{1/12k}\left(\frac{k}{\mathe}\right)^k\sqrt{2\pi k}$, we have, for all $\mu\in[0,\lambda]$,
\begin{eqnarray*}
\frac{H(\cP(\mu))}{\log(\mathe)}&=&\mu - \mu\ln\mu +\sum_{k\geq 0}\frac{\mathe^{-\mu}
\mu^k}{k!}\ln(k!)\\
&\leq & \mu - \mu\ln\mu +\sum_{k\geq 1}\frac{\mathe^{-\mu}
\mu^k}{k!}\left(k\ln k -k +\frac{\ln(2\pi k)}{2}+\frac{1}{12k}\right)\\
&\leq & \sum_{k\geq 1}\frac{\mathe^{-\mu}
\mu^k}{k!}\left(k\ln k +\frac{\ln(2\pi k)}{2}\right)-\mu\ln\mu +\frac{1}{12}\, .
\end{eqnarray*}
We use Jensen's inequality to obtain
\begin{eqnarray*}
\sum_{k\geq 1}\frac{\mathe^{-\mu}
\mu^k}{k!}k\ln k&=&\mu \sum_{k\geq 0}\frac{\mathe^{-\mu}
\mu^k}{k!}\ln(k+1)\,\leq \, \mu\ln(1+\mu)\, ,
\end{eqnarray*}
and
\begin{eqnarray*}
\sum_{k\geq 1}\frac{\mathe^{-\mu}
\mu^k}{k!}\ln k&\leq &(1-\mathe^{-\mu})\ln\left(\frac{\mu}{1-\mathe^{-\mu}}\right)\,\leq\, \ln\mu+\frac{1}{\mathe}\, ,
\end{eqnarray*}
where the last inequality is due to the fact that the function $x\mapsto x\ln x$ is larger than $-1/\mathe$ for all $x\geq 0$. We get
\begin{eqnarray*}
\frac{H(\cP(\mu))}{\log(\mathe)}&\leq & \frac{\ln\mu}{2} +\mu\ln\left(1+\frac{1}{\mu}\right)+\frac{\ln(2\pi)}{2}+\frac{1}{2\mathe}+\frac{1}{12}\\
&\leq & \frac{\ln\mu}{2} +3\, ,
\end{eqnarray*}
which is smaller than $3$ for $\mu\leq 1$. Hence
\begin{eqnarray*}
\RED(\cP^\star(\lambda))&\geq &\log\lambda -3\log(\mathe)-\frac{1}{\lambda}\int_1^\lambda \frac{\log\mu}{2}\mathd\mu \\
&\geq & \frac{\log\lambda}{2}-5\, .
\end{eqnarray*}
\end{proof}


\subsection{Proofs of Section \ref{sec:code_analysis}}


\begin{proof}[Proof of Lemma \ref{lem:link_Mi0_Ki_vecnu}]
\begin{enumerate}[i)]
\item 
Observe that
\begin{displaymath}
\EXP M_{n,0}=\int_0^1x(1-x)^n\nu(\mathd x)\leq \frac{1}{n}\int_0^1 nx(1-x)^{n-1}\nu(\mathd x)=\frac{\EXP K_{n,1}}{n}\leq \frac{\EXP K_n}{n}\, .
\end{displaymath}
The relation between $\EXP K_n$ and $\EXP_f K_n$ is easily obtained by noticing that the function $x\mapsto 1-(1-x)^n$ is increasing on $[0,1]$, which gives
\begin{eqnarray*}
\EXP K_n&=&\sum_{j\geq 1}\left(1-(1-p_j)^n\right)\, \leq \, \ell_f +\sum_{j\geq 1}\left(1-(1-f_j)^n\right)\,=\, \ell_f+\EXP_f K_n\, .
\end{eqnarray*}
\item
The expected number of distinct symbols in a message of length $i$ is never much less that the number of symbols with probability larger than $1/i$: 
\begin{displaymath}
\EXP K_i \geq \sum_{j\leq\vec\nu (1/i)}(1-\left(1-p_j)^i\right) \geq  \frac{\mathe - 1}{\mathe}\vec\nu(1/i)\, .
\end{displaymath}
The upper bound is obtained in a similar way 
\begin{eqnarray*}
\EXP K_i & =  &  \sum_{j\geq 1} \left(1 - (1-p_j)^i\right)	\\
&  \leq & \vec \nu(1/i) + \sum_{j> \vec \nu(1/i)} \left(1 - (1-p_j)^i\right) \\
& \leq  & \vec \nu(1/i) + \sum_{j> \vec \nu(1/i)} i p_j \\
& = & \vec \nu(1/i) + i \nu_1(0,1/i) \, 
\end{eqnarray*}
where the second inequality comes from $1-x \leq (1- x/i)^i$ for $0\leq x \leq 1$  and $i \geq 1$. 

\item 
The quantity $\EXP 1/K_i$  satisfies a quadratic inequality:
\begin{eqnarray*}
\EXP\left[\frac{1}{K_i}\right] - \frac{1}{\EXP K_i}  
	 & = & \EXP \left[\frac{\EXP[K_i]- K_i}{\sqrt{\EXP K_i}}\times \frac{1}{K_i \sqrt{\EXP K_i}} \right] \\ 
	 & \leq &  \EXP \left[\frac{(\EXP[K_i]- K_i)^2}{{\EXP K_i}}\right]^{1/2} \times \EXP\left[\frac{1}{K_i^2 \EXP K_i}\right]^{1/2} \\ 
	 & \leq & \frac{(\EXP K_{i,1})^{1/2}}{\EXP K_i} \times \EXP\left[\frac{1}{K_i}\right]^{1/2}  
\end{eqnarray*}
where the first	inequality follows by invoking Cauchy-Schwarz inequality,  the first factor in the penultimate line is upper bounded by $1$ thanks to the fact that  $\operatorname{var}(K_i)\leq \EXP K_{i,1}$ \citep{ben2014concentration}, while the second factor 
is crudely upper bounded using the fact that $K_i \geq 1$  entails $1/K_i^2 \leq 1/K_i$. Solving the  inequality leads to 
\begin{eqnarray*}
	\EXP\left[\frac{1}{K_i}\right] 
	& \leq  & 
	\frac{1}{\EXP K_i}\left(1 + \frac{\EXP K_{i,1}}{2\EXP K_i}  +
	\left(\frac{\EXP K_{i,1}}{\EXP K_i}\right)^{1/2} \sqrt{1 +\frac{\EXP K_{i,1}}{4\EXP K_i} }\right) \\
	& \leq & \frac{1}{\EXP K_i} \left(1+ \frac{\EXP K_{i,1}}{\EXP K_i} + \left(\frac{\EXP K_{i,1}}{\EXP K_i}\right)^{1/2}\right) \\ 
	& \leq & \frac{1}{\EXP K_i} \left(1+ 2\left(\frac{\EXP K_{i,1}}{\EXP K_i}\right)^{1/2}\right)\\
	& \leq & \frac{3}{\EXP K_i} \, . 
\end{eqnarray*}

\end{enumerate}
\end{proof}


\begin{proof}[Proof of Lemma \ref{lem:sum-envelope}]
\begin{enumerate}[i)]
\item 
We have
\begin{eqnarray*}
\sum_{j\geq 1}p_j\PROB(N_i^j=0)\log\left(\frac{j}{\EXP[K_i]}\right)& =  & \sum_{j\geq 1}p_j\PROB(N_i^j=0)\log\left(\frac{j}{\EXP_f[K_i]}\right)+ \EXP[M_{i,0}]\log\frac{\EXP_f[K_i]}{\EXP[K_i]} \\
& \leq & \sum_{j\geq 1}p_j\PROB(N_i^j=0)\log\left(\frac{j}{\EXP_f[K_i]}\right)+ \frac{\EXP K_i}{i} \log \frac{\EXP_f[K_i]}{\EXP[K_i]} \\
& \leq & \sum_{j\geq 1}p_j\PROB(N_i^j=0)\log\left(\frac{j}{\EXP_f[K_i]}\right)+  \frac{\EXP_f[K_i]}{2i} \, ,
\end{eqnarray*}
where the first inequality follows from  $\EXP[M_{i,0}]\leq \EXP[K_i]/i$ (see Lemma \ref{lem:link_Mi0_Ki_vecnu}),
and the second inequality follows from the fact that $x \mapsto x \log \frac{y}{x}$ achieves its maximum $y/2$ at $x=y/2$. 

By Lemma \ref{lem:link_Mi0_Ki_vecnu}, \ref{lem:item2}, we have $\EXP_f K_i\geq  \frac{\mathe - 1}{\mathe}\vec\nu_f(1/i)$. Hence
\begin{eqnarray*}
\sum_{j\geq 1}p_j\PROB(N_i^j=0)\log\left(\frac{j}{\EXP_f[K_i]}\right)&\leq & \log\frac{\mathe}{\mathe-1} \, \EXP[M_{i,0}] + \sum_{j\geq 1}p_j\log\left(\frac{j}{\vec\nu_f(1/i)}\right)\, .
\end{eqnarray*}
Now, for $j<\vec\nu_f(1/i)$, the summands are negative and we simply omit them to get
\begin{eqnarray*}
\sum_{j\geq 1}p_j\log\left(\frac{j}{\vec\nu_f(1/i)}\right)
&\leq &\sum_{j\geq\vec\nu_f(1/i)}f_{j}\log\left(\frac{j}{\vec\nu_f(1/i)}\right)\, ,
\end{eqnarray*}
where we used $p_{j}\leq f_j$, provided that $j\geq \ell_f$. 
\item Decomposing the sum according to $\vec\nu_f(1/i)$ and using the fact that $\EXP[M_{i,0}]\leq (\EXP_f K_i +\ell_f)/i$ readily yield the desired bound on $\EXP[D]$.
\end{enumerate}
\end{proof}


\begin{proof}[Proof of Lemma \ref{lem:eval-sum-reg-var}]
\begin{enumerate}[i)]
\item 
Using the integral representation, we have
\begin{eqnarray*}
\sum_{j\geq \vec\nu_f(1/i)} f_j\log\left(\frac{j}{\vec\nu_f(1/i)}\right)
&= &\int_0^{1/i} x\log\frac{\vec\nu_f(x)}{\vec\nu_f(1/i)}\nu_f(\mathd x)\, . 
\end{eqnarray*}
By the regular variation assumption on $\vec\nu_f(1/\cdot)$, the Potter-Drees Inequality (see section \ref{sec:reg-var-tools}) implies that: for all $\varepsilon,\delta>0$, $\exists i_0\in\N$ such that, for all $i\geq i_0$, for all $x\in(0,1/i]$,
\begin{eqnarray*}
\frac{\vec\nu_f(x)}{\vec\nu_f(1/i)}&\leq & \left(\frac{1}{xi}\right)^\alpha +\varepsilon\left(\frac{1}{xi}\right)^{\alpha+\delta}\, .
\end{eqnarray*}
Taking crudely $\varepsilon=1$, $\delta=1-\alpha$ and bounding $\alpha$ by $1$, we obtain that for $i$ large enough and $x\in(0,1/i]$,
\begin{eqnarray*}
\frac{\vec\nu_f(x)}{\vec\nu_f(1/i)}&\leq & \frac{2}{xi}\, .
\end{eqnarray*}
Hence
\begin{eqnarray*}
\int_0^{1/i} x\log\frac{\vec\nu_f(x)}{\vec\nu_f(1/i)}\nu_f(\mathd x) &\leq & \nu_{1,f}[0,1/i] +\log(\mathe)\int_0^{1/i}x\ln\left(\frac{1}{xi}\right)\nu_f(\mathd x)\, .
\end{eqnarray*}
By Fubini's Theorem,
\begin{eqnarray*}
\int_0^{1/i}x\ln\left(\frac{1}{xi}\right)\nu_f(\mathd x)
&=&\int_0^{1/i}x\int_1^{\frac{1}{xi}}\frac{1}{t}\mathd t\,\nu_f(\mathd x)\\
&=& \int_1^\infty\frac{1}{t}\int_0^{\frac{1}{ti}}x\nu_f(\mathd x)\mathd t \, =\, \int_i^\infty \frac{\nu_{1,f}[0,1/t]}{t}\mathd t\, .
\end{eqnarray*}
Now, Lemma \ref{lem:link_Mi0_Ki_vecnu_2} implies that, for all $\varepsilon>0$, for $i$ large enough and for all $t\geq i$, $\nu_{1,f}[0,1/t]\leq \frac{(\alpha +\varepsilon)\vec\nu_f(1/t)}{(1-\alpha)t}$, and by Karamata's Theorem (see section \ref{sec:reg-var-tools}),
\begin{eqnarray*}
\int_i^\infty \frac{\vec\nu_f(1/t)}{t^2}\mathd t &\sim &\frac{\vec\nu_f(1/i)}{(1-\alpha)i}\, .
\end{eqnarray*}
Hence, for $i$ large enough,
\begin{eqnarray*}
\sum_{j\geq \vec\nu_f(1/i)} f_j\log\left(\frac{j}{\vec\nu_f(1/i)}\right)&\leq & \frac{3-\alpha}{(1-\alpha)^2}\times  \frac{\vec\nu_f(1/i)}{i}\, .
\end{eqnarray*}
\item 
By Lemma \ref{lem:sum-envelope}, 
\begin{eqnarray}
\label{eq:bound-D}
\EXP[D]&\leq & 2\sum_{j\geq \vec\nu_f(1/i)}f_j\log\left(\log(j)+1\right)+2\frac{\EXP_f K_i +\ell_f}{i}\log\left(\log\left(\vec\nu_f(1/i)\right)+1\right)\, .
\end{eqnarray}
As $\vec\nu_f(x)\ll 1/x$ when $x\to 0$ (which is true as soon as the support is infinite, see \citep{GneHanPit07}), for $i$ large enough, we have $\log(\vec\nu_f(x))+1\leq \log(1/x)$ for all $x\in]0,1/i]$. Hence, the first term in the right-hand side of \eqref{eq:bound-D} can be controlled as follows,
\begin{eqnarray*}
\sum_{j\geq \vec\nu_f(1/i)}f_j\log\left(\log(j)+1\right) &=&\int_0^{1/i}x\log\left(\log(\vec\nu_f(x))+1\right)\nu_f(\mathd x)\\ 
&\leq & \int_0^{1/i}x\log\log\left(\frac{1}{x}\right)\nu_f(\mathd x)\\
&=& \left[-\vec\nu_f(x)x\log\log\left(\frac{1}{x}\right)\right]_0^{1/i} +\int_0^{1/i}\left(\log\log\left(\frac{1}{x}\right)+\frac{\log(\mathe)}{\ln x}\right)\vec\nu_f(x)\mathd x\\
&\leq &  \int_0^{1/i}\log\log\left(\frac{1}{x}\right)\vec\nu_f(x)\mathd x\,=\, \int_i^\infty \frac{\log\log(t)\vec\nu_f(1/t)}{t^2}\mathd t\, .
\end{eqnarray*} 
We used that, as $\alpha<1$, the limit of $\vec\nu_f(x)x\log\log(1/x)$ as $x\to 0$ is equal to $0$. Now, the function $t\mapsto\frac{\log\log(t)\vec\nu_f(1/t)}{t^2}$ belongs to $\RV_{\alpha-2}$. Hence, by Karamata's Theorem (see section \ref{sec:reg-var-tools}),
\begin{eqnarray*}
\int_i^\infty \frac{\log\log(t)\vec\nu_f(1/t)}{t^2}\mathd t &\underset{i\to\infty}\sim &\frac{\vec\nu_f(1/i)}{(1-\alpha)i}\log\log(i)\, .
\end{eqnarray*}
As for the second term in the right-hand side of \eqref{eq:bound-D}, we use $\vec\nu_f(1/i)\ll i$ and $\EXP_f K_i \sim \Gamma(1-\alpha)\vec\nu_f(1/i)$ and obtain that, for all $\varepsilon>0$, there exists $i_0\in\N$ such that, for all $i\geq i_0$ 
\begin{eqnarray*}
\EXP[D]&\leq & 2(1+\varepsilon)\left(\Gamma(1-\alpha)+\frac{1}{1-\alpha}\right)\frac{\log\log(i)\vec\nu_f(1/i)}{i}\, .
\end{eqnarray*}
\end{enumerate}
\end{proof}


\bibliographystyle{plainnat}

\begin{thebibliography}{49}
\providecommand{\natexlab}[1]{#1}
\providecommand{\url}[1]{\texttt{#1}}
\expandafter\ifx\csname urlstyle\endcsname\relax
  \providecommand{\doi}[1]{doi: #1}\else
  \providecommand{\doi}{doi: \begingroup \urlstyle{rm}\Url}\fi

\bibitem[Acharya et~al.(2014)Acharya, Jafarpour, Orlitsky, and
  Suresh]{acharya2014poissonization}
J.~Acharya, A.~Jafarpour, A.~Orlitsky, and A.T. Suresh.
\newblock Universal compression of envelope classes: Tight characterization via
  {P}oisson sampling.
\newblock \emph{ArXiv e-prints}, abs/1405.7460, 2014.

\bibitem[Anderson(1970)]{And70}
C.W. Anderson.
\newblock {Extreme value theory for a class of discrete distributions with
  applications to some stochastic processes}.
\newblock \emph{J. Appl. Probability}, 7:\penalty0 99--113, 1970.

\bibitem[Arlot(2009)]{Arl09}
S.~Arlot.
\newblock Model selection by resampling penalization.
\newblock \emph{Electron. J. Stat.}, 3:\penalty0 557--624, 2009.

\bibitem[Ben-Hamou et~al.(2017)Ben-Hamou, Boucheron, and
  Ohannessian]{ben2014concentration}
A.~Ben-Hamou, S.~Boucheron, and M.I. Ohannessian.
\newblock Concentration inequalities in the infinite urn scheme for occupancy
  counts and the missing mass, with applications.
\newblock \emph{Bernoulli}, 2017.
\newblock to appear, ArXiv e-prints arXiv:1412.8652.

\bibitem[Bingham et~al.(1989)Bingham, Goldie, and Teugels]{BinGolTeu89}
N.H. Bingham, C.M. Goldie, and J.L. Teugels.
\newblock \emph{{Regular variation}}, volume~27.
\newblock Cambridge University Press, 1989.

\bibitem[Bontemps(2011)]{Bon11}
D.~Bontemps.
\newblock {Universal coding on infinite alphabets: exponentially decreasing
  envelopes.}
\newblock \emph{{IEEE Trans. Inform. Theory}}, 57:\penalty0 1466--1478, 2011.

\bibitem[Bontemps et~al.(2014)Bontemps, Boucheron, and
  Gassiat]{bontemps2012adaptiveb}
D.~Bontemps, S.~Boucheron, and E.~Gassiat.
\newblock {About adaptive coding on countable alphabets}.
\newblock \emph{{IEEE Trans. Inform. Theory}}, 60:\penalty0 808--821, 2014.

\bibitem[Boucheron et~al.(2009)Boucheron, Garivier, and
  Gassiat]{boucheron:garivier:gassiat:2006}
S.~Boucheron, A.~Garivier, and E.~Gassiat.
\newblock {Coding over Infinite Alphabets}.
\newblock \emph{{IEEE Trans. Inform. Theory}}, 55:\penalty0 358--373, 2009.

\bibitem[Boucheron et~al.(2013)Boucheron, Lugosi, and Massart]{boluma13}
S.~Boucheron, G.~Lugosi, and P.~Massart.
\newblock \emph{{Concentration Inequalities: A Nonasymptotic Theory of
  Independence}}.
\newblock {Oxford Univ. Press}, 2013.

\bibitem[Boucheron et~al.(2015)Boucheron, Gassiat, and
  Ohannessian]{boucheron2015adaptive}
S.~Boucheron, E.~Gassiat, and M.I. Ohannessian.
\newblock About adaptive coding on countable alphabets: max-stable envelope
  classes.
\newblock \emph{{IEEE Trans. Inform. Theory}}, 61:\penalty0 4948--4967, 2015.

\bibitem[Catoni(2004)]{catoni:2004}
O.~Catoni.
\newblock \emph{{Statistical learning theory and stochastic optimization}},
  volume 1851 of \emph{{Lecture Notes in Mathematics}}.
\newblock {Springer-Verlag}, 2004.
\newblock Ecole d'Et\'e de Probabilit\'es de Saint-Flour XXXI.

\bibitem[Cesa-Bianchi and Lugosi(2006)]{cesa-bianchi:lugosi:2006}
N.~Cesa-Bianchi and G.~Lugosi.
\newblock \emph{{Prediction, Learning, and Games}}.
\newblock {Cambridge Univ. Press}, 2006.

\bibitem[Clarke and Barron(1990)]{barron:clarke:1990}
B.S. Clarke and A.R. Barron.
\newblock {Information-theoretic asymptotics of {B}ayes methods}.
\newblock \emph{{IEEE Trans. Inform. Theory}}, 36:\penalty0 453--471, 1990.

\bibitem[Clarke and Barron(1994)]{barron:clarke:1994}
B.S. Clarke and A.R Barron.
\newblock {Jeffrey's prior is asymptotically least favorable under entropy
  risk}.
\newblock \emph{J. Stat. Planning and Inference}, 41:\penalty0 37--60, 1994.

\bibitem[Cover and Thomas(1991)]{cover:thomas:1991}
T.~Cover and J.~Thomas.
\newblock \emph{{Elements of information theory}}.
\newblock John Wiley \& sons, 1991.

\bibitem[Csisz{\'a}r and K{\"o}rner(1981)]{csiszar:korner:1981}
I.~Csisz{\'a}r and J.~K{\"o}rner.
\newblock \emph{{Information Theory: Coding Theorems for Discrete Memoryless
  Channels}}.
\newblock Academic Press, 1981.

\bibitem[{de Haan} and Ferreira(2006)]{HaaFei06}
L.~{de Haan} and A.~Ferreira.
\newblock \emph{{Extreme value theory}}.
\newblock {Springer-Verlag}, 2006.

\bibitem[Dubhashi and Ranjan(1998)]{dubhashi:ranjan:1998}
D.~Dubhashi and D.~Ranjan.
\newblock {Balls and bins: A study in negative dependence.}
\newblock \emph{{Random Struct. \& Algorithms}}, 13:\penalty0 99--124, 1998.

\bibitem[Elias(1975)]{MR0373753}
P.~Elias.
\newblock {Universal codeword sets and representations of the integers}.
\newblock \emph{{IEEE Trans. Inform. Theory}}, 21:\penalty0 194--203, 1975.

\bibitem[{Falahatgar} et~al.(2015){Falahatgar}, {Jafarpour}, {Orlitsky},
  {Pichapati}, and {Theertha Suresh}]{2015arXiv150408070F}
M.~{Falahatgar}, A.~{Jafarpour}, A.~{Orlitsky}, V.~{Pichapati}, and
  A.~{Theertha Suresh}.
\newblock {Universal Compression of Power-Law Distributions}.
\newblock \emph{ArXiv e-prints}, 2015.

\bibitem[Garivier(2006)]{Gar06}
A.~Garivier.
\newblock {Redundancy of the context-tree weighting method on renewal and
  {M}arkov renewal processes}.
\newblock \emph{{IEEE Trans. Inform. Theory}}, 52:\penalty0 5579--5586, 2006.

\bibitem[Garivier(2009)]{garivier:2006}
A.~Garivier.
\newblock {A lower bound for the maximin redundancy in pattern coding}.
\newblock \emph{Entropy}, 11:\penalty0 634--642, 2009.

\bibitem[Gassiat(2014)]{gassiat2014codage}
E.~Gassiat.
\newblock \emph{Codage universel et identification d'ordre par s{\'e}lection de
  mod{\`e}les}.
\newblock Soci{\'e}t{\'e} {M}ath{\'e}matique de {F}rance, 2014.

\bibitem[Gnedin et~al.(2007)Gnedin, Hansen, and Pitman]{GneHanPit07}
A.~Gnedin, B.~Hansen, and J.~Pitman.
\newblock {Notes on the occupancy problem with infinitely many boxes: general
  asymptotics and power laws}.
\newblock \emph{Probab. Surv.}, 4:\penalty0 146--171, 2007.
\newblock \doi{10.1214/07-PS092}.

\bibitem[Gy{\"o}rfi et~al.(2006)Gy{\"o}rfi, Kohler, Krzyzak, and
  Walk]{GyKhKrWa06}
L.~Gy{\"o}rfi, M.~Kohler, A.~Krzyzak, and H.~Walk.
\newblock \emph{A distribution-free theory of nonparametric regression}.
\newblock {Springer-Verlag}, 2006.

\bibitem[Haussler and Opper(1997)]{MR1604481}
D.~Haussler and M.~Opper.
\newblock {Mutual information, metric entropy and cumulative relative entropy
  risk}.
\newblock \emph{{Annals of Statistics}}, 25:\penalty0 2451--2492, 1997.

\bibitem[Jacquet and Szpankowski(1998)]{jacquet1998analytical}
P.~Jacquet and W.~Szpankowski.
\newblock Analytical depoissonization and its applications.
\newblock \emph{Theoretical Computer Science}, 201:\penalty0 1--62, 1998.

\bibitem[Jiao et~al.(2015)Jiao, Venkat, Han, and Weissman]{jiao2015minimax}
J.~Jiao, K.~Venkat, Y.~Han, and T.~Weissman.
\newblock Minimax estimation of functionals of discrete distributions.
\newblock \emph{{IEEE Trans. Inform. Theory}}, 61:\penalty0 2835--2885, 2015.

\bibitem[Karlin(1967)]{MR0216548}
S.~Karlin.
\newblock {Central limit theorems for certain infinite urn schemes}.
\newblock \emph{J. Math. Mech.}, 17:\penalty0 373--401, 1967.

\bibitem[Kieffer(1978)]{MR514346}
J.C. Kieffer.
\newblock {A unified approach to weak universal source coding}.
\newblock \emph{{IEEE Trans. Inform. Theory}}, 24:\penalty0 674--682, 1978.

\bibitem[Merhav and Feder(1998)]{MeFe98}
N.~Merhav and M.~Feder.
\newblock {Universal prediction}.
\newblock \emph{{IEEE Trans. Inform. Theory}}, 44:\penalty0 2124--2147, 1998.

\bibitem[Orlitsky and Santhanam(2004)]{MR2097043}
A.~Orlitsky and N.P. Santhanam.
\newblock {Speaking of infinity}.
\newblock \emph{{IEEE Trans. Inform. Theory}}, 50:\penalty0 2215--2230, 2004.

\bibitem[Orlitsky et~al.(2003)Orlitsky, Santhanam, and Zhang]{MR2024675}
A.~Orlitsky, N.P. Santhanam, and J.~Zhang.
\newblock {Always {G}ood-{T}uring: asymptotically optimal probability
  estimation}.
\newblock \emph{Science}, 302:\penalty0 427--431, 2003.

\bibitem[Orlitsky et~al.(2004{\natexlab{a}})Orlitsky, Santhanam, Viswanathan,
  and Zhang]{orlitsky2004modeling}
A.~Orlitsky, N.P. Santhanam, K.~Viswanathan, and J.~Zhang.
\newblock On modeling profiles instead of values.
\newblock In \emph{Proc. of the 20th conf. on {U}ncertainty in {A}rtificial
  {I}ntelligence}, pages 426--435. AUAI Press, 2004{\natexlab{a}}.

\bibitem[Orlitsky et~al.(2004{\natexlab{b}})Orlitsky, Santhanam, and
  Zhang]{MR2095850}
A.~Orlitsky, N.P. Santhanam, and J.~Zhang.
\newblock {Universal compression of memoryless sources over unknown alphabets}.
\newblock \emph{{IEEE Trans. Inform. Theory}}, 50:\penalty0 1469--1481,
  2004{\natexlab{b}}.

\bibitem[Rissanen and Langdon(1979)]{rissanen:langdon:1984}
J.~Rissanen and Jr. G.~G. Langdon.
\newblock {Arithmetic coding}.
\newblock \emph{IBM J. Res. Develop.}, 23:\penalty0 149--162, 1979.

\bibitem[Ryabko(1984)]{Rya1984}
B.Y. Ryabko.
\newblock {Twice-universal coding}.
\newblock \emph{Problemy Peredachi Informatsii}, 20:\penalty0 24--28, 1984.

\bibitem[Santhanam et~al.(2014)Santhanam, Anantharam, Kavcic, and
  Szpankowski]{SaAnKaSz14}
N.~Santhanam, V.~Anantharam, A.~Kavcic, and W.~Szpankowski.
\newblock Data-driven weak universal redundancy.
\newblock In \emph{{IEEE Intern. Symp. Inform. Theory}}, pages 1877--1881.
  IEEE, 2014.

\bibitem[Shamir(2006)]{shamir-2006}
Gil~I. Shamir.
\newblock {Universal Lossless Compression with Unknown Alphabets - The Average
  Case}, 2006.
\newblock URL
  \url{http://www.citebase.org/abstract?id=oai:arXiv.org:cs/0603068}.

\bibitem[Shayevitz et~al.(2006)Shayevitz, Zamir, and Feder]{ShaZaFe06}
O.~Shayevitz, R.~Zamir, and M.~Feder.
\newblock Bounded expected delay in arithmetic coding.
\newblock In \emph{{IEEE Intern. Symp. Inform. Theory}}, pages 2662--2666,
  2006.

\bibitem[Suresh(2016)]{Sur16}
A.T. Suresh.
\newblock \emph{Statistical Inference over Large Domains}.
\newblock {PhD} {D}issertation, {UCSD}, 2016.

\bibitem[Szpankowski and Weinberger(2012)]{SzpWei10}
W.~Szpankowski and M.J. Weinberger.
\newblock {Minimax pointwise redundancy for memoryless models over large
  alphabets}.
\newblock \emph{{IEEE Trans. Inform. Theory}}, 58:\penalty0 4094--4104, 2012.

\bibitem[Tarjan(1983)]{Tar83}
R.-E. Tarjan.
\newblock \emph{Data structures and network algorithms}, volume~44 of
  \emph{CBMS-NSF Regional Conference Series in Applied Mathematics}.
\newblock (SIAM), 1983.

\bibitem[Tsybakov(2004)]{MR2013911}
A.B. Tsybakov.
\newblock \emph{{Introduction {\`a} l'estimation non-param{\'e}trique}},
  volume~41 of \emph{{Math{\'e}matiques \& Applications}}.
\newblock Springer-Verlag, Berlin, 2004.

\bibitem[Valiant and Valiant(2011)]{valiant2011estimating}
G.~Valiant and P.~Valiant.
\newblock Estimating the unseen: an n/log(n)-sample estimator for entropy and
  support size, shown optimal via new {CLTs}.
\newblock In \emph{Proc. of 43rd {ACM} {S}ymp. on {T}heory {O}f {C}omputing},
  pages 685--694. ACM, 2011.

\bibitem[Willems(1998)]{Wil98}
F.~M.J. Willems.
\newblock {The context-tree weighting method: extensions}.
\newblock \emph{{IEEE Trans. Inform. Theory}}, 44:\penalty0 792--798, 1998.

\bibitem[Xie and Barron(1997)]{xie:barron:1997}
Q.~Xie and A.~R. Barron.
\newblock {Minimax redundancy for the class of memoryless sources}.
\newblock \emph{{IEEE Trans. Inform. Theory}}, 43:\penalty0 646--656, 1997.

\bibitem[Xie and Barron(2000)]{xie:barron:2000}
Q.~Xie and A.~R. Barron.
\newblock {Asymptotic minimax regret for data compression, gambling and
  prediction}.
\newblock \emph{{IEEE Trans. Inform. Theory}}, 46:\penalty0 431--445, 2000.

\bibitem[Yang and Barron(2013)]{YanBar13}
X.~Yang and A.~Barron.
\newblock {Large Alphabet Coding and Prediction through Poissonization and
  Tilting}.
\newblock In \emph{{The Sixth Workshop on Information Theoretic Methods in
  Science and Engineering}}, Tokyo, August 2013.

\end{thebibliography}

\appendices


\section{Regular variation}
\label{sec:reg-var-tools}

The determination of bounds on minimax redundancies  and the analysis of the performance  of the \textsc{pc} code for classes with regularly varying envelopes
rely at some point on classical results from regular variation theory. For the sake of self-reference, these results are recalled here \citep[See][for proofs and refinements]{BinGolTeu89,HaaFei06}.

\begin{thm}\textsc{(Karamata's integration Theorem)}
  \label{thm:karamata}
Let $f\in\RV_\alpha$ and assume that there exists $t_0>0$ such that $f$ is positive and locally bounded on $[t_0,+\infty[$. 
\begin{enumerate}[(i)]
\item If $\alpha > -1$, then
\begin{displaymath}
\int_{t_0}^t f(s)\mathd s \underset{t\to +\infty}\sim \frac{tf(t)}{\alpha +1}\, .
\end{displaymath}
\item If $\alpha<-1$, or $\alpha=-1$ and $\int_0^\infty f(s)\mathd s<\infty$, then
\begin{displaymath}
\frac{tf(t)}{\int_{t}^{+\infty} f(s)\mathd s} \underset{t\to +\infty}\sim {-\alpha -1}\, .
\end{displaymath}
\end{enumerate}
\end{thm}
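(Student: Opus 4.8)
The plan is to reduce both parts to an application of dominated convergence after the scaling substitution $s=tu$, using Potter's inequality (already invoked in the proof of Lemma~\ref{lem:eval-sum-reg-var}) to produce an integrable majorant. Write $f(t)=t^\alpha\ell(t)$ with $\ell$ slowly varying, as in the definition of $\RV_\alpha$, and use the elementary fact that $t^\beta\ell(t)\to\infty$ for $\beta>0$ (slowly varying functions grow slower than any positive power). The only subtlety carried by the hypotheses is that $f$ is merely positive and locally bounded on $[t_0,\infty)$; this is exactly what makes $\int_{t_0}^{X}f(s)\,\mathd s$ finite for every fixed $X$.

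For part (i), substituting $s=tu$ gives, for any fixed $X>t_0$,
\begin{displaymath}
\frac{1}{t f(t)}\int_{t_0}^t f(s)\,\mathd s \;=\; \int_{t_0/t}^{1}\frac{f(tu)}{f(t)}\,\mathd u \;=\; \frac{1}{tf(t)}\int_{t_0}^{X}f(s)\,\mathd s \;+\; \int_{X/t}^{1}\frac{f(tu)}{f(t)}\,\mathd u .
\end{displaymath}
Fix $\delta>0$ with $\alpha-\delta>-1$ (possible since $\alpha>-1$). Potter's inequality yields $A>1$ and $X$ such that $f(tu)/f(t)\le A\,u^{\alpha-\delta}$ whenever $t\ge X$ and $X/t\le u\le 1$; the bound $u^{\alpha-\delta}$ is integrable on $(0,1]$, and $f(tu)/f(t)\to u^\alpha$ pointwise by the definition of $\RV_\alpha$, so dominated convergence gives $\int_{X/t}^{1}f(tu)/f(t)\,\mathd u\to\int_0^1 u^\alpha\,\mathd u=1/(\alpha+1)$. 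The first term is a fixed constant divided by $tf(t)=t^{\alpha+1}\ell(t)\to\infty$ (here $\alpha+1>0$), hence it vanishes. This proves (i).

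For part (ii) with $\alpha<-1$, set $g(t)=\int_t^\infty f(s)\,\mathd s$, which is finite since $\alpha<-1$ forces $f(s)=O(s^{\alpha+\delta})$ for small $\delta$. The same substitution gives $g(t)/(tf(t))=\int_1^\infty f(tu)/f(t)\,\mathd u$; choosing $\delta>0$ with $\alpha+\delta<-1$, Potter's inequality bounds $f(tu)/f(t)\le A\,u^{\alpha+\delta}$ for $u\ge1$ and $t$ large, an integrable majorant on $[1,\infty)$, and dominated convergence yields $g(t)/(tf(t))\to\int_1^\infty u^\alpha\,\mathd u=1/(-\alpha-1)$, i.e. $tf(t)/g(t)\to-\alpha-1$. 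For the boundary case $\alpha=-1$ with $\int_0^\infty f<\infty$, the target is $tf(t)/g(t)\to0$ and the Potter majorant is no longer integrable at infinity, so I would argue by squeezing instead: for any fixed $\lambda>1$, restricting the range of integration,
\begin{displaymath}
\frac{g(t)}{t f(t)}\;\ge\;\int_1^\lambda \frac{f(tu)}{f(t)}\,\mathd u\;\underset{t\to\infty}\longrightarrow\;\int_1^\lambda u^{-1}\,\mathd u\;=\;\ln\lambda ,
\end{displaymath}
the convergence being dominated convergence on the compact range $[1,\lambda]$. Hence $\limsup_t tf(t)/g(t)\le 1/\ln\lambda$; letting $\lambda\to\infty$ and using nonnegativity gives the limit $0=-\alpha-1$.

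The routine ingredients are the two invocations of Potter's inequality and the bookkeeping of exponents; the genuinely distinct point — the main obstacle — is the boundary case $\alpha=-1$ in (ii), which cannot be handled by dominated convergence and requires the $\lambda\to\infty$ squeeze above. Since complete proofs (including the uniform convergence theorem underlying Potter's inequality) are available in \citep{BinGolTeu89,HaaFei06}, within the paper the statement is recorded only for self-reference.
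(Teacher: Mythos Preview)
The paper does not prove Karamata's integration theorem: it is stated in Appendix~\ref{sec:reg-var-tools} purely for self-reference, with proofs deferred to \citep{BinGolTeu89,HaaFei06} --- exactly as you acknowledge in your final paragraph. Your argument is correct and is the standard textbook proof: the scaling substitution $s=tu$, a Potter bound to produce an integrable majorant, dominated convergence for the nondegenerate cases, and the truncation-to-$[1,\lambda]$ squeeze for the boundary case $\alpha=-1$. There is therefore nothing in the paper to compare your approach against.
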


\begin{thm}\textsc{(Potter-Drees inequalities.)}
  \label{thm:potter:rv}
If $f \in \textsc{rv}_\alpha$, then for all  $\delta>0$, there exists $t_0=t_0(\alpha)$, 
such that for all $t,x\colon  \min(t,tx)>t_0$,
\begin{displaymath}
 (1-\delta) x^\alpha\min\big(x^\delta,x^{-\delta} \big)  \leq \frac{f(tx)}{f(t)} \leq (1+\delta) x^\alpha\max\big(x^\delta,x^{-\delta} \big)\, .
\end{displaymath}
\end{thm}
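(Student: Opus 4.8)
The plan is to strip off the $x^\alpha$ factor, reduce to the slowly varying case, and then control the surviving ratio through the Karamata representation theorem.

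First, write $f(t) = t^\alpha \ell(t)$ with $\ell$ slowly varying --- this is exactly the equivalence recorded in the definition of $\RV_\alpha$. Then $f(tx)/f(t) = x^\alpha\, \ell(tx)/\ell(t)$, so multiplying the target inequalities by $x^{-\alpha} > 0$ shows it suffices to treat the case $\alpha = 0$: for every $\delta > 0$ there is a $t_0$ such that $\min(t,tx) > t_0$ forces $(1-\delta)\min(x^\delta,x^{-\delta}) \le \ell(tx)/\ell(t) \le (1+\delta)\max(x^\delta,x^{-\delta})$.

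Second, I would invoke the Karamata representation theorem: a measurable slowly varying $\ell$ can be written, for $t$ large, as $\ell(t) = c(t)\exp\bigl(\int_a^t \tfrac{\varepsilon(u)}{u}\,\mathd u\bigr)$ with $c(t) \to c_0 \in (0,\infty)$ and $\varepsilon(u) \to 0$. Hence $\ell(tx)/\ell(t) = \tfrac{c(tx)}{c(t)}\exp\bigl(\int_t^{tx}\tfrac{\varepsilon(u)}{u}\,\mathd u\bigr)$, and the substitution $u = ts$ turns the exponent into $\int_1^x \tfrac{\varepsilon(ts)}{s}\,\mathd s$ (the integral running over $[x,1]$, with a sign, when $x < 1$). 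Fix $\eta > 0$ and choose $t_1$ with $|\varepsilon(u)| \le \eta$ for $u \ge t_1$; then whenever $\min(t,tx) > t_1$, every $u = ts$ with $s$ between $1$ and $x$ satisfies $u > t_1$, so $\bigl|\int_1^x \tfrac{\varepsilon(ts)}{s}\,\mathd s\bigr| \le \eta\,|\ln x|$ and the exponential lies in $[e^{-\eta|\ln x|}, e^{\eta|\ln x|}] = [\min(x^\eta,x^{-\eta}),\max(x^\eta,x^{-\eta})]$.

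Third, enlarge $t_1$ to a $t_0$ such that $\min(t,tx) > t_0$ also forces $|c(tx)/c(t) - 1| \le \eta$; multiplying the two estimates and then choosing $\eta \le \delta$ (so that $1 \pm \eta$ stays within $1 \pm \delta$ and $e^{\mp\eta|\ln x|}$ dominates $e^{\mp\delta|\ln x|}$ in the required direction) yields the claim, the $\min/\max$ being forced only because the sign of the exponent flips between $x \ge 1$ and $0 < x < 1$. The one genuinely delicate point is \emph{uniformity}: since $x$ ranges over all of $(0,\infty)$ one cannot simply quote the uniform convergence theorem on a compact $x$-interval --- the Karamata representation is precisely what converts ``$\varepsilon(u) \to 0$'' into the sharp, $|\ln x|$-linear growth of the log-ratio, hence into the $x^{\pm\delta}$ envelopes. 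An alternative route --- splitting $[t,tx]$ into $\lceil \ln x\rceil$ subintervals of bounded ratio and iterating the uniform convergence bound --- also works, but needs slightly more care comparing the accumulated error $(1 \pm \eta)^{\lceil \ln x\rceil}$ with $x^{\pm\delta}$; the remaining bookkeeping is routine.
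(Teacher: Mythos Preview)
Your argument is correct and is precisely the standard textbook proof (via the Karamata representation) that one finds in \citep{BinGolTeu89}. Note, however, that the paper itself does \emph{not} prove this statement: it is listed in the appendix on regular variation as a classical result quoted ``for the sake of self-reference,'' with proofs deferred to \citep{BinGolTeu89,HaaFei06}. So there is no ``paper's own proof'' to compare against; your sketch simply supplies what the authors cite. One cosmetic remark: you may as well set $\eta=\delta$ from the outset, since (as you implicitly use) $e^{\eta|\ln x|}=\max(x^{\eta},x^{-\eta})$ is monotone in $\eta$, so the final ``$\eta\le\delta$'' adjustment is unnecessary.
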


In order to relate Karlin's infinite urn scheme setting and the setting of \citep{boucheron2015adaptive} in Appendix \ref{sec:connections_between_countin}, we need to compute asymptotic inverses of regularly varying functions.  This is done using the notion of De Bruijn conjugacy.
\begin{thm}[\textsc{de bruijn conjugacy}]\citep[Proposition 1.5.15]{BinGolTeu89} \label{the:debruijn}
  Let $\ell\in \textsc{rv}(0)$, then there exists a function $\ell^\#\in \textsc{rv}(0)$ such that $\ell^\#(x)\ell(x\ell^\#(x))\rightarrow 1$ 
  and $\ell(x)\ell^\#(x\ell(x))\rightarrow 1$ as $x \to \infty$. Any function satisfying these two relations is asymptotically equivalent to $\ell^\#$. The functions $(\ell,\ell^\#)$ are said to form a pair of De Bruijn conjugates.
\end{thm}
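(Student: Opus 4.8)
The plan is to obtain the de Bruijn conjugacy statement as a corollary of the existence and essential uniqueness of asymptotic inverses of regularly varying functions of positive index, \citep[Theorem~1.5.12]{BinGolTeu89}, rather than to argue everything from scratch. First I would set $g(x)=x\ell(x)$; since $\ell\in\RV_0$ we have $g\in\RV_1$, so $g$ admits an asymptotic inverse $g^{\leftarrow}\in\RV_1$, i.e.\ $g(g^{\leftarrow}(x))\sim x$ and $g^{\leftarrow}(g(x))\sim x$ as $x\to\infty$, with $g^{\leftarrow}$ unique up to asymptotic equivalence. Because $g^{\leftarrow}\in\RV_1$ I would write $g^{\leftarrow}(x)=x\,\ell^{\#}(x)$ with $\ell^{\#}\in\RV_0$, and this $\ell^{\#}$ is the candidate conjugate.

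The two functional relations then fall out by unwinding definitions. The relation $g(g^{\leftarrow}(x))\sim x$ reads $x\,\ell^{\#}(x)\,\ell\big(x\ell^{\#}(x)\big)\sim x$, hence $\ell^{\#}(x)\,\ell(x\ell^{\#}(x))\to1$; and $g^{\leftarrow}(g(x))\sim x$ reads $x\,\ell(x)\,\ell^{\#}\big(x\ell(x)\big)\sim x$, hence $\ell(x)\,\ell^{\#}(x\ell(x))\to1$. For the essential-uniqueness clause I would take any $u$ satisfying both relations, put $h(x)=x\,u(x)$, and note that the first relation gives $g(h(x))/x=u(x)\,\ell(xu(x))\to1$, so $g\circ h\sim\mathrm{id}$; since $g(y)\to\infty$ this forces $h(x)\to\infty$. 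Applying $g^{\leftarrow}$ and using the uniform convergence theorem \citep[Theorem~1.2.1]{BinGolTeu89} (a regularly varying function maps asymptotically equivalent arguments to asymptotically equivalent values) gives $g^{\leftarrow}(g(h(x)))\sim g^{\leftarrow}(x)$, while $g^{\leftarrow}\circ g\sim\mathrm{id}$ together with $h(x)\to\infty$ gives $g^{\leftarrow}(g(h(x)))\sim h(x)$; hence $h(x)\sim g^{\leftarrow}(x)=x\ell^{\#}(x)$, that is $u\sim\ell^{\#}$.

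The only genuinely non-trivial ingredient is Theorem~1.5.12 of \citep{BinGolTeu89} itself — the existence and $\sim$-uniqueness of $g^{\leftarrow}$ — which rests on the fact that a function in $\RV_1$ is asymptotically equivalent to a strictly increasing continuous (indeed smoothly varying) function, so that one may take an honest inverse. That is the step I expect to be the main obstacle, and the one I would simply cite from \citep[\S1.5]{BinGolTeu89} rather than reprove; everything else is bookkeeping with the definition of regular variation.
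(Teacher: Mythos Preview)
The paper does not prove this statement at all: it is quoted in the appendix as background material, with the explicit citation \citep[Proposition~1.5.15]{BinGolTeu89} and the disclaimer that ``these results are recalled here \citep[See][for proofs and refinements]{BinGolTeu89,HaaFei06}.'' There is therefore no in-paper proof to compare against.

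That said, your derivation is correct and is in fact precisely the argument \citeauthor{BinGolTeu89} use to obtain Proposition~1.5.15 from Theorem~1.5.12: one passes from $\ell\in\RV_0$ to $g(x)=x\ell(x)\in\RV_1$, invokes existence and asymptotic uniqueness of an asymptotic inverse $g^{\leftarrow}\in\RV_1$, writes $g^{\leftarrow}(x)=x\,\ell^{\#}(x)$, and reads off the two conjugacy relations and the uniqueness clause. Your handling of the uniqueness step via the uniform convergence theorem is clean. The only remark is that when you write $g^{\leftarrow}(x)=x\,\ell^{\#}(x)$ with $\ell^{\#}\in\RV_0$, you are implicitly using the (easy) representation that any $h\in\RV_1$ can be written as $x$ times a slowly varying function; this is immediate from the definition, so no gap there.
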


If $f = t^\gamma \ell(t)$ where $\ell$  is slowly varying, then $\ell^{1/\gamma}$ is also slowly varying and the function 
$y \mapsto y^{1/\gamma}\big( \ell^{1/\gamma}\big)^\#\big(y^{1/\gamma}\big)$ is an asymptotic inverse of $f$. This entails that for $\alpha>0,$ 
$f \in \textsc{rv}_ \alpha$,  any asymptotic inverse of $f$ is regularly varying with regular variation index $1/ \alpha$. 

While exploring source classes defined by slowly varying counting functions, a refined notion of regular variation due to de Haan may prove relevant. 

\begin{dfn}\label{def:dehaan} A slowly varying function $\ell$ has the extended regular variation property if there exists a slowly varying function $\ell_0$ such that for all $x>0$ \[
	\lim_{t \to \infty} \frac{\ell(tx)-\ell(t)}{\ell_0(t)} = \log (x) \, ,
\]	
	this is summarized by  $\ell \in  \Pi_{\ell_0}$. 
\end{dfn}
Slowly varying functions satisfying the extended regular variation property form a proper subset of $\textsc{rv}_0$. For example,  $g(t)= \lfloor \log(t) \rfloor$ and 
$g(t)= \lfloor \log(t) \rfloor^2$
are slowly varying but do not satisfy the extended regular variation property, while 
$g(t)= \lfloor \log(t)^2 \rfloor \in \Pi_{2 \log}$.

\section{Infinite urn schemes and regular variation} 
\label{sec:infinite_urn_sch}

The next theorem connects $\EXP K_n $ and the regular variation properties of the envelope
(characterized by the regular variation of $\vec\nu$). For much more on the asymptotic behavior of occupancy counts, see \citet*{MR0216548,GneHanPit07}.

\begin{thm}[\cite{MR0216548,GneHanPit07}]\label{thm:asympt-occup}
Assume that there exists a slowly varying function $\ell$ such that $\vec\nu(1/n)= n^\alpha\ell(n)$, with $\alpha\in[0,1]$ and $\ell$ slowly varying at infinity.
\begin{enumerate}[i)]
\item If $0<\alpha<1$, 
$$
\EXP K_n \sim \Gamma(1-\alpha)\vec\nu(1/n)\, ,
\quad\mbox{and}\quad\nu_{1}[0,1/n]\sim \frac{\alpha\vec\nu(1/n)}{(1-\alpha)n}\,.
$$
\item If $\alpha=1$, 
$$
\EXP K_n \sim \EXP K_{n,1} \sim n\nu_{1}[0,1/n]\sim n\ell_1(n)\quad \text{where} 
\quad \ell_1(n)=\int_n^\infty \frac{\ell(t)}{t}\mathd t\gg\ell(n) \, . 
$$
\item If $\alpha=0$, 
$$
\EXP K_n \sim \ell(n)\, ,\quad
\mbox{and}\quad\nu_{1}[0,1/n]\ll \frac{\ell(n)}{n}\,.
$$
If furthermore,  $\vec\nu(1/\cdot) \in \Pi_{\ell_0}$, then $\nu_1[0,x] \sim x \ell_0(1/x)$ and $\ell(x)\sim\int_1^{x}u^{-1}\ell_0(u)\mathd u$. 
\end{enumerate}
\end{thm}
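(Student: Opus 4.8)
The plan is to follow the classical route of Karlin and of Gnedin, Hansen and Pitman: replace the binomial occupancy functionals by their Poissonized counterparts, which carry clean integral representations against $\nu$ and $\nu_1$, apply Karamata-type theorems to those integrals, and then de-Poissonize. Concretely, I would first introduce $\Phi(t)=\sum_{j\geq1}(1-\mathe^{-tp_j})=\int_0^\infty(1-\mathe^{-tx})\,\nu(\mathd x)$, the expected number of occupied boxes under Poisson$(t)$ sampling, and more generally $\Phi_r(t)=\frac{t^r}{r!}\int_0^\infty x^r\mathe^{-tx}\,\nu(\mathd x)$ for the expected number of $r$-fold occupied boxes, noting that $\nu_1[0,x]=\int_0^x y\,\nu(\mathd y)$ is already deterministic. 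Writing $\nu(\mathd x)=-\mathd\vec\nu(x)$ and integrating by parts --- the boundary terms vanishing because $\vec\nu(x)=o(1/x)$ (cf. \citep{GneHanPit07}) --- recasts each functional as an integral of $\vec\nu$, e.g. $\Phi(t)=t\int_0^\infty\mathe^{-tx}\vec\nu(x)\,\mathd x=\int_0^\infty\mathe^{-s}\vec\nu(s/t)\,\mathd s$ and $\nu_1[0,x]=\int_0^x\vec\nu(y)\,\mathd y-x\vec\nu(x)$.

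For $0<\alpha<1$, put $g(u)=\vec\nu(1/u)\in\RV_\alpha$, so that $\Phi(t)/g(t)=\int_0^\infty\mathe^{-s}\,g(t/s)/g(t)\,\mathd s$. I would invoke the uniform convergence theorem for regularly varying functions to get $g(t/s)/g(t)\to s^{-\alpha}$ locally uniformly in $s$, together with Potter's bounds (Theorem~\ref{thm:potter:rv}) to dominate the integrand by an integrable function (this is where $\alpha<1$ enters), yielding $\Phi(t)\sim\Gamma(1-\alpha)\vec\nu(1/t)$. For $\nu_1$, the substitution $y=1/u$ gives $\int_0^x\vec\nu(y)\,\mathd y=\int_{1/x}^\infty g(u)u^{-2}\,\mathd u$ with $g(u)u^{-2}\in\RV_{\alpha-2}$ and $\alpha-2<-1$, so Karamata's integration theorem (Theorem~\ref{thm:karamata}(ii)) gives $\int_{1/x}^\infty g(u)u^{-2}\,\mathd u\sim x\vec\nu(x)/(1-\alpha)$; subtracting $x\vec\nu(x)$ leaves $\nu_1[0,x]\sim\frac{\alpha}{1-\alpha}x\vec\nu(x)$, which is the stated asymptotics at $x=1/n$.

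De-Poissonization for $\EXP K_n$ is cheap: $1-x\leq\mathe^{-x}$ gives $\EXP K_n\geq\Phi(n)$, while $\ln(1-x)\geq-x-x^2$ on $[0,\tfrac{1}{2}]$ gives $0\leq\EXP K_n-\Phi(n)\leq n\int_0^\infty x^2\mathe^{-nx}\,\nu(\mathd x)+O(\mathe^{-n/2})=n\int_0^\infty x\mathe^{-nx}\,\nu_1(\mathd x)+O(\mathe^{-n/2})\leq\mathe^{-1}\nu_1[0,1]+O(\mathe^{-n/2})=O(1)$, hence $\EXP K_n\sim\Phi(n)$. For $K_{n,1}$ I would run the analogous comparison, or use a smoothing argument based on monotonicity and sub-additivity of $(\EXP K_m)_m$, concentration of a Poisson$(n)$ variable within $O(\sqrt n)$ of $n$, and the distribution-free bound $\EXP K_m\leq\vec\nu(1/m)+m\nu_1(0,1/m)=O(m^\alpha\ell(m))$ from Lemma~\ref{lem:link_Mi0_Ki_vecnu}(ii). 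The boundary cases are then handled separately: at $\alpha=0$, $\Gamma(1)=1$ so the above gives $\EXP K_n\sim\ell(n)$ directly, and in $\nu_1[0,1/n]=\int_0^{1/n}\vec\nu(y)\,\mathd y-n^{-1}\vec\nu(1/n)$ the two leading terms are asymptotically equal (Karamata on the slowly varying integrand), so their difference is $o(\ell(n)/n)$; at $\alpha=1$ the limit constant degenerates since $\int_0^\infty\mathe^{-s}s^{-1}\,\mathd s=\infty$, and instead one shows multiple occupancies are negligible, $\Phi_r(t)=o(\Phi_1(t))$ for $r\geq2$, whence $\EXP K_n\sim\EXP K_{n,1}\sim n\,\nu_1[0,1/n]$, with $\nu_1[0,1/n]\sim\ell_1(n)$ and $\ell_1\gg\ell$ being the monotone-density/de Haan statement about the primitive $\int_n^\infty t^{-1}\ell(t)\,\mathd t$ of a slowly varying density; under the extra hypothesis $\vec\nu(1/\cdot)\in\Pi_{\ell_0}$ the monotone density theorem for $\Pi$-varying functions (see \citep{BinGolTeu89}) delivers $\nu_1[0,x]\sim x\ell_0(1/x)$ and $\ell(x)\sim\int_1^x u^{-1}\ell_0(u)\,\mathd u$.

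The hard part is not the Karamata computation of the second paragraph but the control just sketched: making de-Poissonization quantitative enough to remain $o(\vec\nu(1/n))$ uniformly over $\alpha\in(0,1)$ --- especially for $\alpha\leq\tfrac{1}{2}$, where crude fluctuation bounds like $\EXP K_m\leq m$ are useless and one must exploit $\EXP K_m\asymp\vec\nu(1/m)$ --- and correctly treating the critical index $\alpha=1$, where the naive limit constant blows up and one is forced through the singleton count and de Haan's theory of $\Pi$-variation.
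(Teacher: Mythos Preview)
The paper does not give its own proof of this statement: Theorem~\ref{thm:asympt-occup} is quoted from the literature (Karlin~\citep{MR0216548} and Gnedin--Hansen--Pitman~\citep{GneHanPit07}) and stated without proof in Appendix~\ref{sec:infinite_urn_sch}. Your proposal is precisely the classical argument from those references --- Poissonize, integrate by parts against $\vec\nu$, apply Karamata/Potter, then de-Poissonize --- so there is nothing to contrast.

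Two remarks on execution. First, your worry about de-Poissonization being ``quantitative enough \ldots\ especially for $\alpha\leq\tfrac12$'' is misplaced: the direct pointwise comparison you already wrote, $0\leq \EXP K_n-\Phi(n)\leq n\int_0^1 x^2\mathe^{-nx}\,\nu(\mathd x)\leq \mathe^{-1}$, gives a \emph{bounded} difference uniformly in $n$ and in the sampling distribution, which is $o(\vec\nu(1/n))$ for every $\alpha\in[0,1]$ simply because $\vec\nu(1/n)\to\infty$; no fluctuation argument or appeal to Lemma~\ref{lem:link_Mi0_Ki_vecnu} is needed. Second, at $\alpha=1$ your sketch is correct but the link $\EXP K_{n,1}\sim n\,\nu_1[0,1/n]$ deserves one line: write $\EXP K_{n,1}=n\int_0^1 x(1-x)^{n-1}\,\nu(\mathd x)$, split at $x=1/n$, and use $(1-x)^{n-1}\to 1$ on $[0,1/n]$ together with $\nu_1[0,1/n]\sim\ell_1(n)\gg\ell(n)$ to show the tail piece is negligible; the fact $\ell_1\gg\ell$ is exactly Proposition~1.5.9b of \citep{BinGolTeu89}, as you indicate.
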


\section{Connections between counting function and tail quantile function} 
\label{sec:connections_between_countin}

In \citep{bontemps2012adaptiveb,boucheron2015adaptive}, censoring methods and minimax redundancy rates for envelope classes are described 
using the envelope tail quantile function (or a smothed version of it) $U(t) =  \inf\{ x : \overline{F}(x) \leq \frac{1}{t}\}$ (where $\overline{F}$ denotes the envelope survival function). In order to clarify the connection between the performance of the \textsc{pc} code and the performance of the codes described in these two references, we relate the regular variation properties of the counting function $\vec{\nu}$ and the regular variation properties of the tail quantile function $U$. 

For the sake of simplicity we assume that the probability mass	function $(p_j)_{j\geq 1}$ is decreasing and  that no $p_j$ is null. This assumption entails 
$	\vec\nu(p_j) = j $ for $j\geq 1$.

For $t>1$, \[
	U(t) = \inf\big\{ j : \textstyle{\sum_{k>j}} p_k \leq \tfrac{1}{t}		\big\} =  \inf\big\{ j : \nu_1(0,p_j) \leq \tfrac{1}{t}		\big\} \, .
\]
This entails that $U$ may be defined from $\vec\nu$ and $\nu_1$,
\[
	U(t) = 1+ \vec{\nu}\left( \sup \left\{ x :  \nu_1(0,x) \leq \tfrac{1}{t}\right\}\right)  \, . 
\]
Note that \(
{1}/{ \sup \left\{ x :  \nu_1(0,x) \leq \tfrac{1}{t}\right\}} = \inf \left\{ y :  {1}/{\nu_1(0,1/y)} \geq t\right\} \, ,
\) so 
letting $h(y)$ be a shorthand for $1/\nu_1(0, 1/y)$, $U(t) = 1 +\vec \nu(1/h^{\leftarrow}(t))$.

\begin{lem}\label{lem:nu2U}
	Assume that there exists a slowly varying function $\ell$ such that $\vec\nu(x) \sim x^{-\alpha}\ell(1/x)$, with $\alpha\in[0,1)$ and 
	that the probability mass	function $(p_j)_{j\geq 1}$ is decreasing and  that no $p_j$ is null. Let $\gamma= \alpha/(1- \alpha)$.   	
	 Then,  
	\begin{enumerate}[i)]
		\item $U$ is asymptotically equivalent with a regularly varying function with index $\gamma.$ 
		\item If $\alpha \in (0,1)$, let  
	$\widetilde{\ell}$ be a shorthand for $(1/\ell^{1+\gamma})^\#$. Then  \[
			U(t) \underset{t \to \infty}{\sim} \gamma^\gamma t^{\gamma} \widetilde{\ell}\left(t^{1+\gamma}\right) \, . 
		\]
		\item If $\alpha=0$ and $\ell \in \Pi_{\ell_0}$, let $\widetilde{\ell}_0$ be a shorthand for $\left({1}/{\ell_0}\right)^\#$.
		Then $U$ has the extended regular variation property, $U \in  \Pi_{\widetilde{\ell}_0}$  and \[
			U(t) \underset{t \to \infty}{\sim} \ell \left( t \widetilde{\ell}_0(t) \right) \, . 
		\]	
		\item If $\alpha=0$  and the distribution  is discrete log-concave, then 
		$U(t) \underset{t \to \infty}{\sim}  \vec \nu(1/t)$. 
	\end{enumerate}
\end{lem}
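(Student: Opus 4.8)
The plan is to start from the representation of $U$ through $\vec\nu$ and $\nu_1$ established just before the statement, namely $U(t) = 1 + \vec\nu\big(1/h^{\leftarrow}(t)\big)$ with $h(y) = 1/\nu_1(0,1/y)$, and to feed into it the asymptotics of $\nu_1$ coming from $\vec\nu$ and from the regular variation hypothesis. When $\alpha\in(0,1)$, I would first integrate by parts to write $\nu_1(0,x) = \int_0^x \vec\nu(u)\,\mathrm d u - x\vec\nu(x)$ (the boundary term at $0$ vanishing because $\alpha<1$), then apply Karamata's integration Theorem (Theorem \ref{thm:karamata}, index $-\alpha>-1$) to get $\int_0^x\vec\nu(u)\,\mathrm d u \sim x\vec\nu(x)/(1-\alpha)$, whence $\nu_1(0,x)\sim\gamma\, x\vec\nu(x)\sim \gamma\, x^{1-\alpha}\ell(1/x)$; this makes $h$ regularly varying of index $1-\alpha$, with $h(y)\sim y^{1-\alpha}/(\gamma\ell(y))$. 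When $\alpha=0$ with $\ell\in\Pi_{\ell_0}$, I would instead invoke Theorem \ref{thm:asympt-occup}(iii), which gives directly $\nu_1[0,x]\sim x\ell_0(1/x)$, hence $h(y)\sim y/\ell_0(y)$, regularly varying of index $1$.

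The core of the argument is the computation of the asymptotic inverse $h^{\leftarrow}$ via De Bruijn conjugacy. Using the recipe recalled after Theorem \ref{the:debruijn} (the asymptotic inverse of $t^{\beta}\ell_*(t)$ is $y\mapsto y^{1/\beta}(\ell_*^{1/\beta})^{\#}(y^{1/\beta})$) together with the elementary fact that $(c\,\ell_*)^{\#}\sim \ell_*^{\#}/c$ for a positive constant $c$, I would obtain, for $\alpha\in(0,1)$, $h^{\leftarrow}(t)\sim \gamma^{1+\gamma}t^{1+\gamma}\widetilde\ell(t^{1+\gamma})$ with $\widetilde\ell=(1/\ell^{1+\gamma})^{\#}$ and $1+\gamma=1/(1-\alpha)$, and for $\alpha=0$, $h^{\leftarrow}(t)\sim t\widetilde\ell_0(t)$ with $\widetilde\ell_0=(1/\ell_0)^{\#}$. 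Plugging back, $U(t)\sim\vec\nu(1/h^{\leftarrow}(t))\sim (h^{\leftarrow}(t))^{\alpha}\ell(h^{\leftarrow}(t))$: the power factor contributes $\gamma^{\alpha(1+\gamma)}t^{\alpha(1+\gamma)}\widetilde\ell(t^{1+\gamma})^{\alpha}=\gamma^{\gamma}t^{\gamma}\widetilde\ell(t^{1+\gamma})^{\alpha}$ because $\alpha(1+\gamma)=\gamma$, while the slowly varying factor is simplified using the defining relation of the conjugate, $\widetilde\ell(z)\sim \ell(z\widetilde\ell(z))^{1+\gamma}$, which gives $\ell(h^{\leftarrow}(t))\sim\widetilde\ell(t^{1+\gamma})^{1-\alpha}$ (slow variation absorbing the constant $\gamma^{1+\gamma}$). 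Multiplying yields $U(t)\sim\gamma^{\gamma}t^{\gamma}\widetilde\ell(t^{1+\gamma})$, which is (ii); statement (i) for $\alpha\in(0,1)$ is then immediate. For $\alpha=0$, statement (i) follows from the same representation since $h\in\RV_1$ forces $h^{\leftarrow}\in\RV_1$, so $U=1+\vec\nu(1/\cdot)\circ h^{\leftarrow}$ is slowly varying as the composition of a slowly varying function with a regularly varying function of positive index. For (iii), I get the asymptotic value from $U(t)\sim\vec\nu(1/h^{\leftarrow}(t))\sim\ell(t\widetilde\ell_0(t))$, and the $\Pi$-membership from the composition theorem for $\Pi$-varying functions (if $g\in\Pi_{g_0}$ and $k\in\RV_{\rho}$, $\rho>0$, $k\to\infty$, then $g\circ k\in\Pi_{\rho(g_0\circ k)}$) applied to $g=\vec\nu(1/\cdot)$ and $k=h^{\leftarrow}$; the auxiliary function is $\ell_0(h^{\leftarrow}(t))\sim\ell_0(t\widetilde\ell_0(t))\sim\widetilde\ell_0(t)$, again by the conjugate relation, and the additive constant $1$ is harmless.

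Statement (iv) is of a different nature and bypasses $\nu_1$: under discrete log-concavity the hazard rate $r_j=p_j/\overline F(j-1)$ is non-decreasing, so with $r_1=p_1\in(0,1)$ one gets the two-sided bound $p_{j+1}\le \overline F(j)=p_{j+1}/r_{j+1}\le p_{j+1}/r_1$ for every $j\ge 1$. This sandwiches $U(t)=\inf\{j:\overline F(j)\le 1/t\}$ between $\inf\{j:p_{j+1}\le 1/t\}$ and $\inf\{j:p_{j+1}\le r_1/t\}$, which equal $\vec\nu(1/t)$ and $\vec\nu(r_1/t)$ up to $\pm 1$ (using that $(p_j)$ is decreasing with $\vec\nu(p_j)=j$, and that $\vec\nu\to\infty$); since $\vec\nu(1/\cdot)\in\RV_0$ we have $\vec\nu(r_1/t)\sim\vec\nu(1/t)$, and (iv) follows.

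The main obstacle I anticipate is the bookkeeping in the De Bruijn conjugacy step of parts (i)--(iii): keeping exact track of the constant $\gamma^{\gamma}$ and of the argument $t^{1+\gamma}$ through the composition $\vec\nu(1/h^{\leftarrow}(t))$, and verifying that the powers $\widetilde\ell(t^{1+\gamma})^{\alpha}$ and $\widetilde\ell(t^{1+\gamma})^{1-\alpha}$ recombine to exactly $\widetilde\ell(t^{1+\gamma})$ via the conjugate relation. A secondary subtlety is that membership in the de Haan class $\Pi$ is not preserved under mere asymptotic equivalence, so in (iii) one must apply the $\Pi$-composition theorem to $\vec\nu(1/\cdot)$ itself rather than to $\ell$.
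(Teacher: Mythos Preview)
Your approach for parts (i)--(iii) is essentially the paper's: the same representation $U(t)=1+\vec\nu(1/h^{\leftarrow}(t))$ with $h(y)=1/\nu_1(0,1/y)$, the same asymptotics for $\nu_1$ (the paper simply cites Theorem~\ref{thm:asympt-occup} where you spell out Karamata), and the same De~Bruijn inversion yielding $h^{\leftarrow}(t)\sim\gamma^{1+\gamma}t^{1+\gamma}\widetilde\ell(t^{1+\gamma})$ followed by composition with $\vec\nu(1/\cdot)$. You are in fact more explicit than the paper about the recombination $\widetilde\ell(t^{1+\gamma})^{\alpha}\cdot\widetilde\ell(t^{1+\gamma})^{1-\alpha}=\widetilde\ell(t^{1+\gamma})$ and about the $\Pi$-preservation subtlety in (iii); the paper's proof of (iii) tacitly identifies $\ell$ with $\vec\nu(1/\cdot)$ and invokes ``conjugacy arguments'' without elaboration.

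For (iv) you take a genuinely different route. The paper uses the log-concavity consequence $p_{k+1}\le\nu_1(0,p_k)\le p_{k+1}/(1-p_{k+1}/p_k)$ and argues that when $p_j\ge 1/t>p_{j+1}$ one has $j\le U(t)\le j+k$ with $k$ bounded by a quantity depending only on the ratio $p_{j+1}/p_j$; this yields the stronger conclusion that $U(t)-\vec\nu(1/t)$ is bounded. Your argument via the non-decreasing hazard rate $r_j=p_j/\overline F(j-1)$ gives the sandwich $p_{j+1}\le\overline F(j)\le p_{j+1}/p_1$, hence $\vec\nu(1/t)\lesssim U(t)\lesssim\vec\nu(p_1/t)$ up to $\pm1$, and concludes by slow variation. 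This is cleaner and perfectly sufficient for the asymptotic equivalence that is claimed, though it does not recover the additive $O(1)$ bound the paper obtains along the way.
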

\begin{proof}

If $\vec{\nu}(1/\cdot)$ has either positive regular variation index or has both the slow variation and the extended regular
variation property, then  $h^{\leftarrow}$ 
is the generalized inverse of the regularly varying function $1/\nu_1(0, 1/\cdot)$ (See Theorem \ref{thm:asympt-occup} above). 
As such, $h^{\leftarrow}$ is regularly varying, and so is $U.$

	If $h(y)= 1/\nu_1(0, 1/\cdot) \in \textsc{rv}_{1-\alpha}$ for $\alpha \in (0,1)$ , then its generalized inverse is regularly varying with index $1+\gamma= 1/(1-\alpha)$ and, from Propositions 1.5.14 and  1.5.15 from \citep{BinGolTeu89},   
	\[
		h^{\leftarrow} (t) \underset{t \to \infty}{\sim} {\gamma}^{1+\gamma} t^{1+\gamma} \widetilde{\ell}\left(t^{1+\gamma} \right)\, . 
	\]
	and 
\begin{eqnarray*}
	U(t) & \underset{t \to \infty}{\sim} &  \gamma^{\gamma} t^{\gamma} 
	\widetilde{\ell}\left(t^{1+\gamma} \right)^{\gamma/(1+\gamma)}
	\ell \big(t^{1+\gamma} \widetilde{\ell}\left(t^{1+\gamma} \right) \big)\\
	& \underset{t \to \infty}{\sim} & \gamma^{\gamma} t^{\gamma} \widetilde{\ell}\left(t^{1+\gamma} \right)  \, . 
\end{eqnarray*}

If $\vec \nu(1/\cdot) \in \Pi_{\ell_0}$ where $\ell_0$  is slowly varying, then by Theorem \ref{thm:asympt-occup} and
conjugacy arguments \[
	U(t) \sim  \ell \big( t \widetilde{\ell}_0(t) \big) \, . 
\]

If the distribution is discrete long-concave (which is equivalent to $(p_{k+1}/p_{k})$ is non increasing)
the counting function $\vec \nu$ is readily verified to be slowly varying. The fact that $U$ is slowly varying (but does not satisfy the extended regular variation property) is well known \citep{And70}. 

For all $k\geq 1$, $p_{k+1} \leq \nu_1(0, p_k) \leq p_{k+1} /(1-p_{k+1}/p_k)$. If $p_j \geq \tfrac{1}{t} > p_{j+1}$, then $j \leq U(t) \leq j+k$  where $k\leq 2+ \log(p_j/(p_j-p_{j+1})/\log(p_j/p_{j+1})$ which is bounded. This entails that $\vec \nu(1/t) \leq U(t)$ and $\lim_{t \to \infty} U(t)/\vec \nu(1/t)= 1.$ 
\end{proof}

In \citep{boucheron2015adaptive}, the \textsc{etac} code escapes the $n+1^{\text{th}}$ symbol $X_{n+1}$, if $X_{n+1} \geq M_n$ where 
\begin{equation}\label{def:mn}
M_n = \min \left( n, \left\{k\;:\;X_{k,n} \leq k \right\} \right), 
\end{equation}
while $(X_{k,n})_{k \leq n }$ is the non-decreasing rearrangement of $X_1, \ldots, X_n$. The random threshold $M_n$ is concentrated around 
$m_n$ where $m_n=m(n)$  and $m(t)$ is defined for $t\geq 1$ as the solution of equation $U(t/x)=x.$ In \citep{boucheron2015adaptive}, it is proved that 
the function $m$ inherits the regular variation properties of $U$. Namely, if $U(t)=\gamma^\gamma t^\gamma \widetilde{\ell}(t^{1+\gamma})$ where $\widetilde{\ell}= (1/\ell^{1+\gamma})^\#$ and $\ell \in \textsc{rv}_0$, $m$ is regularly varying with index $\gamma/(\gamma+1)= \alpha$,
\begin{displaymath}
 m(t) \underset{t \to \infty}{\sim} \gamma t^{\alpha} {\ell}(t) \underset{t \to \infty}{\sim} \frac{\alpha}{1-\alpha } t^{\alpha} {\ell}(t)\underset{t \to \infty}{\sim} \frac{\alpha}{1-\alpha } \vec\nu(1/t)\, .
\end{displaymath}
When $\ell$  is slowly varying, the connexion between $\vec \nu(1/\cdot)$  and $m$ is more subtle. The function $m$ is the reciprocal of a De Bruijn conjugate of $U$. Hence by Lemma \ref{lem:nu2U},  in order to have $m(t) \underset{t \to \infty}{\sim}   \ell(t)$, we need to have to have 
\[
 	\frac{\ell\left(\frac{t}{\ell(t)} \widetilde{\ell}_0\left(\frac{t}{\ell(t)}\right)\right)}{\ell(t)} \underset{t \to \infty}{\sim} 1 \, . 
 \] 
We are not aware of any meaningful characterization of this property.

\section{Negative association} 
\label{sec:negative_association}

When handling finite or infinite urn models, negative association arguments are a source of elegant moment or tail inequalities for occupancy counts. They complement Poissonization arguments.  

\begin{dfn}[\textsc{negative association}]
Real-valued random variables  $Z_1,\dots, Z_K$ are said to be negatively associated if, for any two disjoint subsets $A$ and $B$ of $\{1,\dots,K\}$, and any two real-valued functions $f:\mathbb{R}^{|A|}\mapsto \mathbb{R}$ and $g:\mathbb{R}^{|B|}\mapsto \mathbb{R}$ that are both either coordinate-wise non-increasing or coordinate-wise non-decreasing, we have:
$$\EXP\left[f(Z_A).g(Z_B)\right] \leq \EXP\left[f(Z_A)\right].\EXP\left[g(Z_B)\right] \, .$$
\end{dfn}


\begin{thm}\citep{dubhashi:ranjan:1998}
For each $n\in \mathbb{N}$,  the occupancy scores $(N^{j}_n)_{j\geq 1}$ are negatively associated.
\end{thm}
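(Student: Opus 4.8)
The plan is to deduce negative association of the occupancy counts from negative association of a \emph{single} ball's placement, combined with the two standard closure properties of negatively associated families: (P1) the concatenation of mutually independent negatively associated random vectors is negatively associated; and (P2) if $(Z_1,\dots,Z_m)$ is negatively associated and $A_1,\dots,A_k$ are pairwise disjoint index sets, then $(h_1(Z_{A_1}),\dots,h_k(Z_{A_k}))$ is negatively associated whenever $h_1,\dots,h_k$ are all coordinate-wise non-decreasing (or all non-increasing). Both (P1) and (P2) follow from the definition by short computations; I would either cite \citet{dubhashi:ranjan:1998} for them or include the two-line verifications.

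The base case I would establish first: for a single ball falling in urn $j$ with probability $p_j$, the indicator vector $(\IND_{\{\text{ball in }j\}})_{j\geq 1}$ is negatively associated. This is the elementary ``zero--one vector with unit sum'' statement. To prove it, fix disjoint index sets $A,B$ and coordinate-wise non-decreasing $f,g$ (the non-increasing case reduces to this by replacing $f,g$ with $-f,-g$); writing $I_j$ for the indicator of ``ball in $j$'', use $\sum_j I_j=1$ to expand $f(I_A)=f(\mathbf 0)+\sum_{j\in A}\bigl(f(e_j)-f(\mathbf 0)\bigr)I_j$ and similarly $g(I_B)=g(\mathbf 0)+\sum_{\ell\in B}\bigl(g(e_\ell)-g(\mathbf 0)\bigr)I_\ell$, then exploit $I_jI_\ell=0$ for $j\in A,\ell\in B$ to see that $\EXP[f(I_A)g(I_B)]-\EXP[f(I_A)]\EXP[g(I_B)]=-\bigl(\sum_{j\in A}(f(e_j)-f(\mathbf 0))p_j\bigr)\bigl(\sum_{\ell\in B}(g(e_\ell)-g(\mathbf 0))p_\ell\bigr)\leq 0$, each factor being nonnegative by monotonicity. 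With the base case in hand, the $n$-ball statement is immediate: let $I^{(m)}=(I^{(m)}_j)_{j\geq 1}$ be the placement vector of the $m$-th ball; each $I^{(m)}$ is negatively associated, the $I^{(m)}$ for $m=1,\dots,n$ are independent, so by (P1) the whole array $(I^{(m)}_j)_{1\le m\le n,\,j\ge 1}$ is negatively associated; since $N_n^j=\sum_{m=1}^n I^{(m)}_j$ and distinct $j$ index disjoint blocks of the array, each entering through a linear (hence non-decreasing) function, (P2) yields negative association of $(N_n^j)_{j\geq 1}$.

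For the countably infinite urn scheme, negative association is by definition the requirement that every finite subfamily be negatively associated, so it suffices to carry out the previous paragraph with the urns restricted to an arbitrary finite set $\{j_1,\dots,j_r\}$, merging all remaining urns into one auxiliary cell; this keeps the single-ball vector a zero--one vector with unit sum on $r+1$ coordinates, still covered by the base case. I do not anticipate a real obstacle: the only steps requiring genuine care are the verification of the base case and of (P1)--(P2) directly from the definition, and in the sampling-with-replacement model these are precisely the classical facts recorded in \citet{dubhashi:ranjan:1998}; the remainder is bookkeeping.
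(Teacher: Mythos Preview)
Your proposal is correct and is essentially the standard argument from \citet{dubhashi:ranjan:1998}. Note, however, that the paper does not actually supply its own proof of this statement: the theorem is quoted as a known result with a citation, so there is nothing in the paper to compare your argument against beyond the reference itself.
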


Monotonic functions of negatively associated variables are  negatively associated. Hence, 
 the variables $(\IND_{\{N^j_{n}>0\}})_{j\geq 1}$ (respectively $(\IND_{\{N^j_{n}=0\}})_{j\geq 1}$) are negatively associated as increasing (respectively decreasing) functions of $(N^j_{n})_{j \geq 1}$. 

\section{Concentration and moment bounds}
\label{sec:conc-moment-bounds}

Poisson distributions satisfy Bennett and thus Bernstein inequalities \citep[See][Chapter 2]{boluma13}.

\begin{lem}\label{lem:poisson-concentration}
For any Poisson distributed random variable $N$, for all $t>0$,
\begin{eqnarray*}
\PROB\left(N\geq \EXP N +t\right)&\leq & \exp\left(-\frac{t^2}{2(\EXP N +t/3)}\right)\, ,\\
\PROB\left(N\leq \EXP N -t\right)&\leq & \exp\left(-\frac{t^2}{2\EXP N}\right)\, .
\end{eqnarray*}
\end{lem}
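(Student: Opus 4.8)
The plan is to derive both inequalities by the Chernoff bounding method applied to the explicit moment generating function of a Poisson law, and then to relax the resulting Bennett-type bounds to the stated Bernstein form by two elementary scalar inequalities.

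First I would record that if $N$ has a Poisson distribution with mean $\mu=\EXP N$, then $\EXP[\mathe^{\lambda N}]=\exp\big(\mu(\mathe^\lambda-1)\big)$ for every $\lambda\in\R$. For the upper tail, Markov's inequality applied to $\mathe^{\lambda N}$ with $\lambda>0$ gives, for any $t>0$,
\[
\PROB\left(N\geq\mu+t\right)\leq\exp\big(\mu(\mathe^\lambda-1)-\lambda(\mu+t)\big),
\]
and minimizing the exponent over $\lambda>0$ (optimal value $\lambda=\ln(1+t/\mu)$) yields the Bennett bound $\PROB(N\geq\mu+t)\leq\exp\big(-\mu\,h(t/\mu)\big)$, where $h(u)=(1+u)\ln(1+u)-u$ for $u\geq0$. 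Symmetrically, for the lower tail I would apply Markov's inequality to $\mathe^{-\lambda N}$ with $\lambda>0$; when $t\geq\mu$ the left-hand side is zero (since $N\geq0$) and there is nothing to prove, so one may assume $0<t<\mu$, optimize (optimal $\lambda=-\ln(1-t/\mu)$), and obtain $\PROB(N\leq\mu-t)\leq\exp\big(-\mu\,h(-t/\mu)\big)$, with $h(-u)=(1-u)\ln(1-u)+u$ for $0\leq u<1$.

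It then remains to invoke the two standard convexity estimates $h(u)\geq\frac{u^2}{2(1+u/3)}$ for all $u\geq0$ and $h(-u)\geq\frac{u^2}{2}$ for $0\leq u<1$. Substituting $u=t/\mu$ into the first gives $\mu\,h(t/\mu)\geq\frac{t^2}{2(\mu+t/3)}$, which is the first displayed inequality; substituting $u=t/\mu$ into the second gives $\mu\,h(-t/\mu)\geq\frac{t^2}{2\mu}$, which is the second. Each scalar inequality is checked by noting that the difference of the two sides vanishes at $u=0$ together with its first derivative and has a second derivative of the appropriate sign; alternatively one may simply quote \citep[Chapter~2]{boluma13}, where the passage from the Poisson moment generating function to Bennett's and Bernstein's inequalities is carried out in full. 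There is essentially no obstacle: the only mild subtlety is remembering the case distinction $t\geq\mu$ in the lower-tail bound, so that the optimization over $\lambda$ stays in the legitimate range; the rest is the textbook Chernoff argument for a light-tailed distribution.
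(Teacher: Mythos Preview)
Your argument is correct and is exactly the standard Chernoff--Bennett--Bernstein derivation. The paper does not give its own proof of this lemma at all; it simply states that Poisson distributions satisfy Bennett and hence Bernstein inequalities and refers the reader to \citep[Chapter~2]{boluma13}, which is precisely the reference you invoke and whose argument you have reproduced.
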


Bounds on expected inverse binomial random variables can be found in the literature \citep[See][and  references therein]{GyKhKrWa06,Arl09}. The next results were developed for the purpose of this paper. 

\begin{lem}\label{lem:exp-inverse-binom}
Let $N$ be a binomial random variable with parameters $n$ and $p$. 
\begin{eqnarray*}
\EXP\left[\frac{1}{N-\frac{1}{2}}\,\big|\, N>0\right]&\leq & \frac{1}{\EXP N}\left(1 + \frac{9}{\EXP N} \right) \, . 
\end{eqnarray*}

\end{lem}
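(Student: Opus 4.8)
The plan is to reduce everything to two elementary binomial-moment estimates by means of a single pointwise inequality. Write $q=1-p$, so that $\EXP N = np$ and $\PROB(N=0)=q^{n}$. First I would verify the pointwise bound: for every integer $N\ge 1$,
\[
\frac{1}{N-\tfrac12}\ \le\ \frac{1}{N+1}+\frac{9}{(N+1)(N+2)}\, ,
\]
which, after clearing denominators, reduces to $N\ge 1$; equality holds at $N=1$, the value that dominates the conditional expectation given $\{N>0\}$, so the constant $9$ is exactly what is needed here. Taking $\EXP[\,\cdot\mid N>0\,]$ on both sides, it then suffices to prove
\[
\EXP\!\left[\tfrac{1}{N+1}\,\Big|\,N>0\right]\le \tfrac{1}{\EXP N}
\qquad\text{and}\qquad
\EXP\!\left[\tfrac{1}{(N+1)(N+2)}\,\Big|\,N>0\right]\le \tfrac{1}{(\EXP N)^{2}}\, ,
\]
since adding the first of these to $9$ times the second yields exactly $\tfrac{1}{\EXP N}\bigl(1+\tfrac{9}{\EXP N}\bigr)$.

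For the first estimate I would use the classical identity $\EXP[1/(N+1)]=\tfrac{1-q^{n+1}}{(n+1)p}$ together with $\EXP[\tfrac{1}{N+1}\IND_{N=0}]=q^{n}$; dividing $\EXP[\tfrac{1}{N+1}\IND_{N\ge 1}]$ by $\PROB(N>0)=1-q^{n}$, clearing denominators and simplifying, the inequality becomes $q^{n}(1-n^{2}p)\le 1$, which is immediate since $q^{n}\le 1$. For the second estimate, the identity $\frac{1}{(k+1)(k+2)}\binom{n}{k}=\frac{1}{(n+1)(n+2)}\binom{n+2}{k+2}$ gives $\EXP[\tfrac{1}{(N+1)(N+2)}]=\PROB(\mathrm{Bin}(n+2,p)\ge 2)/\bigl((n+1)(n+2)p^{2}\bigr)$, while $\EXP[\tfrac{1}{(N+1)(N+2)}\IND_{N=0}]=q^{n}/2$. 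Writing $\mathrm{Bin}(n+2,p)=\mathrm{Bin}(n,p)+B_{1}+B_{2}$ with independent $\mathrm{Bernoulli}(p)$ summands produces the key algebraic identity
\[
\PROB(\mathrm{Bin}(n+2,p)\ge 2)-\PROB(\mathrm{Bin}(n,p)\ge 1)=pq^{n}(p-nq)\, .
\]
Substituting this and clearing denominators, the second estimate reduces to
\[
(1-q^{n})\Bigl(\tfrac{1}{(n+1)(n+2)}-\tfrac{1}{n^{2}}\Bigr)+q^{n}\,p\,\Bigl(\tfrac{p-nq}{(n+1)(n+2)}-\tfrac{p}{2}\Bigr)\le 0\, ,
\]
and both parenthesised quantities are nonpositive: the first because $(n+1)(n+2)>n^{2}$, the second because $p-nq\le p$ and $(n+1)(n+2)\ge 6$.

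The subtle point, and the reason the statement is phrased with a conditioning on $\{N>0\}$, is the factor $1/(1-q^{n})$: crudely bounding $\EXP[\tfrac{1}{(N+1)(N+2)}\IND_{N\ge 1}]$ and then dividing by $1-q^{n}$ blows up when $np$ is small, so a naive argument would require a separate treatment of that regime (where one could instead simply invoke $\tfrac{1}{N-1/2}\le 2$). The identity $\PROB(\mathrm{Bin}(n+2,p)\ge 2)-\PROB(\mathrm{Bin}(n,p)\ge 1)=pq^{n}(p-nq)$ is precisely what makes the conditioning harmless and lets the whole argument run uniformly in $n$ and $p$; locating it --- equivalently, pinning down the pointwise inequality with constant exactly $9$ --- is the main thing to get right, the remaining steps being routine binomial computations.
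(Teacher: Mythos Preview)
Your proof is correct and follows essentially the same architecture as the paper: both start from the pointwise inequality
\[
\frac{1}{k-\tfrac12}\le \frac{1}{k+1}+\frac{9}{(k+1)(k+2)}\qquad (k\ge 1)
\]
and then use the identities $\tfrac{1}{k+1}\binom{n}{k}=\tfrac{1}{n+1}\binom{n+1}{k+1}$ and $\tfrac{1}{(k+1)(k+2)}\binom{n}{k}=\tfrac{1}{(n+1)(n+2)}\binom{n+2}{k+2}$ to rewrite the resulting sums as binomial tail probabilities.

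The only real difference is in the last step. You sum from $k=0$, subtract the $k=0$ contribution, and then establish the two bounds by explicit algebra, invoking the identity $\PROB(\mathrm{Bin}(n+2,p)\ge 2)-\PROB(\mathrm{Bin}(n,p)\ge 1)=pq^{n}(p-nq)$ and a sign check. The paper instead sums directly from $k=1$, obtaining $\PROB(\mathrm{Bin}(n+1,p)\ge 2)$ and $\PROB(\mathrm{Bin}(n+2,p)\ge 3)$ in the numerators, and then simply observes that both are $\le \PROB(\mathrm{Bin}(n,p)\ge 1)=1-q^{n}$: if the first $n$ trials yield no success, one (respectively two) extra Bernoulli trials cannot bring the total to $2$ (respectively $3$). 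This cancels the factor $1/(1-q^{n})$ immediately and leaves $\tfrac{1}{(n+1)p}\le\tfrac{1}{np}$ and $\tfrac{1}{(n+1)(n+2)p^{2}}\le\tfrac{1}{(np)^{2}}$. So the point you flag as ``subtle'' --- making the conditioning harmless uniformly in $n,p$ --- is handled in the paper by a one-line coupling argument rather than by your algebraic identity; both work, but the paper's route is shorter.
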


\begin{proof}[Proof of Lemma \ref{lem:exp-inverse-binom}]
Using the fact that, for $k\geq 1$, 
\begin{displaymath}
\frac{1}{k-\frac{1}{2}}=\frac{1}{k+1}\left(1+\frac{3}{2k-1}\right)\leq \frac{1}{k+1}+\frac{9}{(k+1)(k+2)}\, ,
\end{displaymath}
we have
\begin{eqnarray*}
\EXP\left[\frac{1}{N-\frac{1}{2}}\,\big|\, N>0\right]&\leq &
\frac{1}{1-(1-p)^n}\sum_{k=1}^{n} {n\choose k}p^k(1-p)^{n-k}\left(\frac{1}{k+1}+\frac{9}{(k+1)(k+2)}\right)\\
&= & \frac{1}{1-(1-p)^n}\left(\frac{\PROB\left(\mathcal{B}(n+1,p)\geq 2\right)}{(n+1)p}+\frac{9\PROB\left(\mathcal{B}(n+2,p)\geq 3\right)}{p^2(n+1)(n+2)}\right)\\
&\leq & \frac{1}{np}+\frac{9}{(np)^2}\, .
\end{eqnarray*}
\end{proof}


\end{document}